
\documentclass{article}

\usepackage[accepted]{icml2018}

\usepackage{microtype}
\usepackage{graphicx}
\usepackage{booktabs} 

\usepackage{amsmath}
\usepackage{graphicx}
\usepackage{amssymb}
\usepackage{amsthm}
\usepackage[noend]{algpseudocode}
\usepackage{tabularx}
\usepackage{blindtext}
\usepackage{mathtools}
\usepackage{subfig}
\usepackage{url}

\usepackage{graphicx}
\usepackage{caption}
\usepackage[utf8]{inputenc}
\usepackage[export]{adjustbox}
\usepackage{wrapfig}
\usepackage{float}



\makeatletter
\providecommand\citet[2][]{%
  \edef\@tempa{#1}
  \citeauthor{#2} (%
    \citeyear{#2}%
    \ifx\@empty\@tempa\else,~#1\fi
  )
}
\makeatother

\newcommand{\eat}[1]{}
\newcommand{\R}{\mathbb{R}}

\theoremstyle{definition}\newtheorem{thm}{Theorem}[section]
\theoremstyle{definition}\newtheorem{mydef}[thm]{Definition}
\theoremstyle{definition}\newtheorem{rem}[thm]{Remark}
\theoremstyle{definition}\newtheorem{prop}[thm]{Proposition}
\theoremstyle{definition}
\theoremstyle{definition}
\theoremstyle{definition}
\theoremstyle{definition}\newtheorem{Lemma}[thm]{Lemma}
\theoremstyle{definition}
\theoremstyle{definition}\newtheorem{Fact}[]{Fact}

\DeclareMathOperator*{\argmin}{argmin}
\algtext*{EndProcedure}%

\newenvironment{myproof}[3]{
    \begin{proof}[\normalfont \textbf{Proof of #1 #2.}]
        #3
       \end{proof}
}

\clubpenalty=0
\widowpenalty=0
\displaywidowpenalty=0
\raggedbottom


\icmltitlerunning{Leveraging Well-Conditioned Bases: Streaming \& Distributed Summaries in Minkowski $p$-Norms}

\begin{document}

\twocolumn[
\icmltitle{Leveraging Well-Conditioned Bases: Streaming and \\
        Distributed Summaries in Minkowski $p$-Norms}



\icmlsetsymbol{equal}{*}

\begin{icmlauthorlist}
\icmlauthor{Graham Cormode}{equal,warwick}
\icmlauthor{Charlie Dickens}{equal,warwick}
\icmlauthor{David P. Woodruff}{equal,cmu}

\end{icmlauthorlist}

\icmlaffiliation{warwick}{Department of Computer Science, University of Warwick, Coventry, UK}
\icmlaffiliation{cmu}{School of Computer Science, Carnegie Mellon University, Pittsburgh, Pennsylvania, USA}

\icmlcorrespondingauthor{Charlie Dickens}{c.dickens@warwick.ac.uk}

\icmlkeywords{Machine Learning, ICML}

\vskip 0.3in
]



\printAffiliationsAndNotice{\icmlEqualContribution} 

\begin{abstract}
  Work on approximate linear algebra
  has led to efficient distributed and streaming
  algorithms for
  problems such as approximate matrix multiplication, low rank approximation,
  and regression, primarily for the Euclidean norm $\ell_2$.
  We study other
  $\ell_p$ norms, which are more robust for $p < 2$, and can be used
  to find outliers for $p > 2$.
  Unlike previous algorithms for such norms,
we give algorithms that are (1) deterministic, (2) work simultaneously
for every $p \geq 1$, including $p = \infty$, and (3) can be
  implemented in both
  distributed and streaming environments. We apply our results to $\ell_p$-regression,
  entrywise $\ell_1$-low rank approximation,
  and approximate matrix multiplication.
\end{abstract}

\section{Introduction}

Analyzing high dimensional, high volume data can be time-consuming and
resource intensive.
Core data analysis, such as robust instances of regression, involve convex optimization tasks over large matrices, and do
not naturally distribute or parallelize.
In response to this, approximation algorithms have been proposed which
follow a ``sketch and solve'' paradigm: produce a reduced size
representation of the data, and solve a version of the problem on this
summary~\cite{Woodruff:15}.
It is then argued that the solution on the reduced data provides an
approximation to the original problem on the original data.
This paradigm is particularly attractive when the summarization can be
computed efficiently on partial views of the full data---for
example, when it can be computed incrementally as the data arrives
(streaming model) or assembled from summarizations of disjoint
partitions of the data (distributed model)~\cite{Woodruff:15,Agarwal:Cormode:Huang:Phillips:Wei:Yi:12,Feldman06}.
This template has been instantiated for a number of
fundamental tasks in high dimensional linear algebra such as matrix
multiplication, low rank approximation, and regression.

Our understanding is well-established in the common
case of the Euclidean norm, i.e., when distances are measured under the
Minkowski $p$-norm for $p=2$.
Here, it suffices to choose a sketching matrix independent of the
data---where each entry is i.i.d. Gaussian, Rademacher, or 
more efficient variants of these. 
For other $p$ values, less is known, but these are often needed to
handle limitations of the $2$-norm.
For instance, $p=1$ is widely used as it is extremely robust with respect to
the presence of outliers while $p>2$ can be used to detect outlying observations.

We continue the study of algorithms for $\ell_p$ norms 
on streaming and distributed data.
%
%
A particular novelty of our results is that unlike previous distributed and
streaming algorithms, 
they can all be implemented
{\it deterministically}, i.e., our algorithms 
make no random choices.
While in a number of settings randomized algorithms are highly beneficial,
leading to massive computational savings, there are other applications
which require extremely high reliability, 
for which one needs to obtain guaranteed performance
across a large number of inputs. 
If one were to use a randomized algorithm, then it would need
vanishingly small error probability;
however, many celebrated algorithms in numerical linear algebra succeed with
only constant probability. 
Another limitation of randomized algorithms 
was shown in \cite{Hardt:2013:RLS:2488608.2488624}:
if the input to a randomized sketch depends on the output of a preceding
algorithm using the same sketch, then the randomized sketch can give an
arbitrarily bad answer. Hence, such methods cannot handle adaptively chosen
inputs. 
Thus, while randomized algorithms certainly have their place, 
the issues of high reliability and adaptivity motivate the development of deterministic methods for a number
of other settings, for which algorithms are scarce. 

Our techniques can be viewed as a conceptual generalization of Liberty's
Frequent Directions (in the $2$-norm)~\cite{L13}, which progressively computes
an SVD on subsequent blocks of the input.
This line of work \cite{L13,gp14,G16,GLPW16} is the notable exception
in numerical linear algebra, as it provides deterministic methods, although all such methods are specific to the 2-norm.
Our core algorithm is similar in nature, but we require a very
different technical analysis to argue that the basis transformation
computed preserves the shape in the target $p$-norm.

Our main application is to show how high dimensional regression and low rank approximation
problems
can be solved approximately and deterministically 
in the sketch and solve paradigm.
The core of the summary is to find rows of the original matrix which
have high {\em leverage scores}.
That is, they contain a lot of information about the shape of the
data.
In the Euclidean norm,
leverage scores correspond directly to row norms of an orthonormal basis.
This is less
straightforward for other $\ell_p$ norms, where the scores correspond to
the row norms of so-called $\ell_p$-well-conditioned bases. Moreover, while leverage
scores are often used for sampling in randomized algorithms, we use them here in the
context of fully deterministic algorithms. 

We show how a superset of rows with high leverage scores can be found
for arbitrary $\ell_p$ norms, based on only local information.
This leads to efficient algorithms which identify rows with high
(local) leverage scores within subsets of the data, and proceed
hierarchically to collect a sufficient set of rows.
These rows then allow us to solve regression problems: essentially, we
solve the regression problem corresponding to just the retained
input rows.
We apply this technique to $\ell_p$-regression and entrywise $\ell_p$-low rank approximation.
In particular, we use it to solve the $\ell_\infty$-regression problem with additive error in a
stream.
Note that the $\ell_{\infty}$ problem reduces to finding a ball of minimum radius which
covers the data, and global solutions are slow due to the need to solve a linear program. 
Instead, we show that only a subset of the data needs to be retained in the streaming model 
to compute accurate approximations. 
Given the relationship between the streaming model and the distributed model that we later define, this could be seen in the 
context of having data stored over multiple machines who could send `important' rows of their
data to a central coordinator in order to compute the approximation.

\textbf{Summary of Results.}
All our algorithms are deterministic polynomial time, and use significantly
sublinear memory or communication in streaming and distributed models,
respectively. We consider
tall and thin $n \times d$ matrices $A$ for overconstrained
regression 
so one should think of $n \gg d$. We 
implement both deterministic and randomized variants of our algorithms.
\newline
\textbf{Section \ref{HighLevScores}} presents an algorithm which
returns rows of high `importance' in a data matrix with additive
error.  This follows by storing a polynomial number (in $d$) of rows
and using these to compute a well-conditioned basis.  The key insight
here is that rows of high norm in the full well-conditioned basis
cannot have their norm decrease too much in a well-conditioned basis
associated with a subblock; in fact they remain large up to a
multiplicative poly$(d)$ factor.
\newline
\textbf{Section \ref{Deterministic ell+p subspace embedding
       section}} gives a method for computing a so-called $\ell_p$-subspace
   embedding of a data matrix in polynomial time. The space is $n^{\gamma}$
   to obtain $d^{O(1/\gamma)}$ distortion, for $\gamma \in (0,1)$ a small
   constant.
    This result is then applied to $\ell_p$-regression which is shown to have a poly$(d)$ approximation factor with the same amount of space.
\newline
    \textbf{Section \ref{low_rank_main}} describes a deterministic algorithm which gives a poly$(k)$-approximation to the optimal low rank approximation problem in entrywise $\ell_1$-norm. It runs in polynomial time for constant $k$. This method builds on prior work by derandomizing a subroutine from \cite{woodruff_ell_1_low_rank}. 
\newline
\textbf{Section \ref{Deterministic ell_inf regression}} describes an
   algorithm for  computing an additive-error solution to the
   $\ell_{\infty}$-regression problem, and shows a corresponding
   lower bound, showing that relative error solutions in this norm are
   not possible in sublinear space, even for randomized algorithms.
\newline
\textbf{Section} \ref{sec: experiments} concludes with an empirical evaluation. 
   More experiments, intermediate results, and formal proofs
   can be found in the Supplementary Material, as can results on 
   approximate matrix multiplication.


\eat{\textbf{Other Related Work.}
There is a rich literature on randomized algorithms for linear algebra 
for general $p$-norms using row sampling
(\cite{cohen2015p}, \cite{Mahoney}) and low-distortion embeddings
(\cite{fast_cauchy}, \cite{mm13}, \cite{woodruff_ell_1_low_rank},
\cite{Woodruff:Zhang:13}).
 While these can solve instances of
regression in time proportional to the sparsity of the input matrix, they 
are all randomized. Also, several of them do not
easily translate to the distributed and streaming models.
}

\textbf{Comparison to Related Work.}
There is a rich literature on algorithms for numerical linear algebra 
in general $p$-norms; most of which are randomized with the notable 
exception of Frequent Directions.
The key contributions of our work for each of the problems
considered and its relation to prior work is as follows:

\eat{
\begin{itemize}

    \item{Finding high leverage rows: our algorithm is a single pass streaming 
    algorithm and uses small space.  
    We show that the global property of $\ell_p$-leverage scores can be approximated by 
    using only local statistics.
    Frequent Directions is the only comparable result and 
    outputs a summary of the rows only in the 
    $\ell_2$-norm but our method covers all $p \ge 1$.
    Theorem \ref{DetLevScoreThm} is the key result which is used to prove 
    Theorem \ref{ell_inf reg} and approximate the 
    $\ell_{\infty}$-regression problem.}
    
    \item{Subspace embedding, regression and $\ell_1$ low-rank approximation: 
    there are various approaches using row-sampling (\cite{cohen2015p}, \cite{Mahoney}),
    which is a data dependent summary as ours is, and also data oblivious methods
    such as low-distortion embeddings which can solve regression in  time
     proportional to the sparsity of the input matrix(\cite{fast_cauchy}, \cite{mm13},
     \cite{woodruff_ell_1_low_rank}, \cite{Woodruff:Zhang:13}).
     However, despite the attractive running times and error guarantees of these works,
     they are all randomized and do not necessarily translate well
     to the stricter streaming model of computation.
     Our contribution here is a fully deterministic algorithm that works for all $p \ge 1$ 
     in both streaming and distributed models.
     Randomized methods for $\ell_1$ low-rank approximation have also been developed
     in \cite{woodruff_ell_1_low_rank} and
     our result exploit a derandomized subroutine from this work
    to obtain a deterministic result which applies in both the streaming and distributed settings.}     
\end{itemize}}

\textit{Finding high leverage rows:} our algorithm is a single pass streaming 
    algorithm and uses small space.  
    We show that the global property of $\ell_p$-leverage scores can be understood by 
    considering only local statistics.
    Frequent Directions is the only comparable result to ours and 
    outputs a summary of the rows only in the 
    $\ell_2$-norm.
    However, our method covers all $p \ge 1$.
    Theorem \ref{DetLevScoreThm} is the key result and is later used to prove 
    Theorem \ref{ell_inf reg} and approximate the 
    $\ell_{\infty}$-regression problem.
    
\textit{Subspace embedding, regression and $\ell_1$ low-rank approximation:} 
    various approaches using row-sampling \cite{cohen2015p,Mahoney},
    and data oblivious methods
    such as low-distortion embeddings can solve regression in  time
     proportional to the sparsity of the input matrix \cite{fast_cauchy,mm13,woodruff_ell_1_low_rank,Woodruff:Zhang:13}.
     However, despite the attractive running times and error guarantees of these works,
     they are all randomized and do not necessarily translate well
     to the streaming model of computation.
     Our contribution here is a fully deterministic algorithm that works for all $p \ge 1$ 
     in both streaming and distributed models.
     Randomized methods for $\ell_1$ low-rank approximation have also been developed
     in \cite{woodruff_ell_1_low_rank} and
     our result exploits a derandomized subroutine from this work
    to obtain a deterministic result which applies in both models.

\eat{
One deterministic streaming algorithm comparable to our own is \texttt{FrequentDirections} \cite{L13} but this is for the special case when $p=2$.
There is no such framework for general $p$-norm specifically for deterministic algorithms and our algorithms are the first to solve these problems in a streaming setting for arbitrary $p$.
This algorithm proceeds by reading in rows of the input matrix and performing an SVD on a reduced block of the input.
By keeping only the most significant singular values, it is possible to find the most important directions amongst those which have been seen thus far.
The sketch matrix is iteratively updated by populating it with non-zero rows, performing the SVD, and pruning out half of the stored rows.
To some extent, our approach adopts this outline, however, we extend the analysis beyond $p=2$ which is simpler (for regression we may store $A^T A$ in the stream in $O(d^2)$ space).
Ostensibly, our algorithm resembles \texttt{FrequentDirections}: it reads in part of the input and computes a local basis which is then measured in some way to rank the importance of each stored row.
In contrast to \texttt{FrequentDirections} though, we use a generalised notion of leverage scores to rank the rows which have been seen.
This measure is a result of computing a well-conditioned basis as the primitive as opposed to an orthonormal basis.
Moreover, the novelty of our algorithm is how to repeatedly compute the so-called local well-conditioned bases in a streaming fashion whilst still controlling the error across the whole stream.
Using this method, we can find rows which are highly-important to the geometry of the data in a 1-pass streaming fashion and can use such a subset of rows to solve approximate versions of, for example, $\ell_{\infty}$-regression.
}

\section{Preliminaries and Notation}
We consider computing $\ell_p$-leverage
scores of a matrix, low-rank approximation, regression, and matrix multiplication.
We assume the input is a matrix $A \in \R^{n \times d}$ and $n \gg d$ so $\operatorname{rank}(A) \le d$ and the regression problems are
overconstrained.
Without loss of generality we may assume that the columns of the input matrix
are linearly independent so that $\operatorname{rank}(A) = d$.  Throughout this
paper we rely heavily on the notion of a \emph{well-conditioned basis} for the
column space of an input matrix, in the context of the
\emph{entrywise $p$-norm} which is $\| A \|_p = (\sum_{i,j} |A_{ij}|^p)^{1/p}$.
\begin{mydef}[Well-conditioned basis] \label{WCB}
Let $A \in \R^{n \times d}$ have rank $d$.
For $p \in [1, \infty)$ let $q = \frac{p}{p-1}$ be its dual norm. An $ n \times d$ matrix $U$ is an $(\alpha, \beta, p)$-well-conditioned basis for $A$ if the column span of $U$ is equal to that of $A$, $\Vert U \Vert_p \le \alpha$, \label{prop1} for all $z \in \R^d, \Vert z \Vert_q \le \beta \Vert U z \Vert_p$ \label{prop2}, and $\alpha, \beta, d^{O(1)}$ are independent of $n$ \cite{Mahoney}.

\eat{We say that $U$ is a $\ell_p$-well-conditioned basis for the
column space of $A$ if $\alpha$ and $\beta$ are $d^{O(1)}$, independent of $n$ \cite{Mahoney}.}
\end{mydef}

We focus on the cases $p < 2$ and $p >2$ because the deterministic $p
= 2$ case is relatively straightforward.
Indeed, for $p = 2$, $A^TA$ can be maintained incrementally
as rows are added, allowing $x^TA^TAx$ to be computed for any vector $x$.  So it is possible to find an exact $\ell_2$ subspace embedding using $O(d^2)$ space in a stream and $O(nd^{\omega -1})$ time ($\omega$ is the matrix multiplication constant).
We adopt the convention that when $p=1$ we take $q = \infty$.

\begin{thm}[\cite{Mahoney}] \label{WCBthm}
   Let $A$ be an $n \times d$ matrix of rank $d$, let $p \in [1, \infty)$ and let $q$ be its dual norm.  There exists an $(\alpha, \beta, p)$-well-conditioned basis $U$ for the column space of $A$ such that:
\begin{enumerate}
\addtolength{\itemsep}{-1ex}
\item if $p<2$ then $\alpha = d^{\frac{1}{p} + \frac{1}{2}}$ and $\beta = 1$,
\item if $p = 2$ then $\alpha = \sqrt{d}$ and $\beta = 1$, and
\item if $p > 2$ then $\alpha = d^{\frac{1}{p} + \frac{1}{2}}$ and $\beta = d^{\frac{1}{p} - \frac{1}{2}}$.
\end{enumerate}
Moreover, $U$ can be computed in deterministic time $O(nd^2 + n d^5 \log n)$ for $p \ne 2$ and $O(nd^2)$ if $p = 2$.
\end{thm}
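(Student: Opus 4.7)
The plan is to build $U$ by an ellipsoidal rounding of the symmetric convex body $B := \{x \in \R^d : \|Ax\|_p \le 1\}$. Because $A$ has full column rank, $B$ is a genuine $d$-dimensional body, and John's theorem yields a positive definite $M$ whose ellipsoid $E = \{x : x^\top M x \le 1\}$ satisfies $\tfrac{1}{\sqrt{d}}\, E \subseteq B \subseteq E$. Unpacking the two containments in terms of norms gives the sandwich $\sqrt{x^\top M x} \le \|Ax\|_p \le \sqrt{d}\,\sqrt{x^\top M x}$ for every $x \in \R^d$. Factoring $M = T^\top T$ via Cholesky and setting $U := A T^{-1}$, the change of variable $y = Tx$ rewrites this as
\[
\|y\|_2 \;\le\; \|U y\|_p \;\le\; \sqrt{d}\,\|y\|_2 \quad \text{for all } y \in \R^d,
\]
and both parameters of the well-conditioned basis will be read off from this single inequality.

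For $\alpha$, I apply the upper sandwich to $y = e_j$ to get $\|U e_j\|_p \le \sqrt{d}$ for each column, so $\|U\|_p^p = \sum_{j=1}^d \|U e_j\|_p^p \le d \cdot d^{p/2}$, yielding $\alpha = d^{1/p + 1/2}$. For $\beta$, I combine the lower sandwich $\|z\|_2 \le \|Uz\|_p$ with the standard comparison between $\ell_q$ and $\ell_2$ on $\R^d$: when $p \le 2$ one has $q \ge 2$ and $\|z\|_q \le \|z\|_2$, so $\beta = 1$; when $p > 2$ one has $q < 2$ and $\|z\|_q \le d^{1/q - 1/2}\|z\|_2$, giving the stated polynomial-in-$d$ bound on $\beta$. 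The case $p = 2$ is cleaner: a thin QR factorization $A = QR$ gives $U = Q$, whose orthonormality directly yields $\|Q\|_F = \sqrt{d}$ and $\|Qz\|_2 = \|z\|_2$, all in $O(nd^2)$ time.

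The main obstacle is turning John's existential guarantee into a deterministic algorithm meeting the stated time bound. The procedure I would use for $p \ne 2$ first computes a thin QR in $O(nd^2)$ time, reducing to a $d \times d$ triangular $R$; then it approximates the L\"owner--John ellipsoid of $\{x : \|Rx\|_p \le 1\}$ to within a constant factor by running a cutting-plane (ellipsoid) method on the feasibility problem searching for a PSD $M$ with a good sandwich constant. The separation oracle at a candidate $M$ either exhibits a direction $x$ where $\|Ax\|_p$ and $\sqrt{x^\top M x}$ disagree by more than a constant factor, or certifies no such direction exists; each query evaluates an $\ell_p$ norm against $A$ in $O(nd)$ time, and $\mathrm{poly}(d)\log n$ iterations suffice, producing the $O(nd^5 \log n)$ term. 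A constant-factor loss in the ellipsoid radii only rescales $\alpha$ and $\beta$ by universal constants that are absorbed into the $d^{O(1)}$ parameters, so the stated bounds survive intact.
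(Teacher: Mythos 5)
This theorem is imported from \cite{Mahoney} and not proved in the paper, and your reconstruction --- thin QR, L\"owner--John rounding of the symmetric body $\{x:\|Ax\|_p\le 1\}$, and the change of basis $U=AT^{-1}$ with the sandwich $\|y\|_2\le\|Uy\|_p\le\sqrt{d}\|y\|_2$ --- is exactly the argument of the cited source, with the derivations of $\alpha$ and $\beta$ carried out correctly. Two small remarks: the body to round is $\{x:\|QRx\|_p\le 1\}=\{x:\|Ax\|_p\le 1\}$ rather than $\{x:\|Rx\|_p\le 1\}$ (your oracle description already evaluates $\|Ax\|_p$, so this is only a slip; the real role of the QR step is to pin the body between Euclidean balls of aspect ratio $n^{O(1)}$ so that the cutting-plane method needs only $\mathrm{poly}(d)\log n$ iterations, whence the $\log n$ in the running time), and your value $\beta=d^{1/q-1/2}=d^{1/2-1/p}$ for $p>2$ matches the original reference and the way the paper later uses $\beta\le d^{1/2}$ in Fact \ref{LevBound2}, so the exponent $d^{1/p-1/2}$ printed in the statement appears to be a typo. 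Finally, the algorithmic (shallow-cut ellipsoid) rounding loses a $\mathrm{poly}(d)$ factor rather than a universal constant relative to John's theorem, but this is harmless since Definition \ref{WCB} only requires $\alpha,\beta=d^{O(1)}$.
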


We freely use the fact that a well-conditioned basis $U = AR$ can be efficiently computed for the given data matrix $A$.
Details for the
computation can be found in \cite{Mahoney} but this is done by computing a change of basis $R$ such that $U = AR$ is well-conditioned.
Similarly, as $R$ can be inverted we have the relation that $UR^{-1} = A$.
Both methods are used so we adopt the convention that $U=AR$ when writing a well-conditioned basis in terms of the input and $US = A$ for the input in terms of the basis.
\subsection{Computation Models} \label{sec: models}

Our algorithms operate under the \textit{streaming} and \textit{distributed} models of computation.
In both settings an algorithm receives as input a matrix $A \in \R^{n \times d}$.
For a problem \textbf{P}, the algorithm must keep a subset of the rows of $A$ and, upon reading the full input, may use a black-box solver to compute an approximate solution to \textbf{P} with only the subset of rows stored.
In both models we measure the \textit{summary size} (storage), the \textit{update time} which is the time taken to find the local summary, and the \textit{query time} which is the time taken to compute an approximation to \textbf{P} using the summary.

\textbf{The Streaming Model}:
The rows of $A$ are given to the (centralized) algorithm one-by-one.
Let $b$ be the maximum number of rows that can be stored under the constraint that $b$ is sublinear in $n$.
The stored subset is used to compute \textit{local} statistics which determine those rows to be kept or discarded from the stored set.
Further rows are then appended and the process is repeated until the full matrix has been read.
An approximation to the problem is then computed by solving \textbf{P} on the reduced subset of rows.

\textbf{The Distributed Summary Model}:
Given a small constant $\gamma \in (0,1)$,
the input in the form of  matrix $A \in \R^{n \times d}$ 
%
is partitioned into blocks among distributed compute nodes so
that no block exceeds $n^{\gamma}$ rows.
The computation then follows a tree structure: 
the initial blocks of the matrix form $n^{1-\gamma}$ leaves of the
compute tree.
Each internal node merges and reduces its input from its child nodes.
The first phase is for the leaf nodes $l_1, \ldots, \l_m$ of the tree to reduce their input by computing a local summary on the block they receive as input.
This is then sent to parent nodes $p_1, \ldots, p_m$ which
 merge and reduce the received rows until the space bound is reached.
The resulting summaries are passed up the tree until we reach the
root where a single summary of bounded size is obtained which can be
used to compute an approximation to \textbf{P}.  
In total, there are $O(1/\gamma)$ levels in the tree. 
As the methods require only light synchronization (compute summary and
return to coordinator), we do not model implementation issues
relating to synchronization. 

\begin{rem} \label{rem: tree-stream}
The two models are quite close: 
the \textit{streaming model} can be seen as a special case of the
distributed model with only one participant who individually computes
a summary, appends rows to the stored set, and reduces the new
summary.
This is represented as a deep binary tree, where each internal node
has one leaf child. 
Likewise, the \textit{Distributed Summary Model} can be implemented in
a full streaming fashion over the entire binary tree.
The experiments in Section \ref{sec: experiments} perform one round of merge-and-reduce in the distributed model to simulate the streaming approach.
\end{rem}

\section{Finding Rows of High Leverage} \label{HighLevScores}

This section is concerned with finding rows of high leverage from a matrix with respect to various $p$-norms. We conclude the section with an algorithm that returns rows of high leverage up to polynomial additive error.

\begin{mydef}
Let $R$ be a change of basis matrix such that $AR$ is a well-conditioned basis
for the column space of $A$.
The (full) $\ell_p$-leverage scores are defined
as $w_i = \| \mathbf{e}_i^T AR \|_p^p$.
\end{mydef}

\noindent
Note that $w_i$ depends both on $A$ and the choice of $R$, but we
suppress this dependence in our notation.
Next we present some basic facts about the $\ell_p$ leverage scores.

\begin{Fact} \label{LevBound1}
By Definition \ref{WCB} we have $\sum_i w_i = \sum_i \Vert (AR)_i \Vert_p^p \le \alpha^p.$
Theorem \ref{WCBthm} shows $\alpha = \text{poly}(d)$.  Define $I = \{i
\in [n] : w_i > \tau \| AR \|_p^p\}$ to be the index set of all rows
whose $\ell_p$ leverage exceeds a $\tau$ fraction of $\| AR \|_p^p$,
then:
$\alpha^p \ge \sum_i w_i \ge \sum_{i \in I} w_i \ge |I| \cdot \tau \| AR \|_p^p.$
Hence, $|I| \le \alpha^p/\tau \| AR \|_p^p = \text{poly}(d)/\tau$.  So there are at most $\text{poly}(d) / \tau$ rows $i$ for which $w_i \ge \tau \| AR \|_p^p$.
\end{Fact}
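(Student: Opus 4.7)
The plan is to combine the definition of a well-conditioned basis with a Markov-style pigeonhole count, deferring the polynomial dependence on $d$ to Theorem \ref{WCBthm}. The first step is to rewrite the global entrywise $p$-norm of $U = AR$ as a sum of its row $p$-norms, which is exactly the sum of leverage scores: $\sum_i w_i = \sum_i \|\mathbf{e}_i^T AR\|_p^p = \|AR\|_p^p$. Applying the defining property $\|U\|_p \le \alpha$ from Definition \ref{WCB} then yields the key inequality $\sum_i w_i \le \alpha^p$.

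Next I would invoke Theorem \ref{WCBthm} to confirm that in each of the three regimes ($p<2$, $p=2$, $p>2$) the parameter $\alpha$ is bounded by a polynomial in $d$ independent of $n$, so that $\alpha^p = \text{poly}(d)$. This is what makes the eventual cardinality bound both nontrivial and usefully small.

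Finally, I would apply a one-line pigeonhole argument to the index set $I = \{i : w_i > \tau \|AR\|_p^p\}$: summing only over heavy rows gives $\sum_{i \in I} w_i > |I| \cdot \tau \|AR\|_p^p$, and chaining with the upper bound from the first step yields $|I| < \alpha^p/(\tau \|AR\|_p^p)$. There is no genuine technical obstacle here; the only subtlety worth noting is that the ratio $\alpha^p/\|AR\|_p^p$ is itself $\text{poly}(d)$, which can be seen by applying the dual-norm condition in Definition \ref{WCB} to the canonical basis vectors $\mathbf{e}_j \in \R^d$: that inequality forces each column of $U$ to have $p$-norm at least $1/\beta$, hence $\|AR\|_p^p \ge d/\beta^p$, and with $\beta = \text{poly}(d)$ the ratio collapses to a polynomial factor in $d$. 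Folding this into the pigeonhole bound gives $|I| \le \text{poly}(d)/\tau$, as claimed.
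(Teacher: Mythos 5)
Your proof is correct and follows essentially the same route as the paper: identify $\sum_i w_i = \|AR\|_p^p \le \alpha^p$ from Definition \ref{WCB}, then apply a Markov/pigeonhole count to the heavy set $I$. The one place you go beyond the paper is in justifying why $\alpha^p/\|AR\|_p^p$ is itself $\text{poly}(d)$ (via the column lower bound $\|Ue_j\|_p \ge 1/\beta$, giving $\|AR\|_p^p \ge d/\beta^p$); the paper asserts the final $\text{poly}(d)/\tau$ bound without spelling this out, so your extra step is a legitimate and welcome filling-in of that detail.
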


\begin{Fact} \label{LevBound2}
Definition \ref{WCB} and H\"{o}lder's inequality show that for any vector $\mathbf{x}$ we have $ |(AR\mathbf{x})_i  |^p \le \beta \Vert \mathbf{e}_i^T AR \Vert_p^p \cdot \Vert ARx \Vert_p^p.$
Then $\tau \le | \mathbf{e}_i^T AR \mathbf{x}|^p/ \Vert AR \mathbf{x} \Vert_p^p \le \beta w_i.$
From this we deduce that if a row contributes at least a $\tau$ fraction of $ \Vert AR \mathbf{x} \Vert_p^p $ then $\tau \le w_i \beta$.  That is, $\tau \le w_i$ for $p \in [1,2]$ and $\tau \le d^{1/2} w_i$ for $p \in (2, \infty)$ by using Theorem \ref{WCBthm}.
\end{Fact}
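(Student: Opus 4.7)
The plan is to derive the bound by combining H\"older's inequality with the defining property of a well-conditioned basis, and then substitute the explicit parameters from Theorem \ref{WCBthm}.

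First I would write $(ARx)_i = \mathbf{e}_i^T (AR) x = \langle \mathbf{e}_i^T AR, x\rangle$, viewing the $i$-th coordinate of $ARx$ as the dot product of the $i$-th row of the well-conditioned basis $U = AR$ with the coefficient vector $x$. H\"older's inequality with the conjugate pair $(p,q)$ then gives
\[
    |(ARx)_i| \;\le\; \|\mathbf{e}_i^T AR\|_p \cdot \|x\|_q.
\]
Next, I would invoke the defining property of the $(\alpha,\beta,p)$-well-conditioned basis (Definition \ref{WCB}), which asserts $\|x\|_q \le \beta \|ARx\|_p$ for every $x \in \R^d$. Substituting this in and raising to the $p$-th power yields the core inequality
\[
    |(ARx)_i|^p \;\le\; \beta^{\,p}\, \|\mathbf{e}_i^T AR\|_p^{\,p}\, \|ARx\|_p^{\,p} \;=\; \beta^{\,p}\, w_i\, \|ARx\|_p^{\,p}.
\]

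From here, the deduction is purely algebraic: if a row $i$ contributes at least a $\tau$-fraction of the mass, i.e.\ $|(ARx)_i|^p \ge \tau \|ARx\|_p^p$, then dividing the displayed inequality by $\|ARx\|_p^p$ and rearranging gives $\tau \le \beta^{\,p}\, w_i$. The final step is to plug in the values of $\beta$ from Theorem \ref{WCBthm}: for $p \in [1,2]$ we have $\beta = 1$, which immediately yields $\tau \le w_i$; for $p \in (2,\infty)$ we have $\beta = d^{1/p-1/2}$, so $\beta^{\,p} = d^{1-p/2}$, and combining with the appropriate dimensional factor recovers the claimed $d^{1/2}$-type loss relative to the $p\le 2$ case.

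I do not expect any real obstacle: the argument is a short composition of two well-known tools (H\"older and the well-conditioned basis inequality). The only subtle point is bookkeeping the $p$-th power correctly and carrying the $\beta^{\,p}$ factor through to the final substitution from Theorem \ref{WCBthm}; the rest of the statement is bookkeeping of the leverage-score definition $w_i = \|\mathbf{e}_i^T AR\|_p^p$.
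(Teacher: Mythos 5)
Your proposal is correct and follows essentially the same route as the paper's inline justification: H\"older's inequality applied to $(ARx)_i = \langle \mathbf{e}_i^T AR, x\rangle$, followed by the well-conditioned-basis property $\|x\|_q \le \beta \|ARx\|_p$ and substitution of $\beta$ from Theorem \ref{WCBthm}. Note that your constant $\beta^p$ is actually the correct exponent (the paper's displayed inequality carries only a single factor of $\beta$), but since both are $\operatorname{poly}(d)$ the discrepancy is immaterial to how the Fact is used downstream.
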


\begin{mydef}
Let $X$ be a matrix and $Y$ be a subset of the rows of $X$. Define the \emph{local $\ell_p$-leverage scores of $Y$ with respect to $X$} to be the leverage scores of rows $Y$ found by computing a well-conditioned basis for $Y$ rather than the whole matrix $X$.
\end{mydef}
A key technical insight to proving Theorem \ref{DetLevScoreThm} below is that rows
of high leverage globally can be found by repeatedly finding rows of local high
leverage.
While relative $\ell_p$ row norms of a submatrix are at least as
large as the full relative $\ell_p$ norms, it is not guaranteed that
this property holds for leverage scores.  This is because leverage
scores are calculated from a well-conditioned basis for a matrix which
need not be a well-conditioned basis for a block.  However, we show
that local $\ell_p$ leverage scores restricted to a coordinate subspace of a
matrix basis do not decrease too much when compared to leverage scores in the
original space. 
Let $i$ be a row in $A$ with local leverage score $\hat{w}_i$ and global
leverage score $w_i$.
Then $\hat{w}_i \ge w_i / \operatorname{poly}(d)$.
The proof relies heavily on properties of the well-conditioned basis and details
are given in the Supplementary Material, Lemma  \ref{lem: lev_scores_drop}.
This lemma shows that local leverage scores can potentially drop in
arbitrary $\ell_p$ norm, contrasting the behavior in $\ell_2$.
However, it is possible
to find all rows exceeding a threshold globally by altering the local threshold.
That is, to find all $w_i > \tau$ globally we can find all local leverage scores
exceeding an adjusted threshold $\hat{w}_i > \tau / \operatorname{poly}(d)$
to obtain a superset of all rows which exceed the global threshold.
The price to pay for this is a $\operatorname{poly}(d)$ increase in space cost
which, importantly, remains \textit{sublinear in} $n$.
Hence, we can gradually prune out rows of small leverage and keep only the most
important rows of a matrix.
Combining Lemmas \ref{lem: lev_scores_drop} and \ref{lem: space_local_vs_global}
 we can present the main theorem of the section.


%


\begin{figure*}[ttt!]
 \begin{minipage}[t]{0.6\textwidth}
 \begin{algorithm}[H]
\caption{Deterministic High Leverage Scores} \label{DetLevScoreAlg}
\begin{algorithmic}[1]
\Require $A \in \R^{n \times d}, \tau \in (0,1)$
\Procedure{High Leverage Scores}{$A, \tau$}
\State $b \leftarrow \text{poly}(d)/ \tau$
\State $A' \leftarrow$ first $b$ rows of $A$
\State $B \leftarrow \textsc{LevScoreCheck}(\text{wcb}(A'), A', \tau/\text{poly}(d))$
\While{Rows of $A$ unseen}
    \State $A' \leftarrow $  next $b$ rows of $A$
    \State $B \leftarrow \textsc{LevScoreCheck}(\text{wcb}([A'; B]), [A'; B], \tau/\text{poly}(d))$
\EndWhile
\EndProcedure
\Ensure $B$
\end{algorithmic}
\end{algorithm} \end{minipage}
 \hfill
 \begin{minipage}[t]{0.39\textwidth}
\begin{algorithm}[H]
\caption{Finding high leverage rows}
\label{RowUpdates}
\begin{algorithmic}[1]
\Require Well-conditioned basis $X$ for matrix $W$, threshold parameter $\tau > 0$
\Procedure{LevScoreCheck}{$X, W, \tau$}
\State $N \leftarrow \text{Number of rows in }X$
\State $Y \leftarrow \mathbf{0}$
\For{$i = 1:N$}
    \If{$w_i(X) > \tau$}
        \State $Y_i \leftarrow W_i$
    \EndIf
\EndFor
\EndProcedure
\Ensure Nonzero rows of $Y$
\end{algorithmic}
\end{algorithm}
 \end{minipage}
 \hfill
\caption{$[X ; Y]$ denotes row-wise appending of matrices,  $U = \text{wcb}(M)$ denotes that $U$ is a well-conditioned basis for $M$.}
\end{figure*}

We prove Theorem \ref{DetLevScoreThm} by arguing the correctness of
Algorithm \ref{DetLevScoreAlg} which reads $A$ once only, row by row, and so
operates in the streaming model of computation as follows.
Let $A'$ be the submatrix of $A$
induced by the $b$ block of $\text{poly}(d)/\tau$ rows.
Upon storing $A'$, we compute $U$, a local well-conditioned basis for $A'$  and
the local leverage scores with respect to $U$, $\hat{w}_i(U)$ are calculated.
Now, the local and global leverage scores can be related by Lemma
\ref{lem: lev_scores_drop} as $w_i / \text{poly}(d) \le \hat{w}_i$ so we can
decide which rows to keep using an adjusted threshold.
Any $i$ for which the local leverage exceeds the adjusted threshold is kept in
the sample and all other rows are deleted.
The sample cannot be too large by properties of the well-conditioned basis and
leverage scores so these kept rows can be appended to the next block which is
read in before computing another well-conditioned basis and repeating in the
same fashion.
The proof of Theorem \ref{DetLevScoreThm} is deferred to Appendix \ref{sec: appendix high lev}.

\begin{thm} \label{DetLevScoreThm}
Let $\tau > 0$ be a fixed constant and let $b$ denote a bound on the
available space.
There exists a deterministic algorithm,
namely, Algorithm \ref{DetLevScoreAlg}, which computes the $\ell_p$-leverage
scores of a matrix $A \in \R^{n \times d}$ with $O(bd^2 + b d^5 \log b)$ update time,
$\text{poly}(d)$ space, and returns all rows of $A$ with $\ell_p$ leverage score
satisfying $w_i \ge \tau / \operatorname{poly}(d)$.
\end{thm}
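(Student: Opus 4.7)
The plan is to prove Theorem~\ref{DetLevScoreThm} by establishing three things about Algorithm~\ref{DetLevScoreAlg}: correctness of the returned set, the $\mathrm{poly}(d)$ bound on the storage held at every intermediate step, and the $O(bd^2+bd^5\log b)$ per-block update time. All three follow from combining the structure of the algorithm with the facts and lemmas already available in the excerpt.

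For correctness, I would induct over the blocks. Let $C_t=[A'_t;B_{t-1}]$ denote the working matrix at step $t$, where $A'_t$ is the newly read block of $b$ rows and $B_{t-1}$ is the summary carried from the previous step. Fix any row $i$ of $A$ whose global $\ell_p$-leverage score is at least the threshold of interest. Since the rows of $C_t$ are a subset of the rows of $A$, Lemma~\ref{lem: lev_scores_drop} yields $\hat{w}_i \ge w_i/\mathrm{poly}(d)$, where $\hat{w}_i$ is the leverage score of row $i$ computed from the wcb of $C_t$. Hence running \textsc{LevScoreCheck} on $C_t$ with local threshold $\tau/\mathrm{poly}(d)$ preserves $i$ inside $B_t$. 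Iterating across all $O(n/b)$ blocks, every such row survives all the way to the final output, which is precisely the guarantee claimed by the theorem.

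For the space bound, after every call to \textsc{LevScoreCheck} with local threshold $\tau'=\tau/\mathrm{poly}(d)$ the retained set $B_t$ consists of rows of $C_t$ whose local leverage exceeds $\tau'$. Fact~\ref{LevBound1} applied to the wcb of $C_t$ gives $|B_t|\le \alpha^p/\tau'=\mathrm{poly}(d)/\tau=b$, so every intermediate working matrix $C_t$ has at most $2b=\mathrm{poly}(d)$ rows; the formal space accounting is then wrapped up in Lemma~\ref{lem: space_local_vs_global}. The per-block update time is dominated by constructing a well-conditioned basis of a $(2b)\times d$ matrix, which by Theorem~\ref{WCBthm} costs $O(bd^2+bd^5\log b)$, with the linear row-scan inside \textsc{LevScoreCheck} contributing only $O(bd)$ and being absorbed.

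The main technical obstacle is the use of Lemma~\ref{lem: lev_scores_drop}, which underpins the correctness step. Unlike the $\ell_2$ case, where row norms of an orthonormal basis behave predictably under passage to a coordinate subspace, here the wcbs of $A$ and of a sub-block $C_t$ need not be related by any orthogonal transformation and in fact span different column spaces. One must leverage the full $(\alpha,\beta,p)$-conditioning inequalities of Definition~\ref{WCB} together with H\"{o}lder duality, in the spirit of Fact~\ref{LevBound2}, to show that no individual row's $\ell_p$-leverage score can drop by more than a $\mathrm{poly}(d)$ factor when moving from a wcb of $A$ to a wcb of a sub-block. This controlled drop is precisely what forces the $\mathrm{poly}(d)$ deflation of the threshold in \textsc{LevScoreCheck} and appears intrinsic to the $p\ne 2$ setting.
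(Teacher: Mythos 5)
Your proposal is correct and follows essentially the same route as the paper's proof: induction over blocks, invoking Lemma~\ref{lem: lev_scores_drop} to show that a globally heavy row keeps local leverage at least $w_i/\mathrm{poly}(d)$ in every working matrix $[A';B]$ and hence survives the deflated threshold, Lemma~\ref{lem: space_local_vs_global} (via Fact~\ref{LevBound1}) for the $\mathrm{poly}(d)$ space bound, and Theorem~\ref{WCBthm} for the per-block update time. No substantive differences from the paper's argument.
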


\section{$\ell_p$-Subspace Embeddings} \label{Deterministic ell+p subspace embedding section}

Under the assumptions of the Distributed Summary Model we present an
algorithm which computes an $\ell_p$-subspace embedding.
By extension, this applies to both the distributed and streaming models of
computation as described in Section \ref{sec: models}.
Two operations are needed for this model of computation: the merge and reduce
steps.
To reduce the input at each level a summary is computed by taking a block of
input $B$ (corresponding to a leaf node or a node higher up the tree) and
computing a well-conditioned basis $B = US$.
In particular, the summary is now the matrix $S$ with $U$ and $B$ deleted.
For the merge step, successive matrices $S$ are concatenated until the space
requirement is met.
A further reduce step takes as input this concatenated matrix and the process is
repeated.
Further details, pseudocode, and proofs for this section are given in Appendix
\ref{sec: DetEllpSEthm}.

\begin{mydef}
A matrix $T$ is a relative error $(c_1, c_2)$-$\ell_p$ subspace embedding for the column space of a matrix $A \in \R^{n \times d}$ if there are constants $c_1, c_2 > 0$ so that for all $x \in \R^{d},$
%
    $c_1 \Vert Ax \Vert_p \le \Vert Tx \Vert_p \le c_2 \Vert Ax \Vert_p.$
\end{mydef}

\begin{thm} \label{DetEllpSEthm}
Let $A \in \R^{n \times d}$, $p \ne 2, \infty$ be fixed and fix a constant $\gamma \in (0,1)$.
Then there exists a one-pass deterministic algorithm which constructs a $(1/d^{O(1/ \gamma)}, 1)$ \textit{relative error} $\ell_p$-subspace embedding in with $O(n^{\gamma}d^2 + n^{\gamma} d^5 \log n^{\gamma})$ update time and $O(n^{\gamma}d)$ space in the streaming and distributed models of computation.
\end{thm}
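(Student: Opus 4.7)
The plan is to present a merge-and-reduce algorithm on a balanced tree of depth $O(1/\gamma)$ whose leaves each see $n^\gamma$ rows, and to show that the ``reduce'' primitive at each node distorts the column-space norms by only a multiplicative $\text{poly}(d)$ factor, so that after $O(1/\gamma)$ levels the compounded distortion is $d^{O(1/\gamma)}$.

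First I would describe the reduce primitive: given a block $B \in \R^{N\times d}$ of at most $n^\gamma$ rows, compute the factorization $B = US$ from Theorem \ref{WCBthm}, where $U$ is an $(\alpha,\beta,p)$-well-conditioned basis for the column space of $B$, discard $U$ and $B$, and keep only the $d\times d$ matrix $S$. A merge step stacks up to $\Theta(n^\gamma/d)$ of the child summaries into a single matrix of at most $n^\gamma$ rows, to which the reduce primitive is applied again; so each internal node holds $O(n^\gamma d)$ data and pays the wcb construction cost of Theorem \ref{WCBthm}, namely $O(n^\gamma d^2 + n^\gamma d^5 \log n^\gamma)$, and the tree has depth $O(1/\gamma)$.

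The heart of the proof is a per-level distortion lemma: for every $x \in \R^d$, $\|Bx\|_p / \text{poly}(d) \le \|Sx\|_p \le \text{poly}(d)\cdot\|Bx\|_p$. Writing $y = Sx$, we have $\|Bx\|_p = \|USx\|_p = \|Uy\|_p$, so the inequality reduces to a comparison of $\|y\|_p$ and $\|Uy\|_p$ on $\R^d$. The upper bound will follow from the defining inequality $\|y\|_q \le \beta\|Uy\|_p$ of the well-conditioned basis combined with the standard $\ell_p$-vs-$\ell_q$ comparison on $\R^d$, which loses at most a $d^{|1/p-1/q|}$ factor. For the lower bound I expand $\|Uy\|_p^p = \sum_i |U_iy|^p$, apply H\"older row-wise to get $|U_iy|\le \|U_i\|_q\|y\|_p$, sum the $p$-th powers, and bound $\sum_i \|U_i\|_q^p$ against $\|U\|_p^p \le \alpha^p$ via another norm comparison. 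Plugging in the values of $\alpha,\beta$ from Theorem \ref{WCBthm} in both the $p<2$ and $p>2$ cases yields $\text{poly}(d)$ on each side.

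With the per-level lemma in hand, composing it along any root-to-leaf path through the $O(1/\gamma)$ levels gives $\|Ax\|_p / d^{O(1/\gamma)} \le \|Tx\|_p \le d^{O(1/\gamma)}\|Ax\|_p$ for every $x \in \R^d$; rescaling $T$ by the upper constant converts this into the claimed $(1/d^{O(1/\gamma)},1)$ relative-error $\ell_p$-subspace embedding. Remark \ref{rem: tree-stream} then transports the same guarantees into the streaming model within the stated $O(n^\gamma d)$ space and update time. I expect the main obstacle to be the per-level lemma: the case $p>2$ forces $q<2$, so H\"older is applied in the ``unfavorable'' direction and an extra $d$-power must be tracked, and one has to check that the base of the $\text{poly}(d)$ per-level blow-up is small enough that $O(1/\gamma)$-fold composition still lands at $d^{O(1/\gamma)}$; everything else (tree arithmetic, pass count, and the final scalar normalization) is routine.
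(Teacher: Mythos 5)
Your proposal is correct and the algorithm is the same as the paper's: the identical merge-and-reduce tree, the identical reduce primitive ($B = US$, discard $U$, keep the $d\times d$ matrix $S$), and the identical composition of a per-level $\mathrm{poly}(d)$ distortion over $O(1/\gamma)$ levels. The one place you diverge is in how the per-level lemma is proved. The paper does not argue from the $(\alpha,\beta,p)$ axioms at all; it invokes the sharper property $\|x\|_2 \le \|Ux\|_p \le \sqrt{d}\,\|x\|_2$ of the specific Auerbach/L\"owner--John construction from \cite{Mahoney}, converts it to $\|x\|_p \le \|Ux\|_p \le d\,\|x\|_p$ via standard $\ell_p$-vs-$\ell_2$ comparisons, and thereby gets the one-sided sandwich $\|Bx\|_p/d \le \|Sx\|_p \le \|Bx\|_p$ directly, so no final rescaling is needed. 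Your route --- the defining inequality $\|y\|_q \le \beta\|Uy\|_p$ plus an $\ell_p$-vs-$\ell_q$ comparison for one direction, and row-wise H\"older with $\sum_i \|U_i\|_q^p \lesssim d^{p\max(0,1/q-1/p)}\alpha^p$ for the other --- is also valid (both directions check out in the $p<2$ and $p>2$ cases for constant $p$), and it has the virtue of working for \emph{any} $(\alpha,\beta,p)$-well-conditioned basis rather than the particular construction of Theorem \ref{WCBthm}; the price is a two-sided $\mathrm{poly}(d)$ blow-up per level and the extra normalization of $T$ at the end, which you correctly account for. Both proofs treat the level-by-level error propagation (concatenated child summaries versus the union of their original blocks) at the same informal level, so you have not skipped anything the paper makes rigorous.
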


The algorithm is used in a tree structure as follows: split input
$A \in \R^{n \times d}$ into $n^{1 - \gamma}$ blocks of size $n^{\gamma}$,
these form the leaves of the tree.
For each block, a well-conditioned basis is computed and the change of basis matrix $S \in \R^{d \times d}$ is stored and passed to the next level of the tree.
This is repeated until the concatenation of all the $S$ matrices would exceed $n^{\gamma}$.
At this point, the concatenated $S$ matrices form the parent node of the
leaves in the tree and the process is repeated upon this node: this is the
merge and reduce step of the algorithm.
At every iteration of the merge-and-reduce steps it can be shown that a
distortion of $1/d$ is introduced by using the summaries $S$.
However, this can be controlled across all of the $O(1/\gamma)$ levels in the tree to give a
deterministic relative error $\ell_p$ subspace embedding which requires only
sublinear space and little communication.
In addition, the subspace embedding can be used to achieve a deterministic relative-error
approximate regression result.
The proof relies upon analyzing the merge-and-reduce behaviour across all nodes of the tree.

$\ell_p$-\textbf{Regression Problem:} Given matrix $A \in \R^{n \times d}$ and target vector $b \in \R^n$, find $\hat{x} = \argmin_x \|Ax - b\|_p$.

\begin{thm} \label{DetEllpReg}
Let $A \in \R^{n \times d}, b \in \R^n$, fix $ p \ne 2, \infty$ and a
constant $ \gamma > 0$. The \emph{$\ell_p$-regression problem} can be
solved deterministically in the streaming and distributed models with a $(d+1)^{O(1/\gamma)}
= \text{poly}(d)$ \textit{relative error} approximation factor.
The update time is $\text{poly}(n^{\gamma}(d+1))$
and $O((1/ \gamma) n^{\gamma}(d+1))$ storage.
The query time is $\operatorname{poly}(n^{\gamma})$ for the cost of convex optimization.
\end{thm}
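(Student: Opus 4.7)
The plan is to reduce $\ell_p$-regression to the subspace embedding guarantee of Theorem \ref{DetEllpSEthm} by the standard augmentation trick. I form the $n \times (d+1)$ matrix $M = [A \mid b]$ (streaming in the extra coordinate of $b$ together with each row of $A$) and run the algorithm from Theorem \ref{DetEllpSEthm} on $M$ with the given constant $\gamma$. This yields, deterministically and in a single pass over the merge-and-reduce tree, a sketch matrix $T \in \R^{m \times (d+1)}$ with $m = O((1/\gamma)\,n^{\gamma}(d+1))$ that is a $(1/(d+1)^{O(1/\gamma)},\,1)$ relative-error $\ell_p$-subspace embedding for the column span of $M$. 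I split $T = [T_A \mid T_b]$ with $T_A \in \R^{m \times d}$ and $T_b \in \R^{m}$, and at query time output $\tilde{x} = \argmin_{x} \|T_A x - T_b\|_p$ using a black-box convex $\ell_p$-regression solver on this small instance.

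For correctness, the key observation is that every residual $Ax - b$ can be written as $M \cdot [x;-1]$, with $[x;-1] \in \R^{d+1}$. Applying the subspace-embedding inequalities to the vector $[x;-1]$ gives, for every $x \in \R^d$,
$$c_1 \|Ax - b\|_p \;\le\; \|T_A x - T_b\|_p \;\le\; c_2 \|Ax - b\|_p,$$
with $c_1 = 1/(d+1)^{O(1/\gamma)}$ and $c_2 = 1$. Let $x^\star$ be the true optimum. Then by the optimality of $\tilde{x}$ on the sketched instance,
$$c_1 \|A\tilde{x} - b\|_p \;\le\; \|T_A \tilde{x} - T_b\|_p \;\le\; \|T_A x^\star - T_b\|_p \;\le\; c_2 \|A x^\star - b\|_p,$$
so $\|A\tilde{x} - b\|_p \le (c_2/c_1)\,\|A x^\star - b\|_p = (d+1)^{O(1/\gamma)} \|A x^\star - b\|_p$, which is the claimed $\operatorname{poly}(d)$ relative-error guarantee.

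For the resource bounds I simply invoke Theorem \ref{DetEllpSEthm} instantiated on an $n \times (d+1)$ input: the update time per block is $\operatorname{poly}(n^{\gamma}(d+1))$, dominated by the well-conditioned basis computation from Theorem \ref{WCBthm}, and the storage is $O((1/\gamma)\,n^{\gamma}(d+1))$ because we only ever keep the $O(1/\gamma)$ summaries on the root-to-leaf path of the merge-and-reduce tree. The query time is the cost of one convex $\ell_p$-regression solve on an instance with $m = \operatorname{poly}(n^{\gamma})$ rows and $d$ columns.

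I do not expect a significant technical obstacle here; the only bookkeeping to watch is (i) confirming that the augmentation $M = [A\mid b]$ respects the streaming/distributed arrival pattern (it does, since $b_i$ arrives with row $A_i$), and (ii) verifying that the constant hidden in $(d+1)^{O(1/\gamma)}$ from the subspace embedding absorbs the loss from passing through the reduction, which it does because the factor $c_2/c_1$ coming out of the subspace-embedding guarantee is already $\operatorname{poly}(d)$ of the form $(d+1)^{O(1/\gamma)}$.
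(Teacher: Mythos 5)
Your proposal is correct and matches the paper's own proof essentially step for step: the paper likewise forms $Z=[A,b]$, applies Theorem \ref{DetEllpSEthm} to get a sketch $S$, restricts the embedding guarantee to vectors of the form $(x,-1)^T$, and chains the two-sided bounds with the optimality of the sketched solution to obtain the $(d+1)^{O(1/\gamma)}$ factor. The resource accounting is also identical, so no further comment is needed.
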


\section{Low-Rank Approximation} \label{low_rank_main}

$\ell_1$-\textbf{Low-Rank Approximation Problem:} Given matrix $A \in \R^{n \times d}$ output a matrix $B$ of rank $k$ s.t., for constant $k$:
\begin{equation} \label{LowRankProb}
    \Vert A - B \Vert_1 \le \text{poly}(k) \min_{A': \text{rank} k} \Vert A - A' \Vert_1.
\end{equation}

\begin{thm} \label{l1Approxthm}
Let $A \in \R^{n \times d}$ be the given data matrix
and $k$ be the (constant) target rank. 
Let $\gamma > 0$ be an arbitrary (small) constant.  Then there exists a deterministic distributed and streaming algorithm (namely Algorithm \ref{DetL1Alg} in Appendix \ref{ell_1 low rank approx}) which can output a solution to the $\ell_1$-Low Rank Approximation Problem with \textit{relative error} $\text{poly}(k)$ approximation factor, update time $\text{poly}(n,d)$, space bounded by $n^{\gamma}\text{poly}(d)$, and query time $\operatorname{poly}(n, d)$.
\end{thm}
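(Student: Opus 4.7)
The plan is to derandomize the rank-$k$ $\ell_1$-low rank approximation algorithm of Song--Woodruff--Zhong \cite{woodruff_ell_1_low_rank} and then embed it inside the merge-and-reduce framework already developed in Section \ref{Deterministic ell+p subspace embedding section}. Recall that their algorithm reduces $\ell_1$-low rank approximation to a small regression problem by identifying a rank-$O(k)$ subspace $V$ such that the projection of $A$ onto $V$ (measured in entrywise $\ell_1$) is within a $\mathrm{poly}(k)$ factor of optimum. The randomness in their construction appears only in row-sampling steps that are used to obtain an $\ell_1$-subspace embedding, and in obliviously chosen sketching matrices used to reduce the dimension of the regression problem. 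These are exactly the ingredients we have already built deterministic replacements for.

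The first step will be to take their pipeline and replace the randomized $\ell_1$-sampling subroutine with the deterministic high-$\ell_1$-leverage-score selector of Theorem \ref{DetLevScoreThm}. By Fact \ref{LevBound2}, any row that contributes more than a $\tau$ fraction of any $\|ARx\|_1$ must have $w_i \ge \tau$, so the deterministic superset of high-leverage rows returned by Algorithm \ref{DetLevScoreAlg} behaves as a deterministic $\ell_1$ importance-sampling summary up to a fixed $\mathrm{poly}(d)$ factor in the retained weight. Composing this with the deterministic relative-error $\ell_1$-subspace embedding of Theorem \ref{DetEllpSEthm} (instantiated at $p=1$) gives a deterministic analog of each randomized subroutine in \cite{woodruff_ell_1_low_rank}, and the blow-up in each replacement is only $\mathrm{poly}(d)$ or $\mathrm{poly}(k)$, which is absorbed into the claimed $\mathrm{poly}(k)$ approximation factor.

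Second, I will port this derandomized algorithm to the distributed/streaming model via the merge-and-reduce tree from Section \ref{Deterministic ell+p subspace embedding section}. At each leaf we store $n^{\gamma}$ rows and compute a local $\ell_1$ well-conditioned basis, keep the rows of high local leverage together with the change-of-basis $S$, and pass the summary up to the parent. Interior nodes concatenate the summaries received from their children and re-apply the same reduce step, in exactly the pattern used to prove Theorem \ref{DetEllpSEthm}. Because the tree has $O(1/\gamma)$ levels and each level contributes only a $\mathrm{poly}(d)$ distortion to the effective subspace embedding, the end-to-end $\ell_1$-subspace embedding fed into the Song--Woodruff--Zhong regression step is within a $(d+1)^{O(1/\gamma)} = \mathrm{poly}(d)$ factor of the optimal one, and the composition with the poly$(k)$ bound of their analysis yields a $\mathrm{poly}(k)$ relative-error solution. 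Space is $n^{\gamma}\mathrm{poly}(d)$ by construction, update time is $\mathrm{poly}(n,d)$ (dominated by computing well-conditioned bases on blocks of size $n^\gamma$), and query time is $\mathrm{poly}(n,d)$ (dominated by solving the final small $\ell_1$-regression on the summary).

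The main obstacle I expect is verifying that replacing randomized Lewis-weight/leverage-score sampling with a deterministic high-leverage superset still preserves the structural guarantees that \cite{woodruff_ell_1_low_rank} need in order to obtain a $\mathrm{poly}(k)$ bound, rather than merely some polynomial in $n$. Concretely, their analysis relies on a sampled matrix acting as an $\ell_1$ subspace embedding for a fixed rank-$k$ subspace spanned by the optimum; we have to show that the rows we retain deterministically still contain, up to $\mathrm{poly}(d)$ scaling, all the mass on that target subspace. This will follow from Lemma \ref{lem: lev_scores_drop}, which guarantees that globally high-leverage rows cannot hide from the local computations performed at each node of the tree, so that the retained row set is a faithful $\ell_1$-subspace embedding across all levels and the $\mathrm{poly}(k)$ error carries through.
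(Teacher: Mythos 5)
Your overall architecture (derandomize the subroutine of \cite{woodruff_ell_1_low_rank}, then run it inside a merge-and-reduce tree with $O(1/\gamma)$ levels) matches the paper, but your derandomization strategy is different from the paper's and, as described, does not deliver the claimed $\mathrm{poly}(k)$ factor. The paper derandomizes Algorithm \ref{DetKRankApprox} by \emph{brute-force enumeration}: the sampling/rescaling matrices $R, D, T_1, T_2$ in the Song--Woodruff--Zhong pipeline have supports of size $O(k\,\mathrm{polylog}\,k)$ and entries quantized to powers of $2$ in $[1/nd,1]$, so for constant $k$ one can generate \emph{all} candidate matrices, evaluate $\Vert T_1 ARXYDAT_2 - T_1AT_2\Vert_1$ for every combination, and take the minimum, all in $\mathrm{poly}(nd)$ time (Lemma \ref{Runtime}). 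You instead propose substituting Theorem \ref{DetLevScoreThm} and Theorem \ref{DetEllpSEthm} for the randomized sampling. This fails on the approximation factor: both of those tools operate on the full $d$-column matrix and incur $\mathrm{poly}(d)$ (indeed $d^{O(1/\gamma)}$) distortion, and $\mathrm{poly}(d)$ cannot be ``absorbed into $\mathrm{poly}(k)$'' --- $k$ is a constant while $d$ is not. The paper avoids this precisely by applying the well-conditioned-basis compression (Lemma \ref{l1polyapprox}) only to the rank-$k$ factor $W \in \R^{n^\gamma \times k}$, so the distortion parameters are $\mathrm{poly}(k)$, not $\mathrm{poly}(d)$.

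There is a second, independent gap: you assert that the deterministic superset of high-leverage rows ``behaves as a deterministic $\ell_1$ importance-sampling summary'' and that Lemma \ref{lem: lev_scores_drop} shows the retained rows form ``a faithful $\ell_1$-subspace embedding.'' Neither claim is supported. Keeping heavy rows and discarding light ones (without any rescaling) only yields \emph{additive}-error guarantees --- this is exactly how the paper uses Theorem \ref{DetLevScoreThm}, for the additive-error $\ell_\infty$-regression result --- whereas the Song--Woodruff--Zhong analysis needs the sampled-and-rescaled matrix to be a \emph{relative}-error $\ell_1$-subspace embedding of certain subspaces tied to the optimum. Lemma \ref{lem: lev_scores_drop} only controls how much local leverage scores can drop relative to global ones; it says nothing about the $\ell_1$ norm of $Ax$ restricted to the retained rows approximating $\Vert Ax\Vert_1$ multiplicatively. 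The paper's relative-error subspace embedding (Theorem \ref{DetEllpSEthm}) is obtained by retaining the change-of-basis matrices $S$, not by retaining rows of $A$, which is why the enumeration route is what actually carries the proof.
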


The key technique is similar to that of the previous section by using a tree structure with merge-and-reduce operations. For input $A \in \R^{n \times d}$ and constant $\gamma > 0$ partition $A$ into $n^{1- \gamma}$ groups of rows which form the leaves of the tree.  
The tree is defined as previously with the same `merge' operation, but the `reduce' step to summarize the data exploits a derandomization (subroutine Algorithm \ref{DetKRankApprox}) of \cite{woodruff_ell_1_low_rank} to compute an approximation to the optimal $\ell_1$-low-rank approximation.
Once this is computed, $k$ of the rows in the summary are kept for later merge steps.

This process is continued with the successive $k$ rows from $n^{\gamma}$ rows being `merged' or added to the matrix until it has $n^{\gamma}$ rows.
The process is repeated across all of the groups in the level and again on the successive levels on the tree from which it can be shown that the error does not propagate too much over the tree, thus giving the desired result.

\section{Application: $\ell_{\infty}$-Regression} \label{Deterministic ell_inf regression}
\label{ell_inf_section}
Here we present a method for solving $\ell_{\infty}$-regression in a streaming
fashion.
Given input $A$ and a target vector $b$, it is possible to achieve additive
approximation error of the form $\varepsilon \| b \|_p$ for arbitrarily large $p$.
This contrasts with both Theorems \ref{DetEllpSEthm} and \ref{DetEllpReg}
which achieve a relative error $\text{poly}(d)$ approximation.
Both of these theorems require that $p$ is constant and not equal to the
$\infty$-norm.
This restriction is due to a lower bound for $\ell_{\infty}$- regression showing
 that it cannot be approximated with relative error in sublinear space.
 The key to proving Theorem \ref{ell_inf reg} below is using Theorem \ref{DetLevScoreThm}
 to find high leverage rows and arguing that these are sufficient to give the
 claimed error guarantee.

The $\ell_{\infty}$-regression problem has been previously studied in the
overdetermined case and can naturally be applied to curve-fitting under this norm.
$\ell_{\infty}$-regression can be solved by linear
programming \cite{sposito1976minimizing} and such a transformation allows
the identification of outliers in the data.
Also, if the errors are known to be distributed uniformly across an interval
then $\ell_{\infty}$-regression estimator is the maximum-likelihood parameter choice \cite{hand1978aspects}.
The same work argues that such uniform distributions on the errors often
arise as round-off errors in industrial applications whereby the error is
controlled or is small relative to the signal.
There are further applications such as using $\ell_{\infty}$-regression to
remove outliers prior to $\ell_2$ regression in order to make the problem more
robust \cite{shen2014fast}.
By applying $\ell_{\infty}$ regression on subsets of the data an approximation
to the Least Median of Squares (another robust form of regression) can be found.
We now define the problem and proceed to show that it is possible to compute an approximate solution with additive error in $\ell_p$-norm for arbitrarily large $p$.

\textbf{Approximate} $\ell_{\infty}$-\textbf{Regression problem:}
Given data $A \in \R^{n \times d}$, target vector $b \in \R^n$, and error parameter $\varepsilon > 0$, compute an additive $\varepsilon \| b \|_p$ error solution to:
\begin{equation*}
\min_{x \in \R^d} \|Ax - b \|_{\infty} = \min_{x \in \R^d} \left[ \max_{i} |(Ax)_i  - b_i | \right].
\end{equation*}


\begin{thm} \label{ell_inf reg}
Let $A \in \R^{n \times d}, b \in \R^n$ and fix constants $p \ge 1, \varepsilon > 0$ with $p \ne \infty$. There exists a one-pass deterministic streaming algorithm which solves the \emph{$\ell_{\infty}$-regression problem} up to an additive $\varepsilon \Vert b \Vert_p$ error in $d^{O(p)}/\varepsilon^{O(1)}$ space, $O(md^5 + md^2 \log m)$ update time and $T_{\text{solve}}(m,d)$ query time.
\end{thm}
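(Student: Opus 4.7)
The plan is to invoke Theorem~\ref{DetLevScoreThm} on the augmented matrix $M = [A \mid -b] \in \R^{n \times (d+1)}$, and then solve an $\ell_\infty$-regression restricted to the returned rows. Choose a threshold $\tau = \Theta\bigl((\varepsilon/d)^{cp}\bigr)$ for a suitable constant $c$ so that Algorithm~\ref{DetLevScoreAlg} deterministically returns, in one streaming pass, a subset $S$ of rows containing every $i$ with $\ell_p$-leverage $w_i \ge \tau/\text{poly}(d)$ relative to $M$. By Fact~\ref{LevBound1}, $|S| \le \text{poly}(d)/\tau = d^{O(p)}/\varepsilon^{O(1)}$, matching the claimed space bound. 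I then set $\hat x = \argmin_{x \in \R^d} \|A_S x - b_S\|_\infty$, solved on the retained block with a standard LP solver in time $T_{\text{solve}}(m,d)$ where $m = |S|$; the update time $O(md^5 + md^2 \log m)$ is inherited directly from Theorem~\ref{DetLevScoreThm}.

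The error bound splits according to whether the row $i$ attaining $\|A\hat x - b\|_\infty$ lies in $S$ or not. If $i \in S$, feasibility of the global optimum $x^*$ on the subset together with the optimality of $\hat x$ gives $|A_i \hat x - b_i| \le \|A_S x^* - b_S\|_\infty \le \|A x^* - b\|_\infty = \mathrm{OPT}$. If $i \notin S$, apply Fact~\ref{LevBound2} to $M$ with coefficient vector $(\hat x, 1)^{\top}$, noting $M(\hat x,1)^{\top} = A\hat x - b$:
\[
|A_i \hat x - b_i|^p \le \beta\, w_i\, \|A\hat x - b\|_p^p \le \beta\,\frac{\tau}{\text{poly}(d)}\,\|A\hat x - b\|_p^p.
\]
Theorem~\ref{WCBthm} gives $\beta \le 1$ for every $p \ge 1$, so after taking $p$-th roots and tuning the constant hidden in $\tau$, this becomes $|A_i \hat x - b_i| \le \varepsilon\,\|A\hat x - b\|_p/\text{poly}(d)$. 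Combining the two cases yields $\|A\hat x - b\|_\infty \le \mathrm{OPT} + \varepsilon \|b\|_p$ provided we can additionally show $\|A\hat x - b\|_p \le \text{poly}(d)\,\|b\|_p$.

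The main obstacle is establishing this last bound. It is not immediate from optimality on $S$, because $\hat x$ could in principle produce large residuals on rows outside $S$, and the naive recursion
\[
\|A\hat x - b\|_p^p \le |S|\,\mathrm{OPT}^p + \beta \Bigl(\sum_{i\notin S} w_i\Bigr) \|A\hat x - b\|_p^p
\]
is circular since a priori only $\sum_i w_i \le \alpha^p = \text{poly}(d)$ is known, and $\beta \alpha^p$ need not be less than one. The resolution uses the structure of the stored set $S$: because Algorithm~\ref{DetLevScoreAlg} retains every row of appreciable $\ell_p$-leverage in $M$, the block $A_S$ contains a well-conditioned basis for the column span of $A$, so via the change-of-basis matrix $R$ of Definition~\ref{WCB} and H\"older applied to $R^{-1}(\hat x,-1)^{\top}$, $\|\hat x\|$ and hence $\|A\hat x - b\|_p$ is controlled by $\|b\|_p$ up to a $\text{poly}(d)$ factor. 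With this conditioning in hand, the choice $\tau = \Theta(\varepsilon/d)^{O(p)}$ simultaneously gives the space $d^{O(p)}/\varepsilon^{O(1)}$ and the additive $\varepsilon\|b\|_p$ error, completing the argument; the query time follows from invoking the LP solver on the subset.
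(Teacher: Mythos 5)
Your overall architecture (retain high-leverage rows via Theorem~\ref{DetLevScoreThm}, solve the reduced $\ell_\infty$ LP on them, charge the discarded rows to the additive error) matches the paper, and augmenting to $M=[A\mid -b]$ is a reasonable variant of the paper's device of thresholding $b$ separately (keeping only entries with $|b_i|>\varepsilon\|b\|_p$). The genuine gap is in your error analysis for rows outside $S$. You bound $|A_i\hat x - b_i|^p \le \beta\, w_i\, \|A\hat x - b\|_p^p$ and therefore need $\|A\hat x - b\|_p \le \mathrm{poly}(d)\,\|b\|_p$. You correctly flag this as the main obstacle, but the bound you need is false in general: even for the true $\ell_\infty$-optimum the residual can be spread over all $n$ rows with magnitude $\Theta(\|b\|_\infty)$ each, so $\|Ax^*-b\|_p$ can be $\Theta(n^{1/p})\,\|b\|_p$ (take $A$ the all-ones column and $b=e_1$, where the optimal residual is $\pm 1/2$ in every coordinate). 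Your proposed fix --- that $A_S$ ``contains a well-conditioned basis'' for the column span --- is also unsubstantiated: retaining rows above a constant leverage threshold does not yield a well-conditioned sub-basis or an $\ell_p$-subspace embedding (that is precisely why Section~\ref{Deterministic ell+p subspace embedding section} uses an entirely different construction based on storing change-of-basis matrices), and in the example above $A_S$ can even be empty.

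The paper's proof avoids the global residual norm altogether. Writing a candidate solution as $ARy$ in the well-conditioned basis, it bounds each discarded row's contribution by H\"older against the \emph{coefficient vector}: $|\langle (AR)_i, y\rangle| \le \|(AR)_i\|_p\,\|y\|_1 \le w_i^{1/p}\, d\, \|y\|_p$, where $w_i$ is below the adjusted threshold by construction --- no sum over $n$ rows and no appeal to $\|A\hat x-b\|_p$. The only remaining quantity to control is $\|y\|_p$, which the paper argues is at most $\mathrm{poly}(d)\,\|b\|_p$ because otherwise $y=0$ would already beat it on the reduced instance (equivalently, one imposes this as an explicit constraint in the LP, which the true optimum of the reduced problem can be taken to satisfy). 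With that bound, each discarded row's residual differs from $|b_i|$ by at most $\varepsilon\|b\|_p$, and the thresholding of $b$ handles the dropped entries of the target vector. To repair your argument you would need to replace the $\|A\hat x - b\|_p$-based estimate with this coefficient-norm estimate and add (and justify) the norm constraint on the solution of the reduced LP; as written, the decomposition you chose cannot be closed.
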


Note that $T_{\text{solve}}(m,d)$ query time is the time taken to solve the linear program associated with the above problem on a reduced instance size.
Also, observe that Theorem \ref{ell_inf reg} requires $p<\infty$.  This restriction is
 necessary to forbid relative error with respect to the infinity norm.  Indeed,
 $p$ can be an arbitrarily large constant, but for $p = \infty$ we can look for
 rows above an $\varepsilon / \text{poly}(d)$ threshold in the case when $A$ is
 an all-ones column $n$-vector (so an $n \times 1$ matrix).  Then
 $\| Ax \|_{\infty} = \|x\|_{\infty}$ since $x$ is a scalar.  Also, $A$ is a
 well-conditioned basis for its own column span but the number of rows of
 leverage exceeding $\varepsilon /\text{poly}(d) = \varepsilon$ is $n$ for a
 small constant $\varepsilon$.  This intuition allows us to prove
the following theorem.  

\begin{thm} \label{thm: ell_inf lower bound}
Any algorithm which outputs an $\varepsilon \Vert b \Vert_{\infty}$ relative error solution to the $\ell_{\infty}$-regression problem requires $\min \left\{ n, 2^{\Omega(d)} \right\}$ space.
\end{thm}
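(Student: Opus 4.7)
The plan is to prove the space lower bound by a reduction from the one-way communication problem $\mathsf{INDEX}$, whose randomized one-way communication complexity on $N$-bit inputs is $\Omega(N)$. Let $c > 0$ be a small constant and set $M = \min(n-1,\,2^{cd})$; a probabilistic argument (choose $M$ i.i.d.\ uniform vectors in $\{-1,+1\}^d$, apply a Chernoff bound to each pairwise inner product, and union bound over the $\binom{M}{2}$ pairs) produces a set $\{v_1,\dots,v_M\}\subset\{-1,+1\}^d$ with $|\langle v_i,v_j\rangle|\le d/4$ for all $i\ne j$. This packing is the ``hard'' input family.

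In the reduction Alice holds $x\in\{0,1\}^M$ and Bob holds an index $i^*\in[M]$. Alice feeds the alleged small-space streaming algorithm the rows $\{(v_i,0) : x_i=1\}$, so that each row of $A$ equals some $v_i^\top$ with the corresponding entry of $b$ equal to $0$; she then ships the working memory to Bob. Bob appends the single row $(v_{i^*},d)$, finishes the stream, and reads out the returned solution $\hat x\in\R^d$. Note that $\|b\|_\infty = d$ by construction.

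The core step is to establish a constant-factor gap in the optimum depending on $x_{i^*}$. If $x_{i^*}=0$, choosing $x=v_{i^*}$ gives $v_{i^*}^\top x = d$ exactly while $|v_i^\top x|=|\langle v_i,v_{i^*}\rangle|\le d/4$ on every Alice row, so $\mathrm{OPT}\le d/4$. If $x_{i^*}=1$, the two conflicting rows $(v_{i^*},0)$ and $(v_{i^*},d)$ force $|v_{i^*}^\top x-0|+|v_{i^*}^\top x-d|\ge d$, hence $\mathrm{OPT}\ge d/2$. Invoking the guarantee $\|A\hat x-b\|_\infty\le\mathrm{OPT}+\varepsilon\|b\|_\infty$ (the multiplicative reading $(1+\varepsilon)\mathrm{OPT}$ is handled analogously by a rescaling), in case $x_{i^*}=0$ Bob's row pins $v_{i^*}^\top\hat x\ge 3d/4-\varepsilon d$, whereas in case $x_{i^*}=1$ Alice's and Bob's row on the same vector $v_{i^*}$ together force $v_{i^*}^\top\hat x\in[d/2-\varepsilon d,\,d/2+\varepsilon d]$. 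For any $\varepsilon<1/8$ these intervals are disjoint, so the test $v_{i^*}^\top\hat x \gtrless 5d/8$---which Bob can perform from $\hat x$ and $i^*$ alone---recovers $x_{i^*}$ exactly.

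Thus a streaming algorithm with space $s$ yields a protocol for $\mathsf{INDEX}$ on $M$ bits using $s$ bits of communication, giving $s=\Omega(M)=\Omega(\min\{n,\,2^{\Omega(d)}\})$. I expect the main obstacle to be the calibration of constants: the packing radius $d/4$, the gap between the two OPT values, and the additive slack $\varepsilon\|b\|_\infty$ must be balanced so that the one-scalar decoding is unambiguous. A subtler point worth emphasising is that Bob must decode from $\hat x$ without re-scanning $A$, which is why the reduction deliberately places Bob's row on the same vector $v_{i^*}$ used by Alice and arranges for a single inner product $v_{i^*}^\top\hat x$ to carry the answer.
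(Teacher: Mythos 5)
Your proof is correct and follows essentially the same route as the paper: a one-way reduction from \textsc{Index} using a Chernoff-plus-union-bound family of $\min\{n,2^{\Omega(d)}\}$ well-separated $d$-dimensional vectors, together with a constant-factor gap in the achievable $\ell_\infty$ cost between the two cases. The only differences are in the gadget---you use $\pm1$ near-orthogonal vectors with targets $0$ and $d$ and decode from the single inner product $v_{i^*}^\top\hat{x}$ (making explicit that Bob need not re-read $A$), whereas the paper uses $\{0,1\}$ strings with an all-ones target, appends the negated query row, and distinguishes via the attained cost.
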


\section{Experimental Evaluation} \label{sec: experiments}


\eat{
\eat{
}

\begin{figure*}[t]
        \centering
        \subfloat[{U.S. Census Data}]{
        \includegraphics[trim = 0.1in 2.5in 0.1in 2.5in, clip, width=0.5\linewidth]{figures/census/1e5/T_combined_v_thld.pdf}
        \label{fig: time_thld_census}
        }
        \subfloat[YearPredictionMSD]{
        \includegraphics[trim = 0.1in 2.5in 0.1in 2.5in, clip, width=0.5\linewidth]{figures/years/1e3/T_combined_v_thld.pdf}
        \label{fig: time_years}
        }
        \caption{Combined Time vs Threshold}\label{fig: time}
\end{figure*}

%


\begin{figure*}[t]
        \centering
        \subfloat[{U.S. Census Data}]{
        \includegraphics[trim = 0.1in 2.5in 0.1in 2.5in, clip, width=0.5\linewidth]{figures/census/1e5/space_v_thld.pdf}
        \label{fig: space_thld_census}}
        \subfloat[YearPredictionMSD]{
        \includegraphics[trim = 0.1in 2.5in 0.1in 2.5in, clip, width=0.5\linewidth]{figures/years/1e3/space_v_thld.pdf}
        \label{fig: space_thld_years}}
        \caption{Space vs Threshold}\label{fig: space}
\end{figure*}

We evaluate the use of $\ell_p$ high leverage rows in order
to approximate $\ell_{\infty}$-regression.
It is straightforward to model $\ell_{\infty}$-regression as a linear
program in the offline setting.
We use this to measure the accuracy of our algorithm and focus on comparing
                computed offline.
The implementation is carried out in the single pass streaming model without the
row reduction on each block.
As mentioned in Remark \ref{rem: tree-stream}, this is equivalent to the distributed model with repeated
merge-and-reduce (each block of input is reduced and then merged to solve the LP).

\textbf{Datasets.} We  test the methods on two datasets.
The first is a subset of the \emph{US Census Data} containing 1 million rows and 11 columns\footnote{http://www.census.gov/census2000/PUMS5.html}.
This data is sparse and blocks of size $10^5$ were used.
The second is a subset of \emph{YearPredictionMSD} dataset from the UCI Machine Learning Repository \footnote{https://archive.ics.uci.edu/ml/datasets/YearPredictionMSD}.
This has roughly 500,000 rows and 90 columns.
Due to the density we focused only on the first 10,000 rows and used blocks of size 1000.
We present main results here, with additional experiments varying
block sizes  in \textbf{Section \ref{sec: experimental results
appendices}}.

\textbf{Methods.}  We analyze three instantiations of our methods.
First, we use the \texttt{SPC3} algorithm of \cite{wcb_alg} to compute an $\ell_1$-well-conditioned basis.
This method is randomized as it employs the Sparse Cauchy Transform and is only an $\ell_1$-well-conditioned basis with constant probability.
However, we also implemented a fast check on whether transform was
indeed a well-conditioned.
In our test, \texttt{SPC3} provided a $(d^{2.5}, 1,1)$-well-conditioned basis
almost always. 
In addition, we also compute a $(\sqrt{d}, 1, 2)$-well-conditioned basis deterministically using the QR-decomposition; this can be used to crudely approximate an $\ell_p$ well-conditioned basis.
This is labelled \texttt{Orth}.
A further method is \texttt{SPC1++}, a repeated instantiation of \texttt{SPC1} of \cite{wcb_alg} $O(\log n)$ times with different Cauchy sketches.
Individually, one transform is an $\ell_1$ well-conditioned basis with probability $1 - 1/\text{poly}(1/d)$ but this is increased to $1 - 1/\text{poly}(n)$ by these repetitions and can ideally thus be used to bypass the expensive basis computation.
We did not see this behavior consistently in the sparse dataset,
however the speedup is significant for dense data---we discuss this
in more detail below.
We also used two baseline methods: one in which no conditioning was
performed (labelled \texttt{None}), that is, instead of using the local
well-conditioned bases $U$, only the block of the data matrix $A$ was
used to calculate rows of high leverage; and a second which randomly
samples a small number of rows (no more than the size of one block)
from the data (\texttt{Sample}).

\eat{
The $\texttt{SCP3}$ and QR methods are shown to perform similarly with respect

 fully deterministic algorithm computes a well-conditioned
basis on each block using an algorithm of Mahoney \cite{wcb_alg} or an orthonormal basis for each
block (which provides a $(\sqrt{d},1,2)$ well-conditioned basis to approximate an $\ell_p$ basis).
These methods are shown to perform similarly.  The randomized
instantiation performs a Sparse Cauchy Transform \cite{mm13}
$S$ on a block $B$ and
then factors $SB = QR$ by the QR-decomposition.  It is known that
$BR^{-1}$ is a well-conditioned basis with probability $1-
\texttt{poly}(1/d)$ \cite{fast_cauchy} and we use this to bypass an
expensive calculation for the well-conditioned basis, thus potentially
speeding up the process of finding high leverage rows.  For each block
$O(\log n)$ different Sparse Cauchy Transforms  are computed and high
$\ell_1$ leverage scores are found in the matrix $BR^{-1}$.  Then a
set of rows is kept which contains all those which have high leverage
in the true matrix $B$ with probability $1-1/\texttt{poly}(n)$.
Proceeding in this fashion for all blocks of $A$ we find a subset of
rows of $A$ which include all high leverage rows of $A$.
Solving regression on this subset of rows will result in an additive
error solution as only low leverage rows were removed from $A$.
The tradeoff we observe is that although the randomized method is faster, more rows of $A$ are stored.
}

The experiments were implemented in MATLAB using a 2017 macBook Pro.
In each experiment we fix an input data matrix and vary thresholds over an interval, measuring both accuracy and storage.
The sampling method is only illustrated in the \textit{Accuracy-Space Tradeoff} plot in Figure \ref{fig: acc} as this method depends on the number of rows uniformly sampled as opposed to the threshold parameter.

\textbf{Results on Time Complexity.}
Figure~\ref{fig: time} shows the total time taken to solve the full
problem using each method.  Clearly, if any scheme reduces the size of
an instance then it also reduces the time taken to solve the linear
program, yet we show that the {\em total time} taken to compute high
leverage rows and then perform regression is noticeably lower in most cases.
This is achieved in both the real time cost for the regression problem and
also the combined time cost for regression and finding high leverage
rows. However, since the operation per block of input is relatively consistent, we demonstrate the combined time.
In Figure \ref{fig: time_thld_census} we see that the quickest method,
as expected, is simply to do no conditioning.
Both \texttt{SPC3} and \texttt{Orth} have comparable time costs which are significantly faster than \texttt{SPC1++}.
The graphs begin to flatten when the time cost for finding the high leverage rows is significantly 
dominating the time to solve the regression instance.
In contrast, we see that on the dense data in Figure \ref{fig: time_years} that \texttt{SPC3} is vastly slower than all other methods.
Therefore the use of \texttt{SPC1++} in this regime is justified.

\eat{
Whilst not quite achieving the same time cost as solving the offline regression problem, the Sparse Cauchy Transform incurs a vast speedup compared to the explicit well-conditioned basis computation in the sparse case and is a marked improvement in the dense case.  Whilst Sparse Cauchy is marginally slower than computing an orthonormal basis in total for the dense case, this behaviour is not apparent in the sparse dataset.  Also, the orthonormal basis is only an approximate well-conditioned basis and we expect the seemingly good performance is a feature of MATLAB optimisations in the orthonormal basis on a small dense matrix because this behaviour is not consistent with the larger sparse data.
}

\textbf{Results on Space Costs.}
As expected, storage decreases as thresholds increase and eventually becomes zero when the threshold is too large so that no rows are kept.
Figure \ref{fig: space_thld_census} shows again that both $\texttt{SPC3}$ and $\texttt{Orth}$ perform similarly.
However, observe that $\texttt{SPC1++}$ admits a higher storage cost: under this method, each block of a fixed size is reduced to roughly $d \log d$ as opposed to the chosen block size.
Even for larger threshold values, rows which do not appear to have high leverage in a block may seem heavy in a smaller block.
Furthermore, a new transform is computed $O(\log n)$ times and a set of heavy rows is found for every transform which are then combined under the repetitions to find a large superset of heavy rows.
This is visible in Figure \ref{fig: space_thld_years}.
We later see how this affects performance, but for storage cost alone,
it seems that \texttt{Orth} is best.

\textbf{Accuracy compared to threshold.}
It is expected that for small thresholds many rows are kept and for large thresholds few, or possibly no rows are kept.
In Figure~\ref{fig: acc_vs_thld} we see that computing the well-conditioned bases is in fact justified because they are much more stable with respect to increasing the threshold than all other methods.
While not computing a transform (\texttt{None}) seems attractive due to time, it quickly fails to identify data points important to the problem as the threshold increases as seen in Figure~\ref{fig: acc_thld_census}.
Although this becomes less clear in the dense dataset, Figure~\ref{fig: acc_thld_years}, it is clear that data could quite easily be encountered for which computing no transform would be extremely poor.
This is dealt with in the concluding summary.


\begin{figure*}[t]
        \centering
        \subfloat[{U.S. Census Data}]{
        \includegraphics[trim = 0.1in 2.5in 0.1in 2.5in, clip, width=0.5\linewidth]{figures/census/1e5/acc_v_thld.pdf}
        \label{fig: acc_thld_census}}
        \subfloat[YearPredictionMSD]{
        \includegraphics[trim = 0.1in 2.5in 0.1in 2.5in, clip, width=0.5\linewidth]{figures/years/1e3/acc_v_thld.pdf}
        \label{fig: acc_thld_years}}
        \caption{Accuracy vs Threshold}\label{fig: acc_vs_thld}
\end{figure*}


\begin{figure*}[t]
        \centering
        \subfloat[{U.S. Census Data}]{
        \includegraphics[trim = 0.1in 2.5in 0.1in 2.5in, clip, width=0.5\linewidth]{figures/census/1e5/acc_v_space.pdf}
        \label{fig: acc_space_census}}
        \subfloat[YearPredictionMSD]{
        \includegraphics[trim = 0.1in 2.5in 0.1in 2.5in, clip, width=0.5\linewidth]{figures/years/1e3/acc_v_space.pdf}
        \label{fig: acc_space_years}}
        \caption{Accuracy vs Space}\label{fig: acc}
\end{figure*}

\textbf{Accuracy compared to storage.} Figure~\ref{fig: acc} measures the
algorithms using the ratio $\hat{f}/f^*$, where $\hat{f}$ is the
solution found from the reduced linear program, and $f^*$ is the
optimal solution we are trying to estimate (i.e., we would like this
ratio to be close to 1).
All three methods achieve comparable accuracy on a large number of rows and are converging on perfect approximation.
However the transition from a low accuracy to a high accuracy approximation is less marked in the dense YearPredictionsMSD data compared to the sparse Census data; there may be scope for further work to exploit sparsity.
As previously discussed, the \texttt{SPC1++} method has a higher storage cost for fixed thresholds.
Hence, for large threshold values, the \texttt{SPC1++} method begins keeping rows when the other methods would not as more rows appear to have high leverage in the smaller blocks.
Therefore,  \texttt{SPC1++} has an increased accuracy for high thresholds and small storage as the deterministic methods still approximate with the zero matrix.
This disparity becomes almost negligible once the other methods begin to keep rows and both schemes admit high accuracy approximations.

\textbf{Experimental Summary.}
In these experiments we have shown that using a well-conditioned basis can be a useful method for deciding, upon seeing the data, a subset which can be kept and used to compute an accurate approximation.
In particular, using the $\texttt{SPC3}$ method exhibits attractive performance in both datasets.
This is due to its robust performance in identifying rows of high leverage to form a data summary which accurately approximates the problem but also how the performance degrades less slowly in comparison to using a QR-decomposition as the basis, or in fact any of the other methods.
However, the main issue encountered using the $\texttt{SPC3}$ method is the time that it can take to compute the basis.
To overcome this we have shown that under certain parameter settings the $\texttt{SPC1++}$ method can be used to speed up the process of computing a basis at the expense of slightly higher space cost.
Furthermore, it is important to note that while uniform sampling or
using the identity basis might seem attractive, it is quite simple to
find adversarial data for which both of these methods fail, or at least perform poorly.
For example, consider a dataset $X \in \R^{n \times d}$ which is then augmented by appending the identity and appending zeros so that these are the only vectors in the new directions.
That is, set $X' = [X, \mathbf{0}_{n \times k} ; \mathbf{0}_{k' \times d}, I_{k' \times k}]$.
The appended sparse vectors in the identity will have leverage of $1$ so will be detected by the well-conditioned basis methods, however there is no guarantee that using sampling will identify these points, particularly when $k'$ is small compared to $n$ and $d$.
Similarly, if the magnitude of some of the vectors in a block severely outweighs the sparse vectors then these will be picked up by the identity basis even if they correspond to previously seen directions.
Due to the problems in constructing accurate summaries when using random sampling or no transformation, our methods are shown to be efficient and accurate alternatives.
To conclude, the efficacy of our approach is most clear in the larger U.S. Census dataset where accuracy of at least $90\%$ can be achieved by computing the $\ell_1$-well-conditioned basis and storing roughly $10^3$ rows: this is a vast improvement over storing the whole data of $10^6$ rows and using an offline solver.
Even storage slightly exceeding $10^3$ can yield a summary which provides approximations approaching $100\%$ accuracy.
Such significant storage savings show that this general approach could be useful in large-scale applications.}



To validate our approach, we evaluate the use of high $\ell_p$-leverage rows in order
to approximate $\ell_{\infty}$-regression\footnote{Code available at \url{https://github.com/c-dickens/stream-summaries-high-lev-rows}}, focusing particularly on the cases using 
$\ell_1$ and $\ell_2$ well-conditioned bases.
It is straightforward to model $\ell_{\infty}$-regression as a linear
program in the offline setting.
We use this to measure the accuracy of our algorithm.
The implementation is carried out in the single pass streaming model with a fixed space
constraint, $m$, and threshold, $\alpha^p / m$ for both 
conditioning methods to ensure the
number of rows kept in the summary did not exceed $m$. 
Recall from Remark \ref{rem: tree-stream} that the single-pass streaming implementation is equivalent to the distributed model with only one participant applying merge-and-reduce, so this experiment can also be seen as a distributed computation with the merge step being the appending of new rows and the reduce step being the thresholding in the new well-conditioned basis.

\textbf{Methods.}  We analyze two instantiations of our methods
based on how we find a well-conditioned basis and repeat
 over 5 independent trials
with random permutations of the data.
The methods are as follows:

\texttt{SPC3}: 
We use an algorithm of \citet{wcb_alg} to compute an $\ell_1$-wcb.
This method is randomized as it employs the Sparse Cauchy Transform and is only an $\ell_1$-well-conditioned basis with constant probability
We also implemented a check condition which showed that almost always, roughly $99\%$
of the time, the 
randomized construction \texttt{SPC3} would return a $(d^{2.5}, 1,1)$-well-conditioned basis.
Thus, we bypassed this check in our experiment to ensure quick update times.

\texttt{Orth}:
In addition, we also used an orthonormal basis using the QR decomposition
which is an $\ell_2$-wcb.
This method is fully deterministic and outputs a  $(\sqrt{d}, 1, 2)$-well-
conditioned basis.

\texttt{Sample}:
A sample of the data is chosen uniformly at random and the retained summary
has size exactly $m$.

\texttt{Identity}:
No conditioning is performed.
For a block $B$ of the input, the \textit{surrogate} scores
$w_i(B) = \|\mathbf{e}_i^T B \|_2^2 / \|B\|_F^2$ are used to determine
which rows to keep.
As the sum of these $w_i(B)$ is $1$, we keep all rows which have 
$w_i(B) > 2/m$.
Since no more than $m/2$ of the rows can satisfy $w_i(B) > 2/m$, the 
size of the stored subset of rows can be controlled and cannot grow 
too large.

\begin{rem}
The \texttt{Identity} method keeps only the rows with high
norm which contrasts our conditioning approach: if most of the 
mass of the block is concentrated on a few rows then
these will appear heavy locally despite the possibility that 
they may correspond to previously seen or unimportant directions.
In particular, if these heavy rows significantly outweigh the weight
of some sparse directions in the data it is likely that the sparse 
directions will not be found at all.
For instance, consider data $X \in \R^{n \times d}$ which is then
augmented by appending the identity (and zeros) so that 
these are the only vectors in the new directions.
That is, set $X' = [X, \mathbf{0}_{n \times k} ; \mathbf{0}_{k \times d}, I_{k \times k}]$
and then permute the rows of $X'$. 
The appended sparse vectors from $I_{k \times k}$ will have leverage of 
$1$ so will be detected by the well-conditioned basis methods. 
However there is no guarantee that the \texttt{Identity} method will identify
these directions if the entries
in $X$ significantly outweigh those in $I_{k \times k}$.
In addition, there is also no guarantee that using uniform sampling will 
identify these points, particularly when $k$ is small compared to $n$ and $d$.
So while choosing to do no conditioning seems attractive, this example
shows that doing so may not give any meaningful guarantees and 
hence we prefer the approach in Section \ref{HighLevScores}.
We compare only to these baselines as we are not aware of any other competing
methods in the small memory regime for the $\ell_{\infty}$-regression problem.
\end{rem}

\textbf{Datasets.} We tested the methods on a subset of the \emph{US Census Data} containing 5 million rows and 11 columns\footnote{\url{http://www.census.gov/census2000/PUMS5.html}} and 
 \emph{YearPredictionMSD}\footnote{\url{https://archive.ics.uci.edu/ml/datasets/yearpredictionmsd}} which has roughly 500,000 rows and 90 columns
 (although we focus on a fixed 50,000 row sample so that the LP for regression is tractable: see Figure~\ref{fig: years_regression_time_vs_block_size.pdf} in the Supplementary Material, Appendix \ref{sec: experimental results appendices}).
For the census dataset, space constraints between 50,000 and 500,000
rows were tested and for the YearPredictionsMSD data space budgets were
tested between 2,500 and 25,000.
The general behavior is roughly the same for both datasets so
for brevity we
primarily show the results for \emph{US Census Data}, and defer
corresponding plots for \emph{YearPredictionsMSD} to Appendix~\ref{sec: experimental results appendices}.

\begin{figure*}[t]
\centering
\subfloat[U.S. Census]{\includegraphics[width=0.33\textwidth]{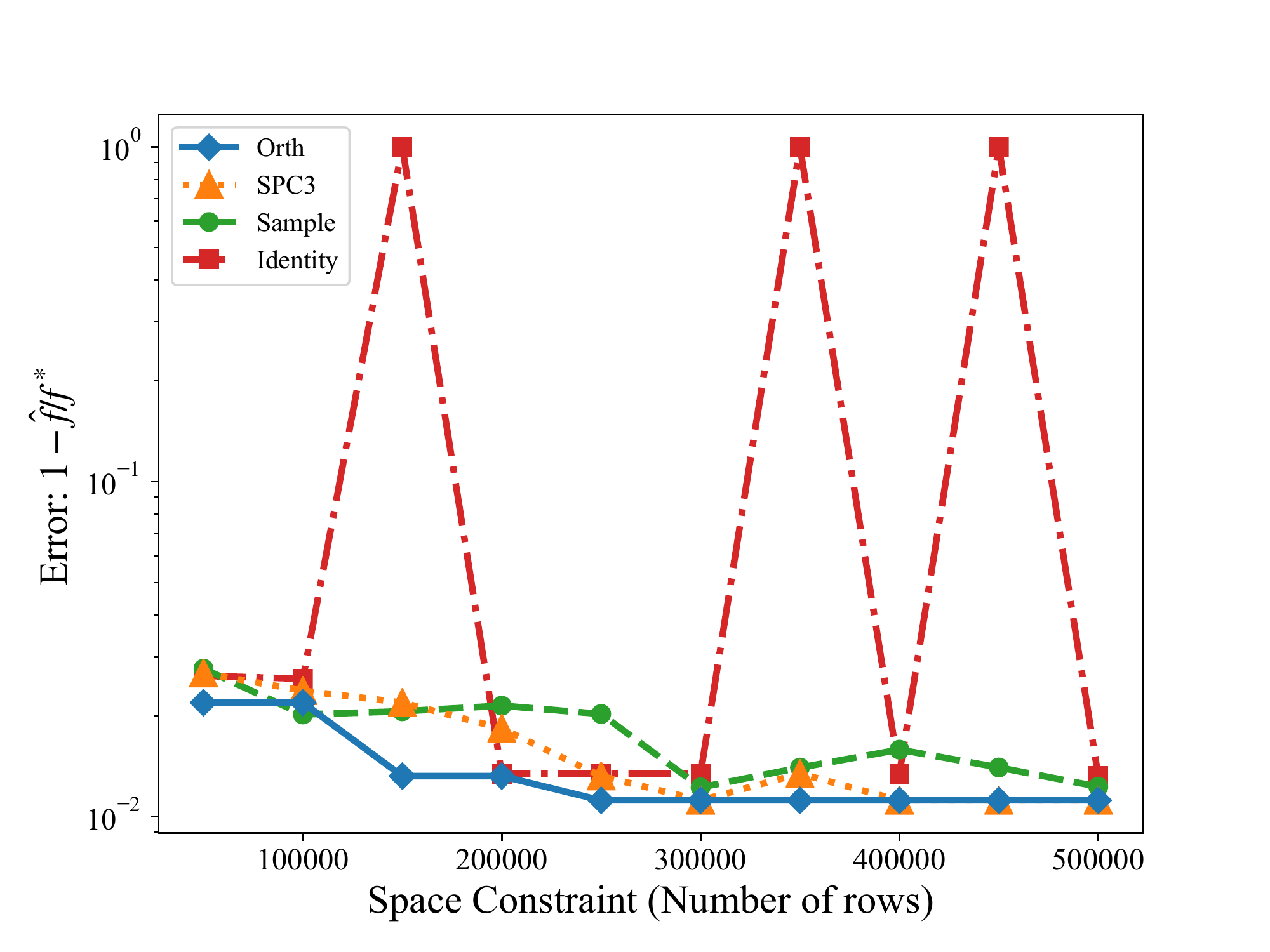}
\label{fig: census_error_vs_block_size.pdf}}%
\subfloat[YearPredictionMSD]{
\includegraphics[width=0.32\textwidth]{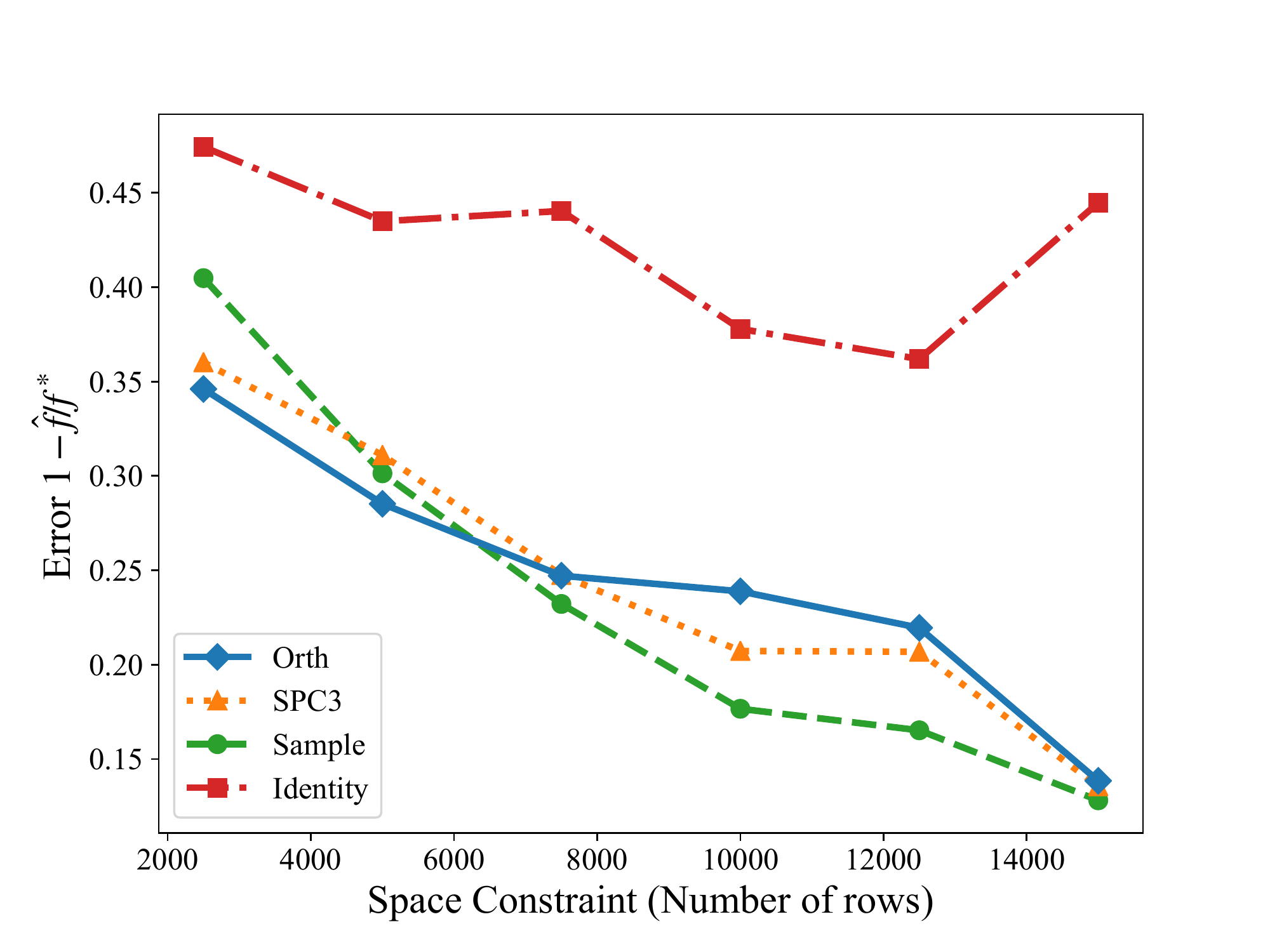}
\label{fig: years_error_vs_block_size.pdf}}%
\subfloat[Max summary size vs space constraint]{
\includegraphics[width=0.32\textwidth]{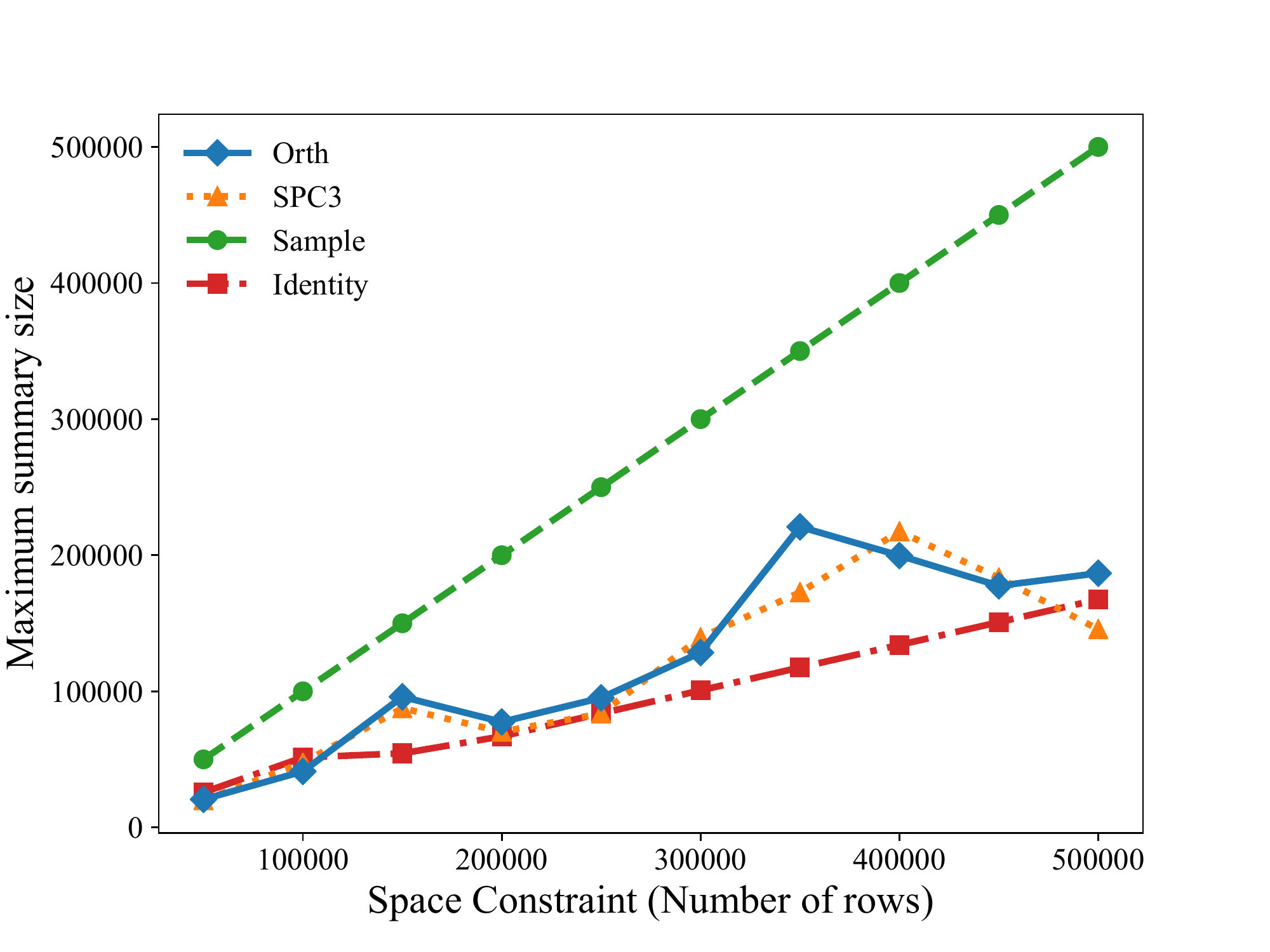}
\label{fig: census_space_vs_block_size.pdf}}

\caption{Error vs Space Constraint in (a) and (b) and Maximum Summary Size vs Space Constraint (c).  Total input size is $5000000 \times 11$.}
\end{figure*}

\textbf{Results on approximation error compared to storage}
Let $f^*$ denote the minimal value of the full regression obtained by $x^*$ and let $x'$ be the output of the reduced problem.
The approximate solution to the full problem is then $\hat{f} = \|Ax'-b\|_{\infty}$
and approximation error is measured as $\hat{f}/f^* - 1$
(note that $\hat{f} \ge f^*$).
An error closer to $0$ demonstrates that $\hat{f}$ is roughly the same as $f^*$ so the optimal value is well-approximated.
Figures \ref{fig: census_error_vs_block_size.pdf} and 
\ref{fig: years_error_vs_block_size.pdf}
show that on both datasets the \texttt{Identity} method 
consistently performs poorly while \texttt{Sample} achieves 
comparable accuracy to the conditioning methods.  
Despite the simplicity of uniform sampling to keep a summary, the succeeding sections discuss the increased time and space costs of using such a sample and show that doing so is not favourable.   
Thus, neither of the baseline methods output a summary which can be used to approximate the 
regression problem both \textit{accurately} and \textit{quickly},
hence justifying our use of leverage scores. 
 Our conditioning methods perform particularly well in the \emph{US Census Data} data (Figure \ref{fig: census_error_vs_block_size.pdf})
 with \texttt{Orth} appearing to give the most accurate summary and 
 \texttt{SPC3} performing comparably well but with
 slightly more fluctuation: similar behaviour is observed in the \textit{YearPredictionMSD} (Figure \ref{fig: years_error_vs_block_size.pdf}) data too.
 The conditioning methods are also seen to be robust to the storage constraint, give accurate performance across both datasets using significantly
 less storage than sampling, and give a better estimate in general than doing no conditioning.
 
\eat{

The poor approximation results for both baseline
methods \texttt{Identity} and \texttt{Sample} are seen in Figures
\ref{fig: census_error_vs_block_size.pdf} and \ref{fig: years_error_vs_block_size.pdf}.

Despite the fact that in each of Figures one of the baselines performs well,
it is clear that there is no consistency as
each method is noticeably worse than the conditioning approaches in either regime.
This highlights that neither the \texttt{Identity} nor \texttt{Sample} methods repeatedly give a good approximation and justifies our approach using 
leverage scores.
Uniform sampling is in general a constant factor less reliable on the
U.S. Census data. However, it achieves significantly worse accuracy on YearPredictionMSD data, even with a larger summary (shown in Figure \ref{fig:
census_space_vs_block_size.pdf} and discussed further below).

Despite the identity basis having a significantly lower time cost, there are no theoretical reasons to expect it to have low error and the figures illustrate this behaviour.
Uniformly sampling the rows appears to be as accurate as the basis methods, however, in comparison with the space figures, the random sampling keeps exactly as many rows as a block whereas the basis methods gradually prune down from the space budget block size to something noticeably smaller which results in lower time cost as above.
Both of the conditioning methods give good error guarantees which are approaching an error of $10^{-2}$ with a space budget of only $10^5$ out of a possible 5 million rows.}

\textbf{Results on Space Complexity.}
Recall that the space constraint is $m$ rows and throughout the stream, after a local computation, the merge step concatenates more rows to the existing summary until
the bound $m$ is met, prior to computing the next reduction.
During the initialization of the block $A'$ by Algorithm \ref{DetLevScoreAlg},
the number of stored rows is exactly $m$. However, we measure the
maximum number of rows kept in a summary after every reduction step
to understand how large the returned summary can grow.
As seen in Figure \ref{fig: census_space_vs_block_size.pdf},
\texttt{Identity} keeps the smallest summary but there is no reason
to expect it has kept the most important rows.
In contrast, if $m$ is the bound on the summary size, then uniform sampling
always returns a summary of size exactly $m$.
However, we see that this is not optimal as both conditioning methods can 
return a set of rows which are pruned at every iteration to roughly 
half the size and contains only the most important rows in that block.
Both conditioning methods exhibit similar behavior and are bounded 
between both \texttt{Sample} and \texttt{Identity} methods.
Therefore, both of the conditioning methods
respect the theoretical bound and, crucially, return a summary which is 
sublinear in the space constraint and hence a significantly smaller
fraction of the input size.

\eat{
In contrast 

As seen in Figure \ref{fig: census_space_vs_block_size.pdf}, the sampling procedure always keeps a summary of size $m$ (where $m$ is the number of rows to randomly choose), however, despite having a space budget of $m$, the conditioning methods always prune the stored set of rows down to roughly half the size.
Therefore, only the most important rows are kept at each step so much of the information which is not informative can be discarded and a summary can be returned which is a fraction of the space limit as well as the overall size of the problem.

While using no conditioning seems attractive from this perspective we will later discuss why this is not desirable.}

\begin{figure*}[t] 
\subfloat[Time to compute local basis \label{fig: census_basis_time_vs_block_size.pdf}]{%
\includegraphics[width=0.33\textwidth]{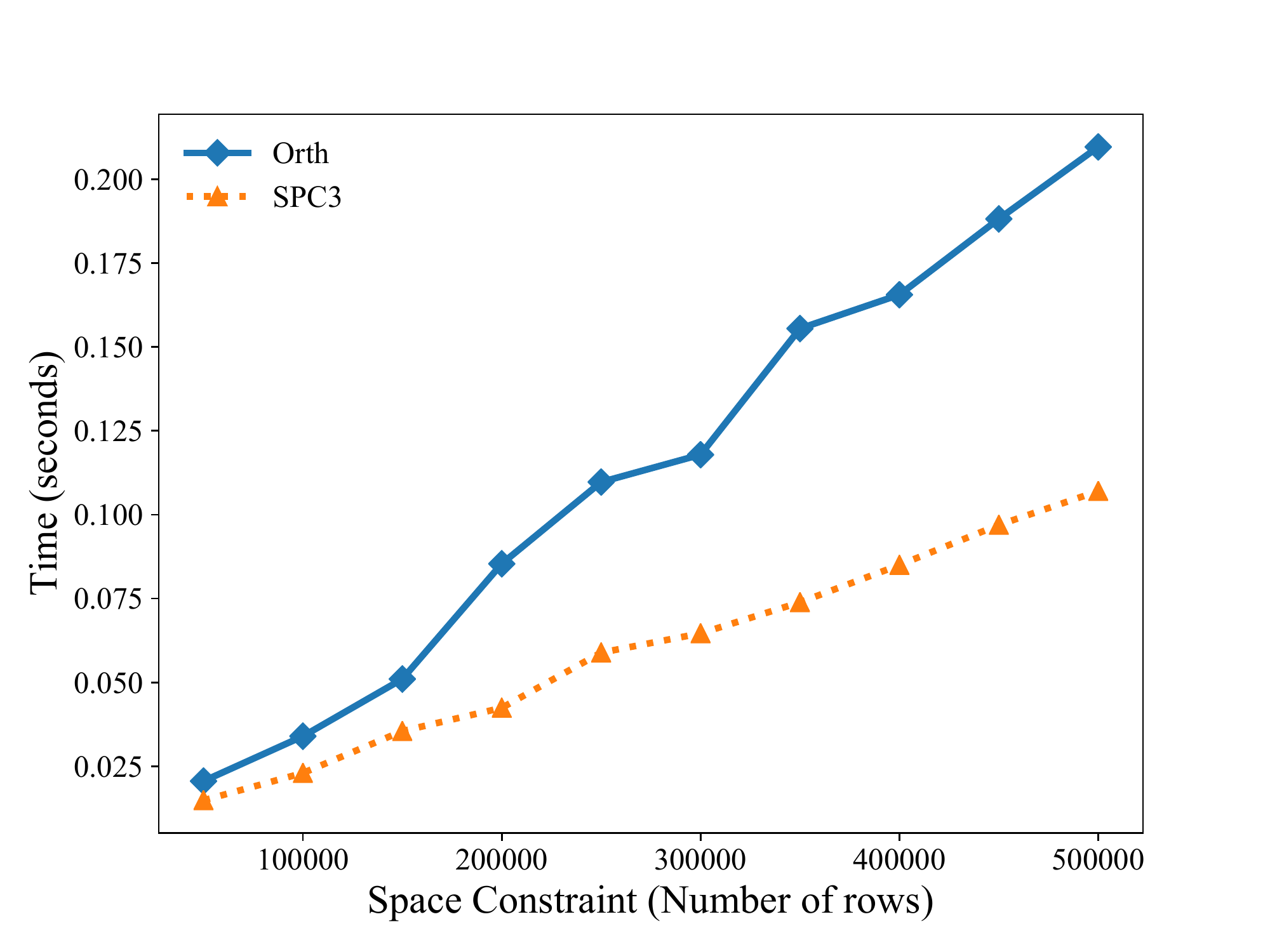}}%
\subfloat[Solution time for $\ell_{\infty}$-regression \label{fig: census_regression_time_vs_block_size.pdf}]{%
\includegraphics[width=0.33\textwidth]{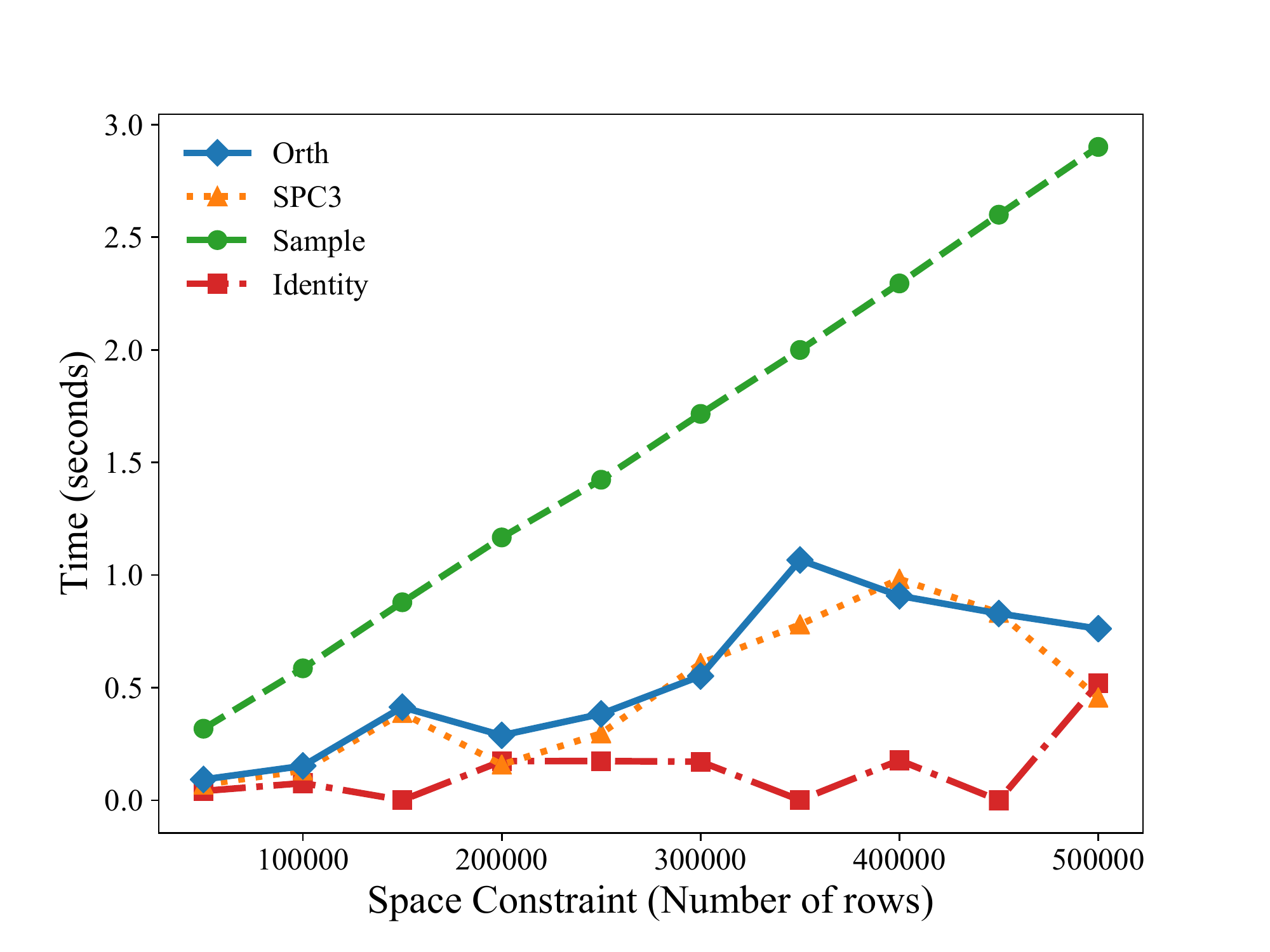}}%
\subfloat[Total time \label{fig: census_total_time_vs_block_size.pdf}]{%
\includegraphics[width=0.33\textwidth]{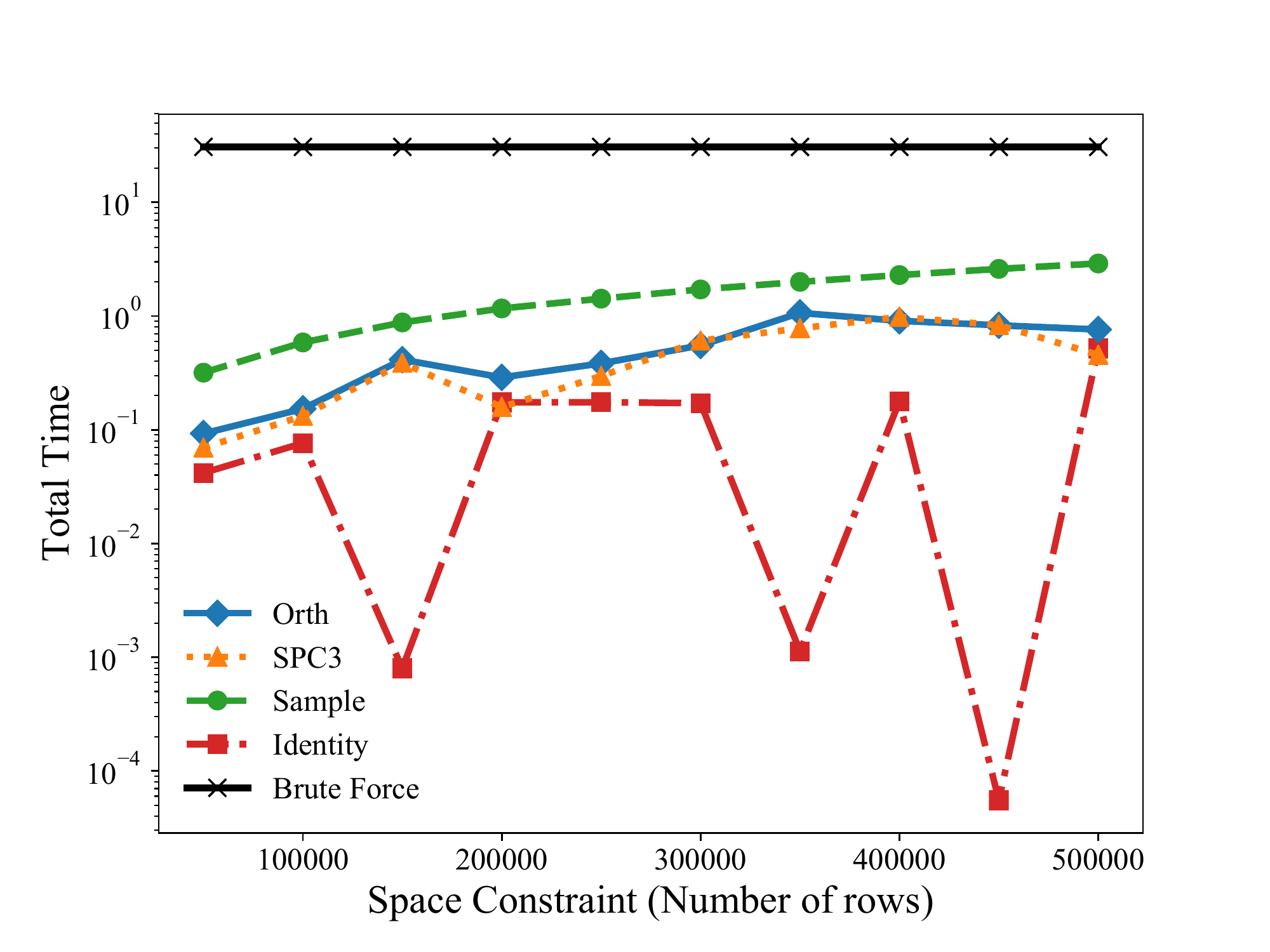}}
\caption{Computation Times compared to summary size\label{fig: summary_times}}
\end{figure*}

\textbf{Results on Time Complexity.}
There are three time costs measured.  
The first is the update time taken to compute the local well-conditioned basis which
is theoretically $O(md^2 + md^5 \log m)$ by Theorem \ref{WCBthm}.
However, the two bases that we test are an orthonormal basis, computable
 in time $O(md^2)$ and the \texttt{SPC3} transform which takes time
  $O(\text{nnz}(B) \log m)$ for a block $B$ with $m$ rows and
  $\text{nnz}(B)$ non-zero entries.
  Figure \ref{fig: census_basis_time_vs_block_size.pdf} demonstrates that
  \texttt{SPC3} is faster than \texttt{Orth} on this data in practice but
  this small absolute difference becomes negligible over the entirety of the stream
  as seen in Figure \ref{fig: census_total_time_vs_block_size.pdf}.
  The query time in Figure \ref{fig: census_regression_time_vs_block_size.pdf}
  is roughly proportional to the summary size in all instances but here
  the conditioning methods perform noticeably better due to the smaller summary size that is
  returned as discussed in the previous section.
  However, as seen in Figure \ref{fig: years_regression_time_vs_block_size.pdf}, 
  (Supplementary Material, Appendix \ref{sec: experimental results appendices} ) this disparity becomes 
  hugely significant on higher dimensionality data due to the increased size summary retained 
  by sampling, further justifying our approach of pruning rows at every stage.
  While \texttt{Identity} appears to have fast query time, this is due to the summary being smaller.
  Although it may seem that for smaller summaries more local bases need to be
  computed and this time could prohibitively increase over the stream,
   Figure \ref{fig: census_total_time_vs_block_size.pdf} demonstrates 
   that even using small blocks does not cause the \textit{overall time} (to process the 
  stream and produce an approximate query) to increase too much.
  Hence, an approximation can be obtained which is highly accurate, and 
  in total time faster than the brute force solver.
  
\eat{   
As a result, the \texttt{SPC3} transform is noticeably faster at computing the local bases.
This is illustrated in Figure \ref{fig: census_basis_time_vs_block_size.pdf} for which bases were computed for blocks of size varying over 10,000 to 500,000 rows.
This was repeated 5 times and the average time has been reported on the graph.
However, as seen in Figures \ref{fig: census_regression_time_vs_block_size.pdf}, the decision to use either \texttt{SPC3} or an orthonormal basis does not seem to have too large an impact in solving the regression (due to the relatively similar storage costs) nor over the entire data stream.
Therefore an orthonormal basis, despite being slower in this regime, could be used to ensure complete determinism if necessary.
Notice that the time taken to solve the regression is significantly lower when the basis methods are used in comparison to uniform sampling.  
This is again due to the storage cost associated with the method (Figure \ref{figures/census/census_space_vs_block_size.pdf} - sampling keeps an arbitrary subset of the rows but the basis methods keep a subset of specially chosen rows (which is in fact smaller than the block size budget) known to be important in the full dataset.
Hence, the reduction in storage cost causes a reduction in time cost to solve the linear program.
Although the fastest method is to do no conditioning and use the identity basis, there is no theory which supports this method: it does not correctly identify locally important rows so, again due to the low space cost, the time to solve the regression and complete the full process are significantly lower.}

\textbf{Experimental Summary.}
While it might seem attractive not to perform any conditioning on 
the matrix and just pick heavy rows, 
our experiments show that this strategy is not effective in practice,
and delivers poor accuracy. 
Although a simple sample of randomly chosen rows can be easily
maintained, this appears less useful due to the increased
time costs associated with larger summaries when conditioning methods
output a similar estimate in less time over the entire stream.
As the $\ell_{\infty}$-regression problems depend only on a few rows
of the data there are cases when uniform sampling can perform well:
if many of the critical rows look similar then there is a chance that uniform
sampling will select some examples.
In this case, the leverage of the important 
direction is divided across the repetitions,
and so it is harder to ensure that desired direction is identified. 
Despite this potential drawback we have shown that  both \texttt{Orth} 
and \texttt{SPC3} can be used to find accurate summaries which 
perform robustly across each of the measures we have tested.
It appears that \texttt{SPC3} performs comparably to \texttt{Orth}; both are
relatively quick to compute and admit accurate summaries in similar space.
In particular, both conditioning methods return summaries which are a fraction
of the space budget and hence highly sublinear in the input size, 
which give accurate approximations and are robust to the concatenation of
new rows. 
All of these factors make the conditioning method fast in practice to both find the important rows in the data and then compute the reduced regression problem with high accuracy.

Due to the problems in constructing summaries which can be used to solve regression
quickly and accurately when using random sampling or no transformation, our methods are shown to be efficient and accurate alternatives.
Our approach is vindicated both theoretically and practically: this is most clear in the U.S. Census dataset where small error can be achieved using a summary roughly $2\%$ the size of the data.
This also results in an overall speedup as solving the optimization on the reduced set is much faster than solving on the full problem. 
Such significant savings show that this general approach can be useful in large-scale applications.

\clearpage

\section*{Acknowledgements}
The work of G. Cormode and C. Dickens is supported by
European Research Council grant ERC-2014-CoG 647557 and
The Alan Turing Institute under the EPSRC grant EP/N510129/1. 
D. Woodruff would like to acknowledge the support by the National 
Science Foundation under Grant No. CCF-1815840.

\bibliography{deterministic}
\bibliographystyle{icml2018}

\clearpage

\appendix

\section*{Supplementary Material for Leveraging Well-Conditioned Bases: Streaming and \\
        Distributed Summaries in Minkowski $p$-Norms}

\section{Proofs for Section \ref{HighLevScores}} \label{sec: appendix high lev}

\begin{Lemma} \label{lem: lev_scores_drop}
Denote the $i$th global leverage score of $A$ by $w_i$ and its associated
local leverage score in a block of input $A$ be denoted $\hat{w}_k$.
Then $w_i / \text{poly}(d) \le \hat{w}_k$.  In particular, $w_i /
(d \alpha^{p} \beta) \le \hat{w}_k$.

\end{Lemma}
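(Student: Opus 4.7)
The plan is to express the local leverage of row $k$ in terms of the global leverage of the same row via a change-of-basis matrix between the two well-conditioned bases, and then to control this change of basis using the $\beta$-property of the \emph{local} basis.

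First, I would set up notation. Let $U = AR$ be the global $(\alpha,\beta,p)$-well-conditioned basis for $A$, so that $w_i = \|e_i^T U\|_p^p$. Let $B$ denote the block of rows of $A$ in which the row indexed globally by $i$ appears with local index $k$, and let $\hat U = B R_B$ be the local wcb, so $\hat w_k = \|e_k^T \hat U\|_p^p$. Write $U_B$ for the submatrix of $U$ consisting of the rows indexed by $B$; then $U_B = BR$, and both $U_B$ and $\hat U$ span the column space of $B$. Consequently there is an invertible $C \in \R^{d\times d}$ (explicitly $C = R_B^{-1} R$) with $U_B = \hat U C$, and row-by-row $U_i = \hat U_k\, C$, where I write $U_i$ and $\hat U_k$ for the corresponding rows.

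Second, I would bound $w_i$ entry by entry. For each $j \in [d]$ we have $(U_i)_j = \hat U_k \cdot C_{:,j}$, so H\"older's inequality gives $|(U_i)_j|^p \le \|\hat U_k\|_p^p \, \|C_{:,j}\|_q^p$. The key step is to control $\|C_{:,j}\|_q$ using the $\beta$-property of the \emph{local} basis applied to the vector $C_{:,j}$: since $\hat U$ is an $(\alpha,\beta,p)$-wcb for its column span, $\|C_{:,j}\|_q \le \beta \|\hat U C_{:,j}\|_p = \beta \|U_B e_j\|_p$. Summing the resulting bound over $j$,
\[
w_i \;=\; \sum_j |(U_i)_j|^p \;\le\; \beta^p \, \hat w_k \sum_j \|U_B e_j\|_p^p \;=\; \beta^p\, \hat w_k\, \|U_B\|_p^p.
\]
Since $U_B$ is a row-submatrix of $U$ we have $\|U_B\|_p^p \le \|U\|_p^p \le \alpha^p$, giving $\hat w_k \ge w_i/(\alpha\beta)^p$. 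By Theorem~\ref{WCBthm} the factor $(\alpha\beta)^p$ is $\operatorname{poly}(d)$ in each of the cases $p<2$, $p=2$, $p>2$, which is the conclusion of the lemma; the exact form $w_i / (d\alpha^p \beta)$ stated in the lemma follows from an application of Fact~\ref{LevBound2} to $\hat U$ in place of the entrywise H\"older step, at the cost of absorbing a factor of $d$ from the column count.

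The main obstacle is that the local basis $\hat U$ is defined intrinsically to $B$ and need not be a well-conditioned basis for $U_B$; one therefore cannot invoke the $\beta$-property of the \emph{global} basis, and must instead route everything through the local $\beta$-property by means of the change-of-basis $C$. The identification $\hat U C = U_B$ is what makes this rigorous, turning the seemingly non-local comparison between $w_i$ and $\hat w_k$ into a bound that uses only quantities intrinsic to the block together with the global $\alpha$-bound $\|U\|_p^p \le \alpha^p$.
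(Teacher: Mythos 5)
Your proof is correct, but it takes a genuinely different route from the paper's. The paper's argument picks a single witness coordinate $j$ with $|e_i^T A R e_j|^p \ge w_i/d$ (losing a factor of $d$), forms the specific vector $y = ARe_j$, bounds $\|y\|_p^p \le \alpha^p$, restricts $y$ to the block as $\hat y = \hat A R e_j$, and then applies H\"older together with the local $\beta$-property to the single vector $\hat y$ expressed in the local basis. You instead work at the level of the bases themselves: you factor the row-restriction $U_B = BR$ of the global basis through the local basis via $U_B = \hat U C$, bound every entry $(U_i)_j$ by H\"older against $\hat U_k$, control $\|C_{:,j}\|_q$ by the local $\beta$-property, and then sum over all $d$ columns using $\sum_j \|U_B e_j\|_p^p = \|U_B\|_p^p \le \alpha^p$. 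This yields $\hat w_k \ge w_i/(\alpha\beta)^p$, which avoids the paper's extra factor of $d$ and is, for instance, tighter when $p<2$ (where $\beta=1$); both bounds are of course $\operatorname{poly}(d)$, which is all the lemma needs. One small point to tidy up: the explicit formula $C = R_B^{-1}R$ presupposes that the block has rank $d$ so that $R_B$ is square and invertible; if the block is rank-deficient you should instead just assert the existence of some $C$ with $U_B = \hat U C$, which follows because the column span of $U_B = BR$ is contained in (indeed equals) that of $B$, hence of $\hat U$ --- and nothing in your argument actually uses invertibility of $C$.
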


\begin{myproof}{Lemma}{\ref{lem: lev_scores_drop}}
{Let $U = AR$.  Recall that $w_i = \Vert \mathbf{e}_i^T AR \Vert_p^p$.
Then for some coordinate $j$ we must have $|\mathbf{e}_i^T AR \mathbf{e}_j |^p
\ge w_i/d$.  Taking $\mathbf{x} =  \mathbf{e}_j$ we see that

\begin{equation} \label{bound1}
    |(AR \mathbf{x})_i |^p \ge \frac{w_i}{d}.
\end{equation}
However, Fact \ref{LevBound1} implies :

\begin{equation} \label{bound2}
    \Vert AR \mathbf{x} \Vert_p^p \le \Vert AR \Vert_p^p \le  \alpha^p \le \text{poly}(d).
\end{equation}

Hence, there exists a $\mathbf{y} \in \text{col}(A)$ with $\mathbf{y} = AR\mathbf{x}$ such that $|\mathbf{y}_i|^p \ge w_i/d$ from Equation (\ref{bound1}).  Also, $\Vert \mathbf{y} \Vert_p^p \le \alpha^p$ from Equation (\ref{bound2}).  Thus,
%
    \begin{align}
        \frac{|\mathbf{y}_i|^p}{\Vert \mathbf{y} \Vert_p^p} &\ge \frac{|\mathbf{y}_i|^p}{\alpha^p} \\
        &\ge \frac{w_i}{d \alpha^p} \\
        &\ge \frac{w_i}{\text{poly}(d)}.
    \end{align}

From this we see;
\begin{equation} \label{eq: y_indexBound}
    w_i \le \frac{|\mathbf{y}_i|^p d \alpha^p}{\Vert \mathbf{y} \Vert_p^p}.
\end{equation}

Now, let $B$ be a block of rows from $A$.  We manipulate $B$ by considering it either as an individual matrix or as a coordinate subspace of $A$; i.e all rows are zero except for those contained in $B$ which will be denoted by $\hat{A}$.  Define $\hat{\mathbf{y}} = \hat{A}R\mathbf{x}$.  Then $\hat{\mathbf{y}}_{j'} = \mathbf{y}_{j'}$ when $j'$ is a row from $B$ and $\hat{\mathbf{y}}_{j'} = 0$ otherwise.  Thus, $\|\hat{\mathbf{y}} \|_p^p \le \| \mathbf{y} \|_p^p$ and:

\begin{equation} \label{eq: intermediate}
    w_i \le \frac{|\mathbf{y}_i|^p d \alpha^p}{\Vert \hat{\mathbf{y}} \Vert_p^p}.
\end{equation}

For rows $i$ which are also found in $B$ (indexed as $k$) we see that $|\hat{\mathbf{y}}_{k}|^p = |\mathbf{y}_{i}|^p$.  So, \emph{for such indices}, using Equations (\ref{eq: y_indexBound}) and (\ref{eq: intermediate}):

\begin{equation} \label{eq: intermediate2}
     w_i \le \frac{|\hat{\mathbf{y}}_k|^p d \alpha^p}{\Vert \hat{\mathbf{y}} \Vert_p^p}.
\end{equation}

Since $\hat{\mathbf{y}}$ is the restriction of $\mathbf{y}$ to coordinates of $B$ we can write $\hat{\mathbf{y}} = B \hat{R} \hat{x}$ where $B \hat{R}$ is well-conditioned.  Let $\hat{w}_k = \| \mathbf{e}_k^T B \hat{R} \|_p^p$ be the $k$th local leverage score in $B$.  By applying the same argument as in Fact \ref{LevBound2} it can be shown that $|\hat{\mathbf{y}}_{k}|^p/\|\hat{\mathbf{y}}\|_p^p \le \text{poly}(d) \hat{w}_k$.  Indeed,
%
    \begin{align}
        |\hat{\mathbf{y}}_k|^p &= | ( \hat{A} \hat{R} \hat{x})_k|^p \\
                      &\le \| e_k^T \hat{A} \hat{R} \|_p^p \|\hat{x}\|_q^p  \text{\quad by H\"{o}lder's inequality} \\
                      &\le \hat{w}_k \beta \| B \hat{R} \hat{x} \|_p^p \\
                      &\le  \beta \hat{w}_k \|\hat{\mathbf{y}}\|_p^p.
    \end{align}

The second inequality uses condition 2 from Theorem \ref{WCB} and the fact that $B \hat{R}$ is a well-conditioned basis.
Then using Equation \ref{eq: intermediate2}, the following then proves the latter claim of the lemma:
\begin{equation*}
\frac{w_i}{d \alpha^p} \le \frac{|\hat{\mathbf{y}}_k|^p}{\Vert \hat{\mathbf{y}} \Vert_p^p} \le  \beta \hat{w}_k.
\end{equation*}

Finally, Theorem \ref{WCBthm} states that $\beta$ is at most $\text{poly}(d)$ which proves the result.}
\label{sec: lev_scores_drop}
\end{myproof}

\begin{Lemma} \label{lem: space_local_vs_global}
All global leverage scores above a threshold can be found by computing local
leverage scores and increasing the space complexity by a $\text{poly}(d)$ factor.


\end{Lemma}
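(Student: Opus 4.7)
The plan is to combine the previous Lemma \ref{lem: lev_scores_drop} with Fact \ref{LevBound1} to obtain both a correctness guarantee (the local procedure does not miss any globally heavy row) and a space guarantee (the superset of candidate rows is not too large).

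First, I would use Lemma \ref{lem: lev_scores_drop} as a ``one-way'' comparison: every row $i$ whose global leverage score $w_i$ exceeds a threshold $\tau$ satisfies $\hat{w}_k \ge w_i / (d\alpha^p \beta) \ge \tau / \operatorname{poly}(d)$ in its own block, where the $\operatorname{poly}(d)$ is the explicit $d \alpha^p \beta$ factor from Lemma \ref{lem: lev_scores_drop} (and is bounded by Theorem \ref{WCBthm}). Hence, setting the local threshold to $\tau' := \tau / \operatorname{poly}(d)$ and retaining every row whose local leverage exceeds $\tau'$ produces a superset of the set of rows with $w_i \ge \tau$. This establishes correctness.

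Next, I would bound the size of this superset using Fact \ref{LevBound1}, applied to the local well-conditioned basis. Since the local basis on any block is an $(\alpha,\beta,p)$-well-conditioned basis with $\alpha,\beta = \operatorname{poly}(d)$ (independent of the block size by Theorem \ref{WCBthm}), Fact \ref{LevBound1} implies that at most $\alpha^p / \tau' \le \operatorname{poly}(d)/\tau$ rows in a block can have local leverage exceeding the lowered threshold $\tau'$. Compared with the $\operatorname{poly}(d)/\tau$ bound we would have for the global threshold $\tau$ directly, this is only a $\operatorname{poly}(d)$ multiplicative blowup in stored rows, which is exactly the space inflation claimed.

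Putting these together, the procedure that (i) computes a local well-conditioned basis on each block, (ii) keeps every row whose local leverage exceeds $\tau/\operatorname{poly}(d)$, and (iii) passes this set forward, captures all rows above the global threshold $\tau$ while storing at most $\operatorname{poly}(d)/\tau$ rows per block. The main subtlety I anticipate is uniformity of the $\operatorname{poly}(d)$ factor across blocks: one must verify that both the drop in Lemma \ref{lem: lev_scores_drop} and the cardinality bound in Fact \ref{LevBound1} use the same $(\alpha,\beta,p)$ parameters that do not depend on $n$ or on which block is being processed. This is guaranteed by Theorem \ref{WCBthm}, which yields block-size-independent constants, so the two bounds compose and the final space overhead is a single, uniform $\operatorname{poly}(d)$ factor.
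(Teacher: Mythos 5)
Your proposal is correct and follows essentially the same route as the paper's proof: both use Lemma \ref{lem: lev_scores_drop} to lower the threshold to $\tau/(d\alpha^p\beta)$ so that no globally heavy row is missed, and then apply the counting argument of Fact \ref{LevBound1} to the local well-conditioned basis to bound the number of retained rows by $\alpha^p$ divided by the adjusted threshold, yielding a $d\alpha^p\beta = \operatorname{poly}(d)$ blowup. Your added observation that the $(\alpha,\beta,p)$ parameters are block-size independent by Theorem \ref{WCBthm} is implicit in the paper but worth making explicit.
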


\begin{myproof}{Lemma}{\ref{lem: space_local_vs_global}}{

First we determine the space necessary to find all leverage scores exceeding $\delta$.
Let $I = \{i: w_i > \delta\}$.
Then $\alpha^p \ge \sum_{i=1}^n w_i \ge \sum_{i \in I} w_i \ge \delta |I|$ by arguing as in Fact \ref{LevBound1}
Hence, the space necessary is $|I| \le \alpha^p / \delta$.
Now focus on finding these rows in the streaming fashion.
By Lemma \ref{lem: lev_scores_drop} we see that for rows $k$ in the block which
is stored from the stream we have the property that $w_i / d \alpha^p \beta \le \hat{w}_k$.
Hence, any $w_i > \delta$ results in $\hat{w}_k >\delta / d \alpha^p \beta$ for the local thresholding.
So to keep all such $w_i > \delta$, we must store all $\hat{w}_k > \delta /
d \alpha^{p}\beta = \hat{\delta}$.
Arguing similarly as in Fact \ref{LevBound1} again define $\hat{I} = \{k : \hat{w}_k > \hat{\delta} \}$ so that: $\alpha^p \ge \sum_{k} \hat{w}_k \ge \sum_{k \in \hat{I}} \hat{w}_k \ge \hat{\delta} |\hat{I}|$.
Hence $|\hat{I}| \le \alpha^p / \hat{\delta} = d \alpha^{2p} \beta  / \delta$.
That is, $|\hat{I}| \le d \beta \alpha^p \cdot |I|$ which proves the claim as Theorem \ref{WCBthm} states that all of the parameters are $\text{poly}(d)$.
}
\end{myproof}

\begin{myproof}{Theorem}{\ref{DetLevScoreThm}}{
We claim that the output of Algorithm \ref{DetLevScoreAlg} is a matrix $B$ which
 contains rows of high leverage in $A$.
The algorithm initially reads in $b$ rows and inserts these to matrix $A'$.
A well-conditioned basis $U$ for $A'$ is then computed using Theorem
\ref{WCBthm} and incurs the associated $O(bd^2 + b d^5 \log b)$ time.
The matrix $U$ and $A'$ are passed to Algorithm \ref{RowUpdates} whereby if a
row $i$ in $U$ has local leverage exceeding $\tau$ then row $i$ of $A'$ is kept.
There are at most $\text{poly}(d)/\tau$ of these rows as seen in Lemma
\ref{lem: space_local_vs_global} and the space required is $\text{poly}(d)$ by
the same lemma.
So on the first call to Algorithm~\ref{RowUpdates} a matrix is
returned with rows whose $\ell_p$ local leverage satisfies $w_i / \text{poly}(d)
 \le \hat{w}_i$ (where $w_i$ is the global leverage score and $\hat{w}_i$
 is the associated local leverage score) and only those exceeding $\tau/
 \operatorname{poly}(d)$ are kept.


The algorithm proceeds by repeating this process on a new set of rows from $A$
and an improved matrix $B$ which contains high leverage rows from $A$ already
found.  Proceeding inductively, we see that when Algorithm \ref{RowUpdates} is
called with matrix $[A'; B]$ then a well-conditioned basis $U$ is computed.
Again $[A'; B]_i$ is kept if and only if the local leverage score from $U$,
$w_i(U) > \tau$.  By Lemma \ref{lem: space_local_vs_global} this requires
$\text{poly}(d)$ space and the local leverage score is at least a
$1/ \text{poly}(d)$ factor as large as the global leverage score by Lemma
\ref{lem: lev_scores_drop}.
Repeating  over all blocks $B$ in $A$, only the rows of high leverage are kept.
Any row of leverage smaller than $\tau / \text{poly}(d)$ is ignored so this is
the additive error incurred.}
\end{myproof}

\section{Proofs for Section \ref{Deterministic ell+p subspace embedding section}} \label{sec: DetEllpSEthm}

\textbf{Algorithm and Discussion}

The pseudocode for the first level of the tree structure of the deterministic $\ell_p$ subspace embedding described in Section \ref{Deterministic ell+p subspace embedding section} is given in Algorithm \ref{alg: deterministic-subspace-embedding}.
We use the following notation: $m$ is a counter to index the block of input currently held, denoted $A_{[m]}$, and ranges from $1$ to $n^{1-\gamma}$ for the first level of the tree.
Similarly, $t$ indexes the current summary, $P^{(t)}$ which are all initialized to be an empty matrix.
Again we use the notation $[X ; Y]$ to denote the row-wise concatenation of two matrices $X$ and $Y$ with equal column dimension.

Note that Algorithm \ref{alg: deterministic-subspace-embedding} can be easily distributed as any block of sublinear size can be given to a compute node and then a small-space summary of that block is returned to continue the computation.
In addition, the algorithm can be performed using sublinear space in the streaming model because at any one time a summary $T$ of the input can be computed which is of size $d \times d$.
Upon reading $A_{[1]}$, a small space summary $P^{(1)}$ is computed and stored with the algorithm proceeding to read in $A_{[2]}$.
Similarly, the summary $P^{(2)}$ is computed and if $[P^{(1)} ; P^{(1)}]$ does not exceed the storage bound, then the two summaries are merged and this process is repeated until at some point the storage bound is met.
Once the summary is large enough that it meets the storage bound, it is then \textit{reduced} by performing the well-conditioned basis reduction (line (\ref{alg: wcb})) and the reduced summary is stored with the algorithm continuing to read and summarize input until a corresponding block in the tree is obtained (or the blocks can be combined to terminate the algorithm).

\begin{algorithm}[t]
\caption{Deterministic $\ell_p$ subspace embedding}  \label{alg: deterministic-subspace-embedding}
\begin{algorithmic}[1]
\Procedure{$\ell_p$-SubspaceEmbedding}{$A,p,\gamma< 1)$}
\State Counters $m,t \leftarrow 1$ \label{alg: counter-step}
\State Summaries $P^{(t)} \leftarrow \text{EMPTY}$ for all $t$.
\For{$m = 1:n^{1-\gamma}$}
    \State $A_{[m]} = US$ \# $U$ an $\ell_p$ wcb for $A$ \label{alg: wcb}
   \If{$\text{num. rows}(P^{(t)}) + d \le n^{\gamma}$}
       \State{$P^{(t)} \leftarrow [P^{(t)} ; S]$}
   \Else
       \State{$P^{(t+1)} \leftarrow S$}
       \State{$t \leftarrow t + 1$} \label{alg: final-loop-step}
   \EndIf 
\EndFor
\State{\textit{Merge} all $P^{(t)}$: $T = [P^{(1)} ; \ldots ; P^{(\cdot)}]$}
\State{\textit{Reduce} $T$ by splitting into blocks of $n^{\gamma}$ and repeating lines (\ref{alg: counter-step}) - (\ref{alg: final-loop-step}) with $T$ in place of $A$.}
\State \Return $T$
\EndProcedure
\end{algorithmic}
\end{algorithm}

\begin{myproof}{Theorem}{\ref{DetEllpSEthm}}{
Let $A \in \R^{n \times d}$ and $B \in \R^{n^{\gamma} \times d}$.  We compute an $\ell_p$ well-conditioned basis for $B$ in time $\text{poly}(n^{\gamma}d)$ by Theorem \ref{WCBthm}; so let $B = US$ for $U \in \R^{n^{\gamma} \times d}$ and $S \in \R^{d \times d}$ a change of basis matrix.

From \cite{Mahoney}, $U$ satisfies $\Vert x \Vert_p \le \Vert Ux \Vert_p \le d \Vert x \Vert_p$.  This is because $\Vert x \Vert_2 \le \Vert Ux \Vert_p \le \sqrt{d} \Vert x \Vert_2$.  There are then two cases: if $p < 2$ then

\begin{equation*}
        \frac{\Vert x \Vert_p}{\sqrt{d}} \le \Vert x \Vert_2 \le \Vert Ux \Vert_p \le \sqrt{d}\Vert x \Vert_2 \le \sqrt{d} \Vert x \Vert_p
\end{equation*}

so that $\Vert x \Vert_p \le \Vert  Ux \Vert_p \le d \Vert x \Vert_p$ by rescaling by $\sqrt{d}$.  The third inequality is from \cite{Mahoney}.  Similarly, if $p > 2$ then

\begin{equation*}
        \Vert x \Vert_p \le \Vert x \Vert_2 \le \Vert Ux \Vert_p \le \sqrt{d}\Vert x \Vert_2 \le d \Vert x \Vert_p
\end{equation*}

from which $\Vert x \Vert_p \le \Vert  Ux \Vert_p \le d \Vert x \Vert_p$.  Next, the algorithm ignores $U$ and retains only $S$ after computing the well-conditioned basis.  Using the above two bounds we readily see that $\Vert Sx \Vert_p \le \Vert USx \Vert_p = \Vert Bx \Vert_p$.  Also, $\Vert Sx \Vert_p \ge \Vert USx \Vert_p / d = \Vert Bx \Vert_p / d.$  Now we have obtained a matrix $S$ which satisfies:

\begin{equation}
\frac{\Vert Bx \Vert_p}{d} \le \Vert S x \Vert_p \le \Vert Bx \Vert_p.
\end{equation}
So $\Vert Sx \Vert_p$ agrees with $\Vert Bx \Vert_p$ up to a distortion factor of $d$.

Algorithm \ref{alg: deterministic-subspace-embedding} applies the \emph{merge and reduce} framework.  The matrix $A$ is seen a row at a time and $n^\gamma$ rows are stored which are used to construct a tree.  So at every level a subspace embedding with distortion $d$ is constructed.  This error propagates through each of the $O(1/\gamma)$ levels in the tree so the overall distortion to construct the subspace embedding for $A$ is $d^{O(1/\gamma)}$.  The space bound is similar; we need $n^{\gamma}d$ storage per group so require $O(1/\gamma) n^{\gamma}d$ overall. }

\end{myproof}

\begin{myproof}{Theorem}{\ref{DetEllpReg}}
{The task is to minimise $\Vert Ax - b \Vert_p$.  Let $Z = [A,b] \in \R^{n \times (d+1)}$ and compute a subspace embedding $S$ for $Z$ using Theorem \ref{DetEllpSEthm}.  Note that $R$ has $O(1/\gamma)n^{\gamma}(d+1)$ rows.  Let $\Delta = (d+1)^{O(1/\gamma)}$, then for all $y \in \R^{d+1}$ we have:

\begin{equation} \label{ell_p_regression_subspace}
    \frac{\Vert Zy \Vert_p}{\Delta} \le \Vert Sy \Vert \le \Vert Zy \Vert.
\end{equation}

Since this condition holds for all $y \in \R^{d+1}$ it must hold, in particular, for vectors $y' = (x, -1)^T$ where $x \in \R^d$ is arbitrary.  However, observe that:

\begin{equation} \label{ell_p_regression_vector}
    \Vert Zy' \Vert_p =  \left \Vert [A,b] \begin{bmatrix}
           x \\
           -1
         \end{bmatrix} \right \Vert_p = \Vert Ax - b \Vert_p.
\end{equation}

Denote the first $d$ columns of $S$ by $S_{1:d}$ and the last column by $S_{d+1}$.  Then

\begin{equation} \label{ell_p_regression_instance}
    \Vert Sy' \Vert_p =  \left \Vert [S_{1:d},S_{d+1}] \begin{bmatrix}
           x \\
           -1
         \end{bmatrix} \right \Vert_p = \Vert S_{1:d}x - S_{d+1} \Vert_p.
\end{equation}

Now we have transformed the subspace embedding relationship into an instance of regression.  In particular, $S_{1:d}$ has only $O(1/\gamma)n^{\gamma}d$ rows so is a smaller instance than the original problem.  We now focus on the task of finding $\min_{x\in \R^{d}} \Vert S_{1:d}x - S_{d+1} \Vert_p$.  By using Equation (\ref{ell_p_regression_subspace}) with $y'$ and utilising Equations $(\ref{ell_p_regression_vector}), (\ref{ell_p_regression_instance})$ we have:

\begin{equation} \label{ell_p_regression_problem}
    \frac{\Vert Ax - b \Vert_p}{\Delta} \le \Vert S_{1:d}x - S_{d+1} \Vert_p \le \Vert Ax - b \Vert_p.
\end{equation}

Convex optimisation can now be used to find $\min_{x \in \R^d}\Vert S_{1:d}x - S_{d+1} \Vert_p$. Let $\hat{x} = \argmin_{x\in \R^d} \Vert S_{1:d}x - S_{d+1} \Vert_p$ which is output from the optimisation and let $x^* =  \argmin_{x\in \R^d} \Vert Ax - b \Vert_p$ be the optimal solution we would like to estimate.  By optimality of $\hat{x}$ we have:

\begin{equation} \label{ell_p_regression_optimality}
    \Vert S_{1:d}\hat{x} - S_{d+1} \Vert_p \le \Vert S_{1:d}x^* - S_{d+1} \Vert_p.
\end{equation}

However, combining Equation (\ref{ell_p_regression_optimality}) with Equation (\ref{ell_p_regression_problem}) we see that:

    \begin{align}
        \frac{\Vert A\hat{x} - b \Vert_p}{\Delta} &\le \Vert S_{1:d}\hat{x} - S_{d+1} \Vert_p \\
        &\le \Vert S_{1:d}x^* - S_{d+1} \Vert_p \\
        &\le \Vert Ax^* - b \Vert_p
    \end{align}

Therefore, $\Vert A\hat{x} - b \Vert_p \le \Delta \Vert Ax^* - b
\Vert_p$ and $\Delta = \text{poly}(d+1)$ so the $\ell_p$-regression
problem has been solved up to a polynomial $d+1$ approximation factor.
The overall time complexity is the time taken to compute the subspace
embedding, which is $\text{poly}(nd)$ by Theorem \ref{DetEllpSEthm},
and the time for the convex optimisation.  However, the optimisation
costs $\text{poly}(O(1/ \gamma) n^{\gamma})$~\cite{Woodruff:Zhang:13}
which is subsumed by the dominant time cost for computing the embedding.  Finally, the space cost is immediate from computing the subspace embedding in Theorem \ref{DetEllpSEthm}.} \label{sec: proof_ DetEllpReg}

\end{myproof}

\section{Proofs for Section \ref{low_rank_main} } \label{ell_1 low rank approx}
To prove correctness of Algorithm \ref{DetL1Alg} for Theorem \ref{l1Approxthm} we will need to invoke the following algorithm at each level of the tree.  This is a derandomized version of an algorithm which returns a low rank approximation to an input matrix.  The derandomization follows from generating and testing all possible combinations of the necessary matrices.

\begin{algorithm}[t]
\caption{Deterministic $\ell_1$ low rank approximation (derandomized version of algorithm from \cite{woodruff_ell_1_low_rank})}  \label{DetKRankApprox}
\begin{algorithmic}[1]
\Procedure{L1-KRankApprox}{$X,n,d,k$}
\State $r = O(k \log k)$
\State $m = O(r \log r)$
\State $t_1 = O(r \log r)$
\State $t_2 = O(m \log m)$
\State Generate all diagonal $R \in \R^{d \times d}$ with only $r$  $1$s 
\State Compute all possible sampling and rescaling matrices $D, T_1 \in \R^{n \times n}$ corresponding to Lewis Weights of $AR$ whose entries are powers of 2 between 1 and $1/nd$.  There are $m$ and $t_1$ nonzero entries on the diagonal, respectively.
\State Compute all sampling and rescaling matrices $T_2^T \in \R^{d \times d}$ according to the Lewis weights of $(DA)^T$ with $t_2$ nonzero entries, powers of 2 between 1 and $1/nd$ on the diagonal.
\State Evaluate $\Vert T_1 ARXYDAT_2 - T_1 A T_2 \Vert_1$ for all choices of above matrices.
\State Take the minimal solution
\State \Return $ARX, YDA$
\EndProcedure
\end{algorithmic}
\end{algorithm}

\begin{Lemma} \label{Runtime}
Algorithm \ref{DetKRankApprox} runs in time $\text{poly}(nd)$.
\end{Lemma}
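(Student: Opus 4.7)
\begin{myproof}{Lemma}{\ref{Runtime}}{
The plan is to bound the total work as (number of enumerated tuples $(R,D,T_1,T_2)$) times (work per tuple), showing each factor is polynomial in $nd$. Note at the outset that $k$ is treated as constant, so $r = O(k \log k)$, $m = O(r \log r)$, $t_1 = O(r \log r)$, and $t_2 = O(m \log m)$ are all constants.

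First I would count the enumerated objects. The diagonal matrix $R \in \R^{d\times d}$ has exactly $r$ ones among its $d$ diagonal positions, giving at most $\binom{d}{r} = O(d^{r})$ choices, which is $\text{poly}(d)$ for constant $r$. The diagonal matrix $D \in \R^{n \times n}$ has $m$ nonzero diagonal entries whose values are each restricted to powers of $2$ in $[1/(nd),1]$; this gives $O(\log(nd))$ choices per entry and $\binom{n}{m}$ choices for the positions, totalling $\binom{n}{m}\cdot O(\log(nd))^m = \text{poly}(n,d)$. The analogous counts for $T_1 \in \R^{n\times n}$ (with $t_1$ nonzero entries) and for $T_2^T \in \R^{d \times d}$ (with $t_2$ nonzero entries) are likewise $\text{poly}(n,d)$. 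Multiplying, the total number of tuples $(R,D,T_1,T_2)$ is $\text{poly}(n,d)$.

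Next I would bound the per-tuple work. For a fixed tuple, the algorithm must find the minimizer (in $X,Y$) of $\Vert T_1 A R X Y D A T_2 - T_1 A T_2 \Vert_1$. After the sketches are applied, the effective ambient dimensions are $t_1 \times t_2$ (constant), while $X$ is $r \times k$ and $Y$ is $k \times m$ (also constants). Thus this is a constant-sized optimization problem, solvable in polynomial (indeed constant, independent of $n$ and $d$) time by standard techniques for bilinear/low-rank $\ell_1$ minimization on constant-dimensional instances. Assembling the matrices $T_1 A R$, $D A T_2$, and $T_1 A T_2$ that feed into this inner problem costs $\text{poly}(n,d)$ since we are multiplying $A$ by sparse sampling and selection matrices with at most $\text{poly}(n,d)$ nonzeros. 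Finally, evaluating $\Vert T_1 A R X Y D A T_2 - T_1 A T_2 \Vert_1$ at the returned $X,Y$ is a $\text{poly}(n,d)$ matrix multiplication and summation.

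Combining the above, the total running time is (number of tuples)$\,\cdot\,$(cost per tuple) $= \text{poly}(n,d)\cdot \text{poly}(n,d) = \text{poly}(n,d)$, as claimed. The main obstacle I anticipate is the per-tuple inner optimization: the product $XY$ makes it bilinear and nonconvex, so one cannot simply invoke convex programming. The escape hatch is that once the sketches have reduced the dimensions to the constants $t_1,t_2,r,m$, the inner problem lives in a constant number of variables, so any fixed-dimensional $\ell_1$ minimization procedure (or an $\varepsilon$-net over a bounded region followed by LP-based verification, mirroring the guess-and-check structure of the randomized algorithm in \cite{woodruff_ell_1_low_rank}) suffices to solve it in time independent of $n$ and $d$.
}
\end{myproof}
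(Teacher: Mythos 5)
Your proof is correct and follows essentially the same approach as the paper's: bound the number of enumerated sketching matrices by a polynomial (since each has only $O(k\,\mathrm{polylog}\,k)$ nonzero entries and $k$ is constant) and multiply by polynomial per-candidate work. You are in fact more careful than the paper's one-line argument, which does not explicitly address the inner optimization over $X,Y$; your observation that it reduces to a constant-dimensional problem after sketching is a sensible way to close that gap.
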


\begin{proof}
Every matrix which is generated in Algorithm \ref{DetKRankApprox} has a number of nonzero entries bounded by $O(k \text{polylog} (k))$.  We can test all of the matrices which will take time proportional to the dimension of the matrix ($n$ or $d$) with exponent $O(k \text{polylog} (k))$ resulting in time $\text{poly}(nd)$ overall, since $k$ is constant. 
\end{proof}
We need one further lemma which describes the approximation error induced by using well-conditioned bases to decompose a matrix.

\begin{Lemma} \label{l1polyapprox}
Let $M \in \R^{N \times D}$ have rank $\rho$ and suppose $U \in \R^{N \times \rho}$ is a well-conditioned basis for $M$.  Let $M = US$ for a change of basis $S \in \R^{\rho \times D}$.  Then for all $x \in \R^D$:

\begin{equation*}
    \frac{\Vert Sx \Vert_1}{\text{poly}(D)} \le \Vert Mx \Vert_1 \le \text{poly}(D) \Vert Sx \Vert_1.
\end{equation*}
\end{Lemma}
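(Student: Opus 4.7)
The plan is to write $Mx = USx$ and exploit the two defining properties of a well-conditioned basis (Definition \ref{WCB}) directly, applied with $p=1$ so that the dual exponent is $q=\infty$. Let $y = Sx \in \R^\rho$, so that $Mx = Uy$. By Theorem \ref{WCBthm}, $U$ is an $(\alpha,\beta,1)$-well-conditioned basis for $M$ with $\alpha,\beta = \text{poly}(\rho) \le \text{poly}(D)$, so both constants I invoke below are absorbed into the $\text{poly}(D)$ factors of the statement.

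For the upper bound, I would apply the triangle inequality column-wise:
\[
  \Vert Mx\Vert_1 = \Vert Uy\Vert_1 = \Big\Vert \sum_{j=1}^{\rho} y_j U_{\cdot j}\Big\Vert_1 \le \sum_j |y_j|\, \Vert U_{\cdot j}\Vert_1 \le \Vert y\Vert_1 \max_j \Vert U_{\cdot j}\Vert_1 \le \Vert U\Vert_1\, \Vert Sx\Vert_1 \le \alpha\, \Vert Sx\Vert_1,
\]
where the final step uses property (\ref{prop1}) of Definition~\ref{WCB}. Since $\alpha \le \text{poly}(D)$, this gives the right-hand inequality.

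For the lower bound, I would use property (\ref{prop2}) of Definition~\ref{WCB} with the dual exponent $q=\infty$: for every $z \in \R^\rho$, $\Vert z\Vert_\infty \le \beta\, \Vert Uz\Vert_1$. Applying this to $z = y = Sx$ and then comparing $\ell_1$ to $\ell_\infty$ on $\R^\rho$,
\[
  \Vert Sx\Vert_1 = \Vert y\Vert_1 \le \rho\, \Vert y\Vert_\infty \le \rho\beta\, \Vert Uy\Vert_1 = \rho\beta\, \Vert Mx\Vert_1.
\]
Since $\rho\beta \le \text{poly}(D)$, rearranging yields $\Vert Mx\Vert_1 \ge \Vert Sx\Vert_1/\text{poly}(D)$, as required.

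There is no real obstacle here: the entire argument is a direct unpacking of the defining inequalities of a well-conditioned basis together with the trivial norm equivalence $\Vert y\Vert_1 \le \rho\, \Vert y\Vert_\infty$ on the low-dimensional coefficient space $\R^\rho$. The only thing to be careful about is that the polynomial factors $\alpha,\beta,\rho$ are all bounded by $\text{poly}(D)$ (using $\rho \le D$), so that both inequalities in the statement fold neatly into the same $\text{poly}(D)$ distortion.
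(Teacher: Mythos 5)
Your proof is correct and follows essentially the same route as the paper: the lower bound on $\Vert Mx\Vert_1$ comes from $\Vert Sx\Vert_1 \le \rho\,\Vert Sx\Vert_\infty \le \rho\beta\,\Vert USx\Vert_1$ via property (2) of Definition~\ref{WCB} with $q=\infty$, and the upper bound is the same H\"older/triangle-inequality pairing of $\Vert U\Vert_1$ against $Sx$ (the paper pairs $\Vert U\Vert_1$ with $\Vert Sx\Vert_\infty$ while you pair $\max_j\Vert U_{\cdot j}\Vert_1$ with $\Vert Sx\Vert_1$, which is an immaterial variation). Both arguments absorb $\alpha$, $\beta$, and $\rho\le D$ into the $\text{poly}(D)$ factors in the same way.
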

\begin{proof}
For the left-hand side we can just calculate:
%
        \begin{align}
            \Vert Sx \Vert_1 &\le D \cdot \Vert Sx \Vert_\infty \\
                             &\le D \cdot \text{poly}(D) \Vert US x \Vert_1 \\
                             &= \text{poly}(D) \cdot \Vert Mx \Vert_1. 
        \end{align}
The second inequality follows from a property of the well-conditioned basis $U$.  The result follows from observing:
%
        \begin{align}
        \Vert Mx \Vert_1 = \Vert USx \Vert_1 &\le \Vert U \Vert_1 \Vert Sx \Vert_\infty \\
         &= \text{poly}(D) \Vert Sx \Vert_1
        \end{align}
\end{proof}

\subsection{Proof of Theorem \ref{l1Approxthm}}

For the proof of Theorem \ref{l1Approxthm} we introduce Algorithm \ref{DetL1Alg}.  It is enough to show that for every level, the low rank approximations of each group is polynomially bounded by $k$ in error.  The result follows by reasoning how this error grows as we progress through the tree.  Denote the $j$th block of $A$ by $A_{[j]}$. 

\begin{algorithm}
\caption{Deterministic $\ell_1$ low rank approx}  \label{DetL1Alg}
\begin{algorithmic}[1]
\Procedure{$\ell_1$-$k$-RankApprox}{$A,k, \gamma$}
\State $m,t \leftarrow 1, P_t \leftarrow 0$
\For{$i = 1: 1/\gamma$}
    \While{$m < n^{1 - i \gamma}$}
        \While{number of rows of $P_m < n^{\gamma}$}
            \State Run Algorithm \ref{DetKRankApprox} on $A_{[m]}$ and $k$ which outputs matrix $B \in \R^{k \times d}$
            \State $B \leftarrow WV^T$ ($k$-rank decomposition)
            \State Set $W = US$ for well-conditioned basis $U$
            \State $P_t \leftarrow SV^T$
            \State $m \leftarrow m + 1$
        \EndWhile
        \EndWhile       
        \EndFor
   \State Merge-and-Reduce all $P_m$ until we have an $n^{\gamma} \times d$ matrix. 
\State Set $P$ to be matrix of final $k$ rows.  
\State Solve $\min_Q \Vert QP - A \Vert_1 $.
\State \Return $QP$
\EndProcedure
\end{algorithmic}
\end{algorithm}

For every level in the tree we can take a group of rows, $C$, and perform Algorithm \ref{DetKRankApprox}.   For every $C$ used as input to Algorithm \ref{DetKRankApprox} a $k$-rank matrix $B$ of dimensions $n^\gamma \times d$ is returned. In particular, $B$ has the following property:

\begin{equation}
    \Vert C - B \Vert_1 \le \text{poly}(k) \min_{B' \text{rank} k} \Vert C - B' \Vert_1.
\end{equation}

Now factor $B$ using a $k$ rank decomposition.  That is, set $B = W
V^T$ where $W$ has $k$ columns and $V^T$ has $k$ rows. Further
decompose $W$ as $W = US$ for a well-conditioned basis $U$.  Note that
$W$ is $n^\gamma \times k$ (and of rank $k$) by the rank decomposition
so $U$ is also $n^\gamma \times k$ and $S$ is $k \times k$.  The dimensions of these matrices ensure that individually they do not exceed the space budget from the theorem.  

Apply Lemma \ref{l1polyapprox} with $W$ and $k$.  Then we have for every $x \in \R^k$ that $\Vert Sx \Vert_1 = \text{poly}(k) \Vert Wx \Vert_1$.  Since $U$ is $n^{\gamma}$ by $k$ and $k < \text{poly}(d)$, $U$ remains within the required space bound when we use it for the calculation.  Now ignore $U$ and store $SV^T$.  Note that each $SV^T$ is a matrix of $k$ directions in $\R^d$.  Pass $SV^T$ to the next level of the tree.

Merge the $SV^T$ for each group until we have a matrix of $n^\gamma$ rows.  Repeat the process over all $O(1/\gamma)$ levels in the tree. We require $n^\gamma d$ storage for every group so as we merge and pass $SV^T$ down the levels this combines to total storage of $O(1/\gamma) n^\gamma \text{poly}(d)$.  This part of the algorithm is a repeated use of Algorithm \ref{DetKRankApprox} which is $\text{poly}(nd)$ by Lemma \ref{Runtime} and some further lower time cost manipulations. Repeating these steps gives $\text{poly}(nd)$ as the overall time complexity. 

When this is done over all levels we will again have $k$ directions in $\R^d$.  Let $P$ be the matrix with these directions as rows.  Then we claim that $P$ can be used to construct our approximate $\ell_1$ low-rank approximation.

\begin{prop} \label{ApproxFactor}
Let $P$ be as described above. Then there exists $QP$ which is an $\ell_1$ low-rank approximation for $A$:

    \begin{equation*}
        \min_{Q} \Vert QP - A \Vert_1 \le \text{poly}(k) \Vert A - A' \Vert_1
    \end{equation*}

\end{prop}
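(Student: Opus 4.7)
The plan is to bound $\min_Q \|QP - A\|_1$ by induction on the $O(1/\gamma)$ levels of the merge-and-reduce tree, showing that each level multiplies the approximation factor by at most $\text{poly}(k)$. Two ingredients are used repeatedly: (i) Algorithm \ref{DetKRankApprox} gives a $\text{poly}(k)$-approximation to the best rank-$k$ matrix for each block on which it is invoked, and (ii) by Lemma \ref{l1polyapprox}, dropping the well-conditioned basis factor $U$ and keeping only $S$ (so that a stored block $SV^T$ replaces $WV^T = US V^T$) distorts $\ell_1$ distances by at most $\text{poly}(k)$ along every direction in the column span.

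\textbf{Leaf level.} Partition $A$ into the $n^{1-\gamma}$ blocks $A_{[j]}$. For each block, Algorithm \ref{DetKRankApprox} returns $B_j = W_j V_j^T$ with $\|A_{[j]} - B_j\|_1 \le \text{poly}(k)\, \min_{\mathrm{rank}\le k} \|A_{[j]} - M\|_1$. Let $A^*$ be the optimal global rank-$k$ matrix. The $j$-th block slab $A^*_{[j]}$ has rank at most $k$, so it is a feasible competitor and
\[
\sum_j \|A_{[j]} - B_j\|_1 \;\le\; \text{poly}(k) \sum_j \|A_{[j]} - A^*_{[j]}\|_1 \;=\; \text{poly}(k)\, \|A - A^*\|_1.
\]
Writing $W_j = U_j S_j$ with $U_j$ a $(\text{poly}(k), \text{poly}(k), 1)$-well-conditioned basis, we store $P^{(1)}_j := S_j V_j^T \in \R^{k \times d}$. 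Since $B_j = U_j P^{(1)}_j$, the row space of $B_j$ sits inside the row space of $P^{(1)}_j$, so there exists a block-diagonal $Q^{(1)}$ with $Q^{(1)} P^{(1)} = [B_1;\ldots;B_{n^{1-\gamma}}]$, where $P^{(1)}$ is the vertical concatenation of the $P^{(1)}_j$.

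\textbf{Inductive step.} At level $i$ the algorithm groups the stored $k \times d$ summaries into new $n^\gamma \times d$ blocks $M^{(i)}_j$, runs Algorithm \ref{DetKRankApprox} on each to get $B^{(i+1)}_j = W^{(i+1)}_j V^{(i+1)T}_j$, and stores $P^{(i+1)}_j := S^{(i+1)}_j V^{(i+1)T}_j$ after the well-conditioned basis factorization. Arguing as in the leaf level, $\sum_j \|M^{(i)}_j - B^{(i+1)}_j\|_1 \le \text{poly}(k)\,\min_{\mathrm{rank}\le k}\|M^{(i)} - M\|_1$. Invoking Lemma \ref{l1polyapprox} to pull the level-$i$ stored matrix $P^{(i)}$ back to the level-$(i-1)$ approximation $B^{(i)} = U^{(i)} P^{(i)}$ costs only a further $\text{poly}(k)$ factor in $\ell_1$-distortion (the column counts at each level are bounded by $\text{poly}(d,k)$, so the $\text{poly}(D)$ in Lemma \ref{l1polyapprox} is $\text{poly}(k)$).

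\textbf{Combining over levels.} Compose the bounds across the $O(1/\gamma)$ levels: each level multiplies the relative error by a constant-degree polynomial in $k$, giving a final factor of $\text{poly}(k)^{O(1/\gamma)} = \text{poly}(k)$ for constant $\gamma$. Unrolling the resulting sequence of $Q^{(i)}$'s into a single matrix $Q$ shows that the row span of the final surviving $k$-row matrix $P$ contains a rank-$k$ object whose $\ell_1$ distance to $A$ is at most $\text{poly}(k)\,\|A - A^*\|_1$, establishing the claim by taking the minimizing $Q$.

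\textbf{Main obstacle.} The conceptual ease of the argument hides its one delicate point: translating an $\ell_1$ error incurred when approximating the high-level summary $P^{(i)}$ back into an error on the actual rows of $A$. This is where the well-conditioned basis machinery is essential—without Lemma \ref{l1polyapprox}, discarding $U_j$ at every level could in principle collapse distances arbitrarily, and the blow-up would not be controllable by $\text{poly}(k)$. Verifying that the factorization step $B_j = U_j (S_j V_j^T)$ loses only a $\text{poly}(k)$ factor in $\ell_1$-distortion on the relevant subspace, uniformly across all $O(1/\gamma)$ levels, is the core of the proof.
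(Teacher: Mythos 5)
Your proposal is correct and follows essentially the same route as the paper's own (much terser) proof: a $\text{poly}(k)$ loss per level from the approximation guarantee of Algorithm \ref{DetKRankApprox} together with the $\text{poly}(k)$ distortion from discarding the well-conditioned basis factor via Lemma \ref{l1polyapprox}, composed over the $O(1/\gamma)$ levels to give $\text{poly}(k)^{O(1/\gamma)} = \text{poly}(k)$. In fact you supply useful detail the paper leaves implicit—notably that the restriction of the global optimal rank-$k$ matrix to each block is a feasible rank-$\le k$ competitor, which justifies summing the per-block errors against $\|A - A^*\|_1$ at the leaf level.
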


\begin{proof}
    Each use of Algorithm \ref{DetKRankApprox} admits a $\text{poly}(k)$ approximation at every level of the tree.  Every time the well-conditioned basis $U$ is constructed and then ignored we admit a further $\text{poly}(k)$ error due to property 1 of Definition \ref{WCB}.  The distortion is blown up by a factor of $\text{poly}(k)$ every time we use Lemma \ref{l1polyapprox} which is at every level in the tree.
Hence, the total contribution of using Algorithm \ref{DetKRankApprox} is $\text{poly}(k)^{O(1/\gamma)}  = \text{poly}(k)$ for constant $\gamma$.  
\end{proof}

Proposition \ref{ApproxFactor} proves the approximation is $\text{poly}(k)$ as claimed.  By Lemma \ref{Runtime} we know that Algorithm \ref{DetKRankApprox} is $\text{poly}(nd)$ time.  The most costly steps in Algorithm \ref{DetL1Alg} are invocations of Algorithm \ref{DetKRankApprox} so combining this we see that the overall time cost is $\text{poly}(nd)$ as claimed, proving the theorem.

\section{Proofs for Section \ref{Deterministic ell_inf regression}} \label{sec: ell_inf proofs}

\begin{myproof}{Theorem}{\ref{ell_inf reg}}
{Given $A$, the first step is to store all rows of $A$ whose $\ell_p$ leverage score is above the threshold $\varepsilon/\text{poly}(d)$.  This step requires a polynomial increase to $\text{poly}(d)/ \varepsilon $ storage from Lemma \ref{lem: space_local_vs_global}.  Next the change of basis matrix $R$ is computed so that $AR$ is well-conditioned.  The stored matrix is $B$ with rows corresponding to those of large $\ell_p$ leverage scores from $A'$ and zero elsewhere.  Also, store all entries in $b$ whose magnitude is greater than $\varepsilon \Vert b \Vert_p$ and zero the rest out.  Call this vector $b'$.    

We now focus on the task of solving $\min_{x \in \R^d} \Vert A'Rx - b' \Vert_{\infty}$.  Any solution must necessarily have $\Vert x \Vert_p \le \text{poly}(d) \Vert b \Vert_p$ as otherwise $ x = \mathbf{0}$ is a better solution.    Recall that $\alpha = d^{1/p + 1/2}$ for a well-conditioned basis $AR$ with $p > 2$. Hence, the sum of all the $\ell_p$ leverage scores is $\alpha^p = d^{O(p)}$.  Then the number of rows with leverage score greater than the  $\varepsilon / \text{poly}(d)$ is at most $\text{poly}(d)/ \varepsilon \cdot d^{O(p)} = \text{poly}(d)/ \varepsilon$ for a constant $p$.  

Now, take any row for which the $\ell_p$ leverage score is less than the $\varepsilon / \text{poly}(d)$ threshold.
Then:
%
\begin{align*}
| \langle (AR)_i, x \rangle | & \le \| (AR)_i \|_{\infty} \| x \|_1 \\
        &\le \| (AR)_i \|_p \| x \|_1 \\
        &\le  \| (AR)_i \|_p \cdot d \| x \|_p \\
        &\le d \frac{\varepsilon}{\text{poly}(d)} \text{poly}(d) \| b \|_p.
\end{align*}

By an appropriate choice of the $\text{poly}(d)$ factors scaling
$\varepsilon$ we see that $| \langle (AR)_i, x \rangle | \le \varepsilon
\Vert b \Vert_p$.  On such coordinates the $\ell_{\infty}$ cost is
$|b_i| \pm \varepsilon \Vert b \Vert_p$ so by replacing the row with
one which is all zero we still pay $|b_i|$ which is within the
$\varepsilon \Vert b \Vert_p$ had we included the row.
The remaining high-leverage score rows are stored in their entirety so the cost on these rows is the same as in the original regression problem.}
\end{myproof}

\begin{myproof}{Theorem}{\ref{thm: ell_inf lower bound}}
{
Let $S$ be a set of $2^{\Omega(d)}$ strings in $\{0,1\}^d$ with each coordinate in a string uniformly sampled randomly from $\{0,1\}$.  Let $x,y \in S$ and fix a constant $0 < c < 1$.  By a Chernoff bound it follows that there are at least $cd$ coordinates in $[d]$ for which $x_i = 0$ and $y_i = 1$ with probability $1-2^{-\Omega(d)}$.  This implies for appropriate constants in the $\Omega(\cdot)$, by a union bound, all pairs of strings $x,y \in S$ have this property.  Hence, such an $S$ exists and we will fix this for the proof.

The regression problem can be reduced to an instance of the 
$\texttt{Indexing}$ problem \cite{knr99} in data streams as follows.
In the stream, the vector $b$ will be
all $1$s.  We will see a random subset $T$ of some elements from $S$.
We claim that it is possible to decide which case we are in: given a
random string $y$, whether $y$ is in $S$ independent of $T$, or $y$ is
in $T$.  This corresponds to solving $\texttt{Indexing}$ which
requires space $\Omega(|S|) = \Omega( \min \{n, 2^{\Omega(d)} \})$
even with randomization, via communication complexity arguments~\cite{Kushilevitz:Nisan:97}.

Given a test vector $y$, negate its coordinates so that $y \in \{0,-1\}^d$.  Now, append $y$ as a row to the final $b$ coordinate of $1$ at the end of the stream to obtain the last item in the stream $(y, 1)$.  If $y$ were in $S$ then both $y$ and its complement would be seen as rows of the matrix $A$.  Hence, the optimal cost for $\ell_{\infty}$-regression is at least 1.  Otherwise, $y$ is not in $S$.  Consider the set of coordinates $R$ where $y_i = 0$.  Set $x_i = 1/d$ for $i \in R$ and $-c/2d$ otherwise.  

Now we consider the cost of using $x$.  On the row corresponding to the negated vector $y$ the value will be at least $(-1) (-c/2d)(cd) = c^2/2$.  Since $b_i = 1$ the cost will be at most $|1 - c^2/2|$ for this coordinate.  On all other rows, by using the fact there are at least $cd$ occurrences of $x_i = 0, y_i = 1$ the value will be at least 

\begin{equation*}
cd(1/d) - (d-cd)(c/2d) \ge c - c/2 = c/2.
\end{equation*}
Hence the cost on these coordinates is at most $|1 - c/2 |$.  Since $c < 1$, the $\ell_{\infty}$ cost is at most $|1 - c^2/2 |$.  This is a constant factor less than the $\ell_{\infty}$ cost of 1 from the previous case so it is possible to decide which of the two cases we are in and hence the space is $\Omega( \min \{n, 2^{\Omega(d)} \})$ as claimed.
}
\end{myproof}

\section{Deterministic Approximate Matrix Multiplication} \label{sec: Mat Product}
\allowdisplaybreaks
Despite the generality of the subspace embedding result in Theorem \ref{DetEllpSEthm}, there may be occasions where the overheads are sufficiently large that it does not make sense to employ this method.  One such example is for the \emph{matrix multiplication} problem.  Let $A, B \in \R^{n \times d}$ and consider the task of finding a matrix $C$ for which $\Vert A^T B - C \Vert_1 < \varepsilon \Vert A \Vert_1 \Vert B \Vert_1$ where $0< \varepsilon < 1$ and the norm is \emph{entrywise $1$-norm}.

\begin{Lemma} \label{InnerProductLem}
Let $x,y \in \R^n$ have unit entrywise 1-norm.  Let $\varepsilon > 0$. Define:
\begin{alignat*}{2}
    \bar{x}_i = 
    & \begin{aligned} & \begin{cases}
  x_i & \text{ if } |x_ i| > \varepsilon/2, \\
  0  & \text{otherwise,}\\
  \end{cases}\\
  \end{aligned}
  & \hskip 3em &
  \begin{aligned}
  \bar{y}_i = 
  & \begin{cases}
  y_i & \text{ if } |y_ i| > \varepsilon/2, \\
  0  & \text{otherwise.}\\
  \end{cases} \\
  \end{aligned}
\end{alignat*}
Then $\langle x, y \rangle -  \varepsilon \le \langle \bar{x}, \bar{y} \rangle \le \langle x, y \rangle$ and this can be computed using space $O(1/ \varepsilon)$.
\end{Lemma}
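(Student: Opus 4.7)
\textbf{Proof plan for Lemma \ref{InnerProductLem}.} The plan is to separate the two claims: an approximation bound relating $\langle \bar{x}, \bar{y}\rangle$ to $\langle x, y\rangle$, and a streaming space bound for constructing $\bar{x}, \bar{y}$ deterministically.

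For the approximation bound, I would write
\[
\langle x,y\rangle - \langle \bar{x},\bar{y}\rangle \;=\; \sum_{i} \bigl(x_i y_i - \bar{x}_i \bar{y}_i\bigr),
\]
and observe that any index $i$ on which the summand is nonzero must lie in $S_x \cup S_y$, where $S_x=\{i:|x_i|\le \varepsilon/2\}$ and $S_y=\{i:|y_i|\le \varepsilon/2\}$. I would then bound the contributions from $S_x$ and $S_y\setminus S_x$ separately: on $S_x$ we have $|x_i y_i|\le (\varepsilon/2)|y_i|$, which sums over $S_x$ to at most $(\varepsilon/2)\|y\|_1 = \varepsilon/2$; an analogous bound on $S_y\setminus S_x$ contributes another $\varepsilon/2$. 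This yields $|\langle x,y\rangle - \langle \bar{x},\bar{y}\rangle| \le \varepsilon$. Under the non-negativity reading of ``unit entrywise $1$-norm'' (which is the natural setting for the matrix-multiplication application in Section \ref{sec: Mat Product}, where $\|A\|_1\|B\|_1$ sits in the right-hand side), the summands are non-negative and so the one-sided inequalities $\langle x,y\rangle - \varepsilon \le \langle \bar{x},\bar{y}\rangle \le \langle x,y\rangle$ follow directly; if the vectors may be signed, the same argument delivers the two-sided additive-$\varepsilon$ guarantee which is all that is needed downstream.

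For the space bound, the key observation is that $\bar{x}$ is $(2/\varepsilon)$-sparse: each of its nonzero coordinates has magnitude exceeding $\varepsilon/2$, and $\sum_i |\bar{x}_i|\le \|x\|_1=1$, so there are at most $2/\varepsilon$ of them. Hence $\bar{x}$ (and similarly $\bar{y}$) can be stored explicitly using $O(1/\varepsilon)$ words. To identify these entries deterministically in a streaming environment, I would invoke the Misra--Gries (equivalently, Space-Saving) heavy-hitters summary with $O(1/\varepsilon)$ counters: it is deterministic, uses exactly the claimed space, and returns a superset of all coordinates whose value exceeds an $\varepsilon$-fraction of the total $\ell_1$ mass, which contains the support of $\bar{x}$. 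The inner product is then evaluated by iterating over the (small) supports of $\bar{x}$ and $\bar{y}$.

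The main technical subtlety I would be careful about is the sign issue: the clean one-sided upper bound $\langle \bar{x},\bar{y}\rangle\le \langle x,y\rangle$ only holds for non-negative entries, and for signed entries one should argue via the absolute-value estimate above. The other place that requires attention is the streaming model being assumed for building $\bar{x},\bar{y}$: the Misra--Gries argument applies directly in the insertion-only setting (which matches how rows of a matrix are processed in the rest of the paper), so no additional randomization is needed to retain the deterministic guarantee that is the theme of the work.
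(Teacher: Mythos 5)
Your argument is correct and is essentially the paper's own proof: both partition the coordinates according to whether $|x_i|$ and $|y_i|$ exceed the threshold $\varepsilon/2$, identify the surviving block with $\langle\bar{x},\bar{y}\rangle$, and bound each cross term by $(\varepsilon/2)$ times a partial $\ell_1$ mass, summing to $\varepsilon$. You do add two things the paper's proof glosses over, both worth keeping: first, the one-sided inequality $\langle\bar{x},\bar{y}\rangle\le\langle x,y\rangle$ (which the paper derives from ``$\bar{x}_i\le x_i$ and $\bar{y}_i\le y_i$'') really does require non-negative entries, and your observation that signed entries still give the two-sided additive-$\varepsilon$ bound is the honest statement of what holds in general; second, the paper's proof never justifies the $O(1/\varepsilon)$ space claim, whereas your sparsity count ($\bar{x}$ has at most $2/\varepsilon$ nonzeros since each exceeds $\varepsilon/2$ and they sum to at most $1$) does so directly --- in the insertion-only coordinate-at-a-time model one can even keep the above-threshold entries verbatim without invoking Misra--Gries.
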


\begin{proof}
Observe that $\bar{x}_i \le x_i$ and $\bar{y}_i \le y_i$ for $1 \le i \le n$.  Hence, $\langle \bar{x}, \bar{y} \rangle \le \langle x, y \rangle$.  For the left-hand side define the following sets:  $H_u = \{i : u_i > \varepsilon/2 \}, L_u = \{ i : u_i \le \varepsilon/2 \}$ for $u = x,y$.  Then we can write
\begin{equation*}
    \begin{aligned}
        \langle x,y \rangle &= \sum_{i \in H_x}x_i y_i + \sum_{i \in L_x}x_i y_i \\
        &= \sum_{\mathclap{i \in H_x \cap H_y}}x_i y_i + \sum_{\mathclap{i \in H_x \cap L_y}}x_i y_i + \sum_{\mathclap{i \in L_x \cap H_y}}x_i y_i + \sum_{\mathclap{i \in L_x \cap L_y}}x_i y_i \\
        &\le \langle \bar{x}, \bar{y} \rangle + \frac{\varepsilon}{2} \sum_{i \in H_x}x_i + \frac{\varepsilon}{2} \sum_{i \in H_y}y_i + \frac{\varepsilon}{2} \sum_{i \in L_y}y_i \\
      &\le \langle \bar{x}, \bar{y} \rangle + \varepsilon.
    \end{aligned}
\end{equation*}

Note that the sum can be written this way as the pair $H_x, L_x$ are disjoint, and likewise for $H_y, L_y$.  The first inequality follows from the second line because $i \in H_x \cap H_y$ means $x_i$ and $y_i$ are retained in $\bar{x}, \bar{y}$ so this summation corresponds directly to $\langle \bar{x}, \bar{y} \rangle$.  Then for every $i \in H_x \cap L_y$ we must have that $y_i \le \varepsilon/2$ so is bounded by $\frac{\varepsilon}{2} \sum_{i \in H_x}x_i$.  The same argument holds for the remaining two summations in the inequality. Finally, each of the three summations are at most 1 since both $x$ and $y$ have unit 1-norm.  The summations over $H_y$ and $L_y$ when combined are at most the norm of $y$ so can be combined such that $\sum_i y_i \le 1$.  This is enough to prove the result.
\end{proof}

The result for unit vectors is sufficient because we can simply normalize a vector, use Lemma \ref{InnerProductLem} and then rescale by the norm of $x$ and $y$.  This results in $\langle x, y \rangle - \varepsilon \Vert x \Vert_1 \Vert y \Vert_1  \le \langle \bar{x}, \bar{y} \rangle \le \langle x, y \rangle$.  This result can be used to prove the following theorem.

\begin{thm} \label{thm: matrix product 1}
Let $A, B \in \R^{n \times d}$ and let $\varepsilon > 0$.  Let $A_i$ denote the $i$th row of $A$ and $B^i$ denote the $i$th column of $B$.  For $X = A$ and $X = B$ define: 

\begin{equation*}
    \overline{X}_{ij} = 
     \begin{aligned}
         \begin{cases}
              X_{ij} & \text{ if } | X_{ij} | > \frac{\varepsilon}{2} \Vert X_i \Vert_1, \\
  0  & \text{otherwise.}
         \end{cases}
     \end{aligned}
\end{equation*}

Then in entrywise 1-norm: 

\begin{equation*}
\Vert AB^T - \overline{A} \overline{B}^T \Vert_1 \le  \varepsilon \Vert A \Vert_1 \Vert B \Vert_1.
\end{equation*}
\end{thm}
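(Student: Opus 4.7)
The plan is to reduce the entrywise bound to a row-by-row application of Lemma \ref{InnerProductLem}, since the $(i,j)$ entry of $AB^T$ is precisely the inner product $\langle A_i, B_j \rangle$ where $A_i, B_j$ are the $i$th and $j$th rows of $A$ and $B$. The thresholding rule in the theorem, $|X_{ij}| > (\varepsilon/2)\|X_i\|_1$, is exactly the rescaled version of the thresholding in Lemma \ref{InnerProductLem}, which was stated for unit 1-norm vectors. Thus, by the remark immediately after Lemma \ref{InnerProductLem}, applying it to $x = A_i/\|A_i\|_1$ and $y = B_j/\|B_j\|_1$ and then rescaling yields
\begin{equation*}
\bigl| \langle A_i, B_j \rangle - \langle \overline{A}_i, \overline{B}_j \rangle \bigr| \le \varepsilon \|A_i\|_1 \|B_j\|_1.
\end{equation*}

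With this entrywise estimate in hand, the proof is essentially a summation. First, I would observe that $(\overline{A}\,\overline{B}^T)_{ij} = \langle \overline{A}_i, \overline{B}_j \rangle$, so the entrywise difference of the two product matrices factors as an inner-product difference per $(i,j)$. Expanding the entrywise $1$-norm and applying the per-entry bound above gives
\begin{equation*}
\|AB^T - \overline{A}\,\overline{B}^T\|_1 = \sum_{i,j} \bigl|\langle A_i, B_j \rangle - \langle \overline{A}_i, \overline{B}_j \rangle\bigr| \le \varepsilon \sum_{i,j} \|A_i\|_1 \|B_j\|_1.
\end{equation*}
Factoring the double sum as $\bigl(\sum_i \|A_i\|_1\bigr)\bigl(\sum_j \|B_j\|_1\bigr) = \|A\|_1 \|B\|_1$ closes the argument.

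I do not anticipate any real obstacle: the only subtlety is making sure the two rescalings line up, namely that thresholding the normalized vector $A_i/\|A_i\|_1$ at $\varepsilon/2$ corresponds exactly to thresholding $A_i$ at $(\varepsilon/2)\|A_i\|_1$, which matches the definition of $\overline{A}$ in the theorem statement. I would state this identification explicitly to justify invoking Lemma \ref{InnerProductLem}. The only care needed is to handle the edge case $\|A_i\|_1 = 0$ or $\|B_j\|_1 = 0$ separately, in which case both the true inner product and its truncated counterpart vanish and the entry contributes nothing to either side.
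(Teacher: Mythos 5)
Your proposal is correct and follows essentially the same route as the paper: apply the rescaled version of Lemma \ref{InnerProductLem} to each pair of rows $A_i$, $B_j$ (after noting the thresholds in the theorem are exactly the rescaled thresholds of the lemma), then sum the per-entry bounds over all $(i,j)$ and factor the double sum into $\Vert A \Vert_1 \Vert B \Vert_1$. Your explicit handling of the zero-row edge case is a small addition the paper omits, but otherwise the arguments coincide.
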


\begin{proof}
Fix $ \varepsilon > 0$.  The matrix product takes a row of $A$ with a column of $B^T$ which is simply a row of $B$.  These are both vectors in $\R^d$ so we can apply the transformation as in the Theorem statement, which is equivalent to that in Lemma \ref{InnerProductLem}.  By applying the rescaled version of Lemma \ref{InnerProductLem} we see that:

\begin{equation} \label{eq: mat prod}
| \langle A_i, B_j \rangle - \langle \overline{A}_i, \overline{B}_j \rangle \le \varepsilon \| A_i \|_1 \| B_j \|_1.
\end{equation}
\noindent Now the norm $\| AB^T - \overline{A} \overline{B}^T \|_1$ is the sum of all summands defined as in Equation \ref{eq: mat prod} over all pairs of $i$ and $j$.  Computing the sum then gives the desired result. 
%
\end{proof}

The argument from Lemma \ref{InnerProductLem} can easily be adapted to obtain a result for the matrix profuct $A^T B$.  
Observe that approximating $A^T B$ is equivalent to approximating inner products between columns of $A$ and columns of $B$.  
The modification is that the summary must be applied column-wise instead of row-wise as in Theorem \ref{thm: matrix product 1}.

\begin{thm} \label{thm: matrix product 2}
Let $A, B \in \R^{n \times d}$ and let $\varepsilon > 0$. Then there exists a deterministic algorithm which uses $O(1/\varepsilon)$ space and outputs $\overline{A}$ and $\overline{B}$ which satisfy:

\begin{equation*}
\| A^T B - \overline{A}^T \overline{B} \|_1 \le \varepsilon \|A\|_1 \|B\|_1.
\end{equation*}
\end{thm}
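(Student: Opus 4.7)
\begin{myproof}{Theorem}{\ref{thm: matrix product 2}}{
The plan is to mirror the proof of Theorem \ref{thm: matrix product 1}, but to operate column-wise rather than row-wise, which is the natural viewpoint since $(A^T B)_{ij} = \langle A^i, B^j\rangle$ where $A^i$ and $B^j$ denote the $i$-th column of $A$ and the $j$-th column of $B$ respectively. Define the summaries entrywise by
\begin{equation*}
  \overline{A}_{ij} = \begin{cases} A_{ij} & \text{if } |A_{ij}| > \tfrac{\varepsilon}{2}\|A^j\|_1, \\ 0 & \text{otherwise,}\end{cases}
  \qquad
  \overline{B}_{ij} = \begin{cases} B_{ij} & \text{if } |B_{ij}| > \tfrac{\varepsilon}{2}\|B^j\|_1, \\ 0 & \text{otherwise.}\end{cases}
\end{equation*}
Each column $\overline{A}^j$ depends only on column $A^j$ (and similarly for $B$), so the summary is local and can be computed column-by-column deterministically.

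The core estimate is a direct application of Lemma \ref{InnerProductLem} to the unit-norm vectors $A^i/\|A^i\|_1$ and $B^j/\|B^j\|_1$, followed by rescaling. This yields, for every pair $(i,j)$,
\begin{equation*}
    \bigl|\langle A^i, B^j\rangle - \langle \overline{A}^i, \overline{B}^j\rangle\bigr| \le \varepsilon \, \|A^i\|_1 \, \|B^j\|_1.
\end{equation*}
Summing over all $i,j \in [d]$ and using that $\|A^T B - \overline{A}^T\overline{B}\|_1$ is exactly the sum of absolute entrywise differences,
\begin{equation*}
\|A^T B - \overline{A}^T \overline{B}\|_1
  \le \varepsilon \sum_{i,j} \|A^i\|_1 \|B^j\|_1
  = \varepsilon \Bigl(\sum_i \|A^i\|_1\Bigr) \Bigl(\sum_j \|B^j\|_1\Bigr)
  = \varepsilon \|A\|_1 \|B\|_1,
\end{equation*}
as claimed.

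For the space bound, observe that in each column of $\overline{A}$ (resp.\ $\overline{B}$) the number of retained entries is at most $2/\varepsilon$, since each retained entry contributes at least $(\varepsilon/2)\|A^j\|_1$ to $\|A^j\|_1$. Thus processing any single inner product $\langle \overline{A}^i, \overline{B}^j\rangle$ requires only $O(1/\varepsilon)$ space per column, matching the guarantee of Lemma \ref{InnerProductLem}. The only mild subtlety, and the step I expect will need the most care to state cleanly, is the book-keeping about what ``space'' means here: the per-column summary is $O(1/\varepsilon)$ entries, but the summarisation can be carried out independently column-by-column, so in a streaming/column-wise presentation of the input the working memory required to form each column of $\overline{A}$ and $\overline{B}$ and to evaluate the corresponding entry of $\overline{A}^T \overline{B}$ is indeed $O(1/\varepsilon)$. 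The argument is otherwise a direct transposition of Theorem \ref{thm: matrix product 1}.}
\end{myproof}
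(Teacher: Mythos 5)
Your proposal is correct and follows essentially the same route as the paper: apply Lemma \ref{InnerProductLem} column-wise to $A^i/\|A^i\|_1$ and $B^j/\|B^j\|_1$, sum the $\varepsilon\|A^i\|_1\|B^j\|_1$ errors over all $d^2$ entries, and factor the double sum into $\varepsilon\|A\|_1\|B\|_1$. The approximation bound and the $O(1/\varepsilon)$ retained-entries-per-column count are exactly what the paper intends.

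The one substantive piece of the paper's proof that you gloss over is the point you flag as a ``mild subtlety'': the paper works in a row-arrival streaming model, so the thresholds $\tfrac{\varepsilon}{2}\|A^j\|_1$ are not known when entries of column $j$ are first seen. The paper resolves this by thresholding against the running prefix norm $\|X^j_{:i}\|_1$, which is monotonically nondecreasing in $i$; hence any entry that fails the running threshold at some point would also fail the final threshold $\tfrac{\varepsilon}{2}\|X^j_{:n}\|_1$ and can be safely discarded, and a final prune against $\|X^j_{:n}\|_1$ leaves exactly the set your offline definition would retain. You instead assume a column-wise presentation of the input, which satisfies the theorem as literally stated but sidesteps the implementation issue the paper's proof is mostly devoted to. If this were to be used in the paper's streaming setting, you would want to add the prefix-norm argument (or an equivalent two-pass/monotonicity observation) to justify that the column-norm thresholds are realizable one row at a time.
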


\begin{proof}
For a matrix $X$ let $X_i$ denote the $i$th row and $X^j$ denote the $j$th column.
Let $\| X_{:i}^j \|_1$ denote the 1-norm of column $j$ of $X$ up to and including row $i$.
It is clear that this norm is monotonic as more rows are seen in the stream.
In particular, $\| X_{:n}^j \|_1 = \| X \|_1$.
Therefore, the algorithm can be modified as follows: upon seeing a row $i$, if $|X_{ij}| > \varepsilon / 2 \cdot \| X_{:i}^j \|_1$ then keep the entry $X_{ij}$ and otherwise set $X_{oj} = 0$.
It is sufficient to consider only the last row. 
At this stage all rows which have not exceeded the running threshold upon seeing a particular row will have been ignored and only those which exceed $\varepsilon / 2 \cdot \| X_{:n-1}^j \|_1$ will be stored.
Then by increasing the threshold upon seeing row $n$ only the $X_{ij}$ which exceed $\| X_{:n}^j \|_1 = \| X \|_1$ will be kept and this is exactly the same set of rows as had the summary been applied given full access to the rows.

Hence, we may apply the result from Lemma \ref{InnerProductLem} on the columns of $A$ and $B$ as described above.  
It is then straightforward to show in a similar way to the lemma that the claim of the theorem holds.
\end{proof}

\section{Further Experimental results} \label{sec: experimental results appendices}
Here we illustrate the remaining experimental results on the YearPredictionMSD dataset
which include the space and time plots.
The experimental setup is the same as outline in Section \ref{sec: experiments}.

\begin{figure*}[t]
        \centering
        \subfloat[Summary Size]{
        \includegraphics[width=0.48\textwidth]{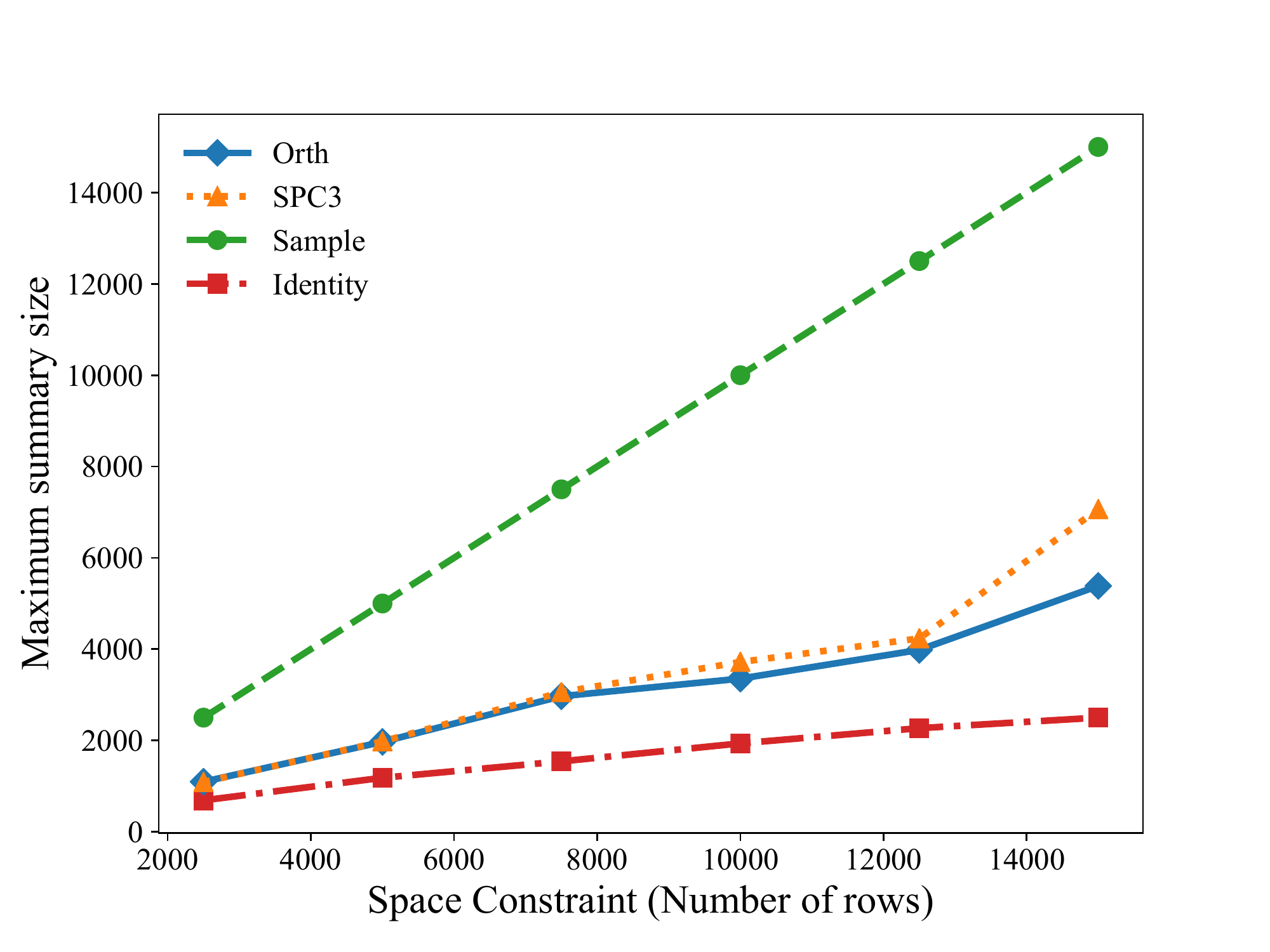}
        \label{fig: years_space_vs_block_size.pdf}}
        \hspace{\fill}
        \subfloat[Update Time]{
        \includegraphics[width=0.48\textwidth]{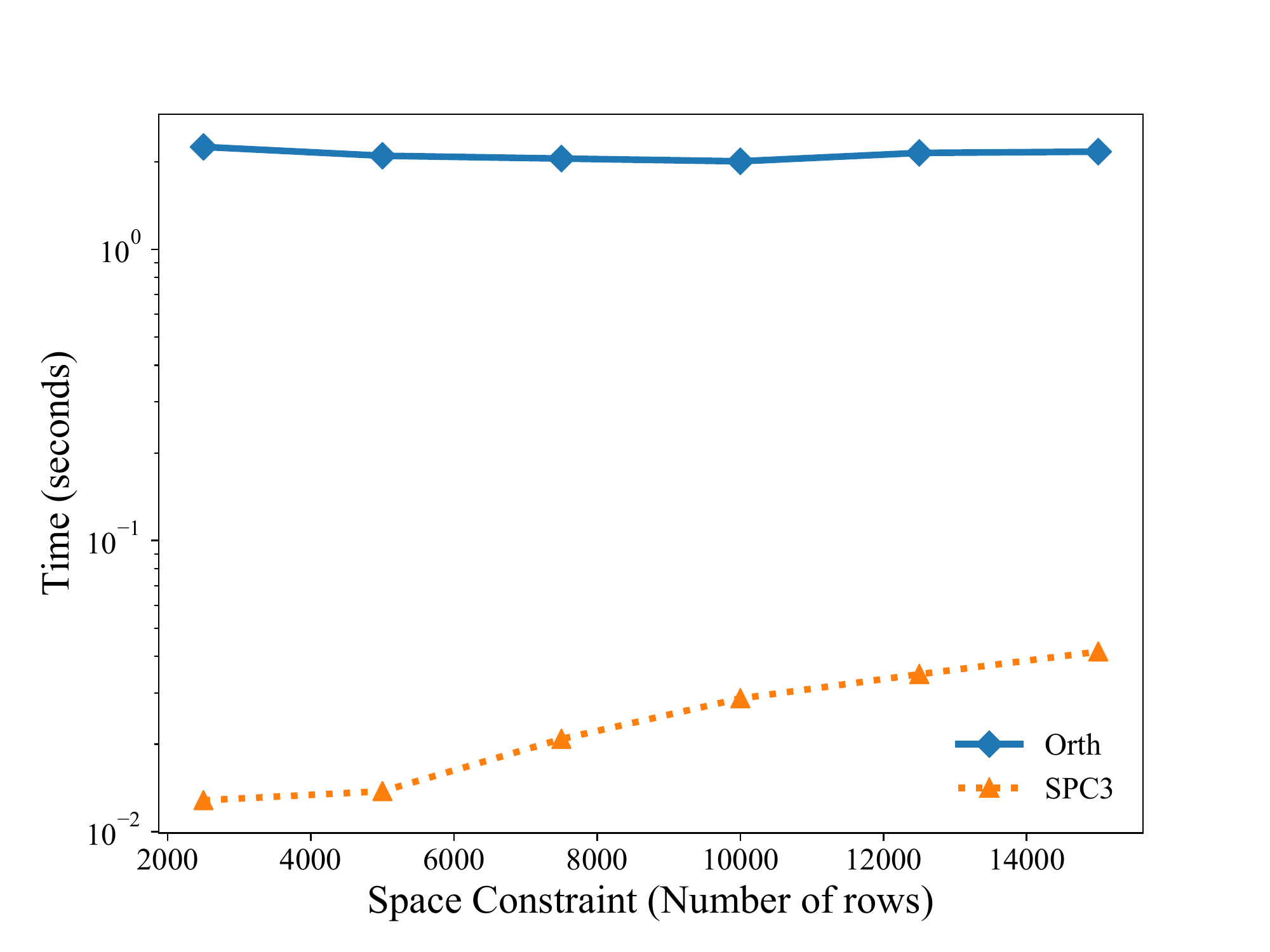}
        \label{fig: years_basis_time_vs_block_size.pdf}}
        \hspace{\fill}
        \subfloat[Query Time]{
        \includegraphics[width=0.48\textwidth]{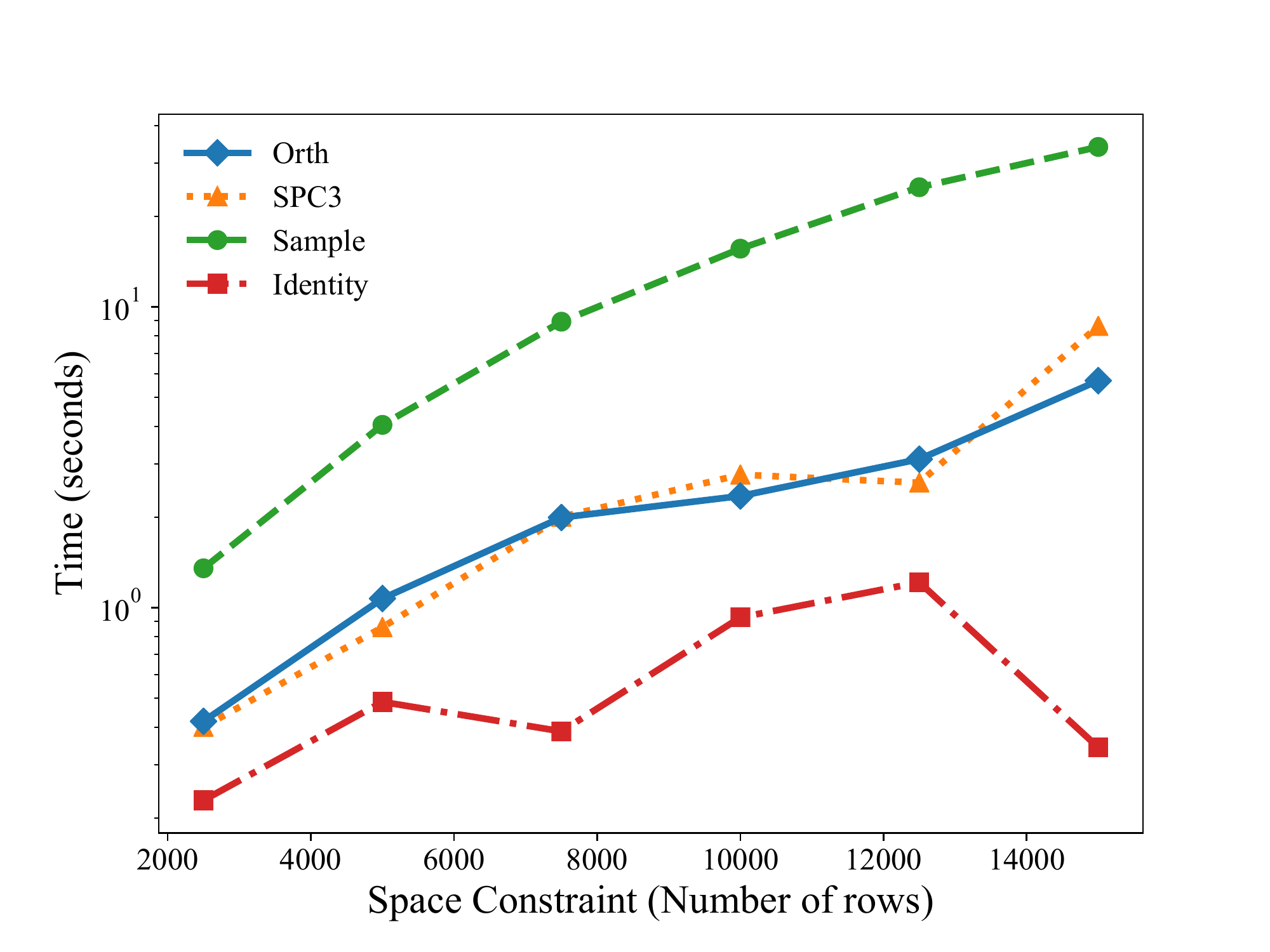}
        \label{fig: years_regression_time_vs_block_size.pdf}}
        \hspace{\fill}
        \subfloat[Total Time]{
        \includegraphics[width=0.48\textwidth]{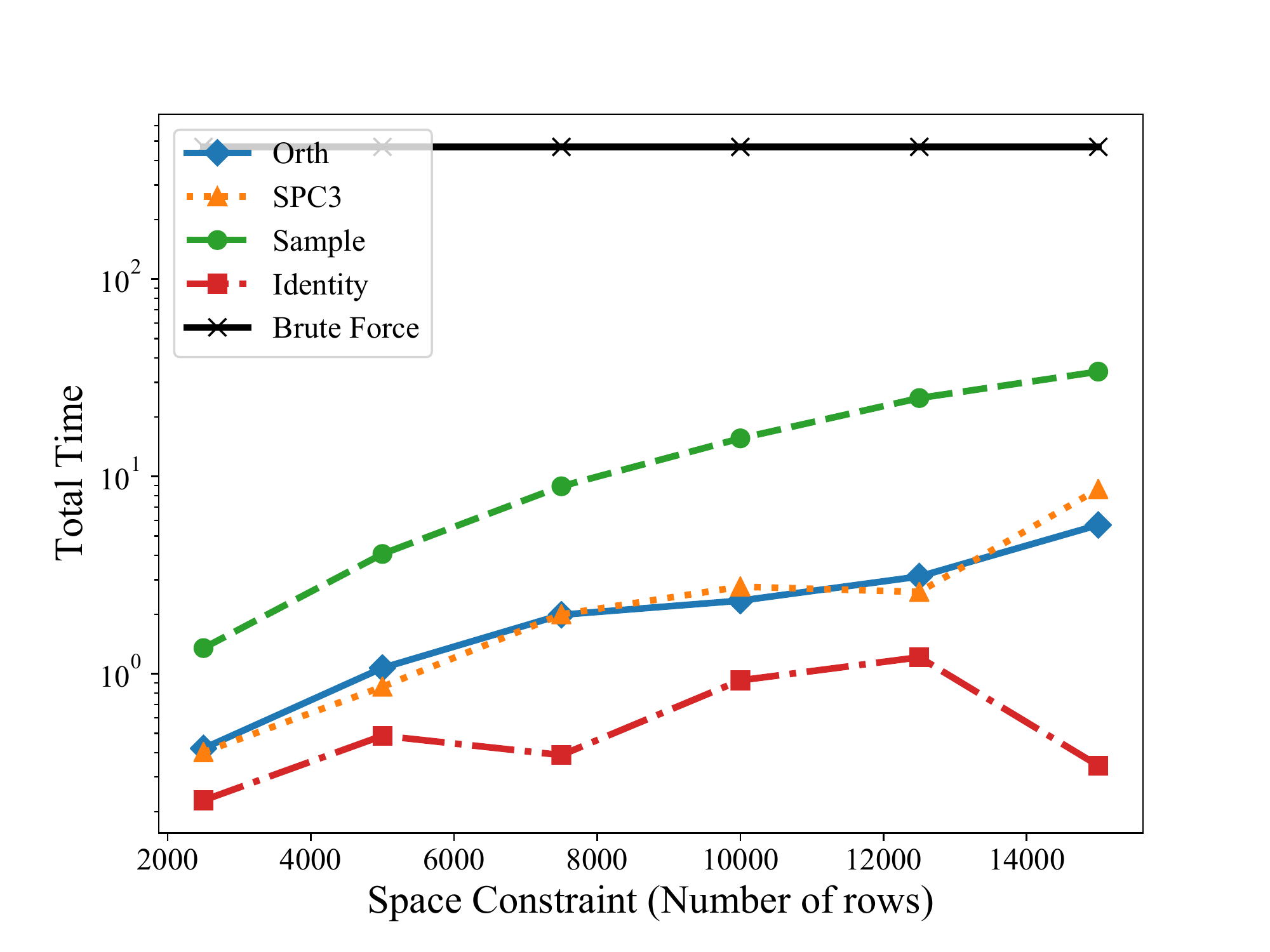}
        \label{fig: years_total_time_vs_block_size.pdf}}      
        
        \caption{Remaining plots for YearPredictionMSD data.}
\end{figure*}


\end{document}